\newcommand\defeq{\stackrel{\mathclap{\normalfont{\mbox{\text{\tiny{def}}}}}}{=}}
\newtheorem{definition}{Definition}[section]
\newtheorem{lemma}[definition]{Lemma}
\newtheorem{theorem}[definition]{Theorem}
\newtheorem{property}[definition]{Property}
\newtheorem{observation}[definition]{Observation}
\newtheorem{corollary}[definition]{Corollary}
\newcommand{\bigo}{\mathcal{O}}
\newcommand{\reach}[2]{\ensuremath{#1 {\leadsto} #2}}
\newcommand{\nreach}[2]{\ensuremath{#1 {\not\leadsto} #2}}
\newcommand{\tedp}[2]{\ensuremath{#1 {\leadsto}_{\text{2e}} #2}}
\newcommand{\ntedp}[2]{\ensuremath{#1 {\not\leadsto}_{\text{2e}} #2}}
\newcommand{\tvdp}[2]{\ensuremath{#1 {\leadsto}_{\text{2v}} #2}}
\newcommand{\ntvdp}[2]{\ensuremath{#1 {\not\leadsto}_{\text{2v}} #2}}
\newcommand{\closure}[1]{{#1}^{\tedp{}{}}}
\newcommand{\lrepr}{\mathrm{repr}_L}
\newcommand{\rrepr}{\mathrm{repr}_R}
\newcommand{\repr}{\mathrm{repr}}
\newcommand{\auxa}{H}
\newcommand{\auxb}{H'}
\newcommand{\bitwiseOR}{\vee}
\newcommand{\bitwiseAND}{\wedge}
\newcommand{\bitwiseBigOr}{\bigvee}
\title{\bf All-Pairs $2$-Reachability in $\bigo(n^{\omega}\log n)$ Time.
}
\author[1]{Loukas~Georgiadis}
\author[2]{Daniel~Graf}
\author[3]{Giuseppe~F.~Italiano\thanks{Partially supported by MIUR, the Italian Ministry of Education, University and Research, under Project AMANDA (Algorithmics for MAssive and Networked DAta).}}
\author[3]{Nikos~Parotsidis}
\author[2]{Przemys\l{}aw~Uzna\'nski}
\affil[1]{
University of Ioannina, Greece.\\
\texttt{loukas@cs.uoi.gr}
}
\affil[2]{Department of Computer Science,
ETH Z\"urich, Switzerland.\\
	\texttt{daniel.graf@inf.ethz.ch}, \texttt{przemyslaw.uznanski@inf.ethz.ch}
}
\affil[3]{University of Rome Tor Vergata, Italy.\\
	\texttt{giuseppe.italiano@uniroma2.it}, \texttt{nikos.parotsidis@uniroma2.it}
}
\date{}
\begin{document}

\maketitle

\begin{abstract}
In the $2$-reachability problem we are given a directed graph $G$ and we wish to determine if there are two (edge or vertex) disjoint paths from $u$ to $v$, for a given pair of vertices $u$ and $v$. In this paper, we present an algorithm that computes $2$-reachability information for all pairs of vertices in $\bigo(n^{\omega}\log n)$ time, where $n$ is the number of vertices and $\omega$ is the matrix multiplication exponent.
Hence, we show that the running time of all-pairs $2$-reachability is only within a $\log$ factor of transitive closure.

Moreover, our algorithm produces a witness (i.e., a separating edge or a separating vertex) for all pair of vertices where $2$-reachability does not hold.
By processing these witnesses, we can compute all the edge- and vertex-dominator trees of $G$ in $\bigo(n^2)$ additional time,
which in turn enables us to answer various connectivity queries in $\bigo(1)$ time. For instance, we can test in constant time if there is a path from $u$ to $v$ avoiding an edge $e$, for any pair of query vertices $u$ and $v$, and any query edge $e$, or if there is a path from $u$ to $v$ avoiding a vertex $w$, for any query vertices $u$, $v$, and $w$.
\end{abstract}

\section{Introduction}
\label{sec:intro}

The \emph{all-pairs reachability problem} consists of preprocessing a directed graph (digraph) $G=(V,E)$ so that we can
answer queries that ask if a vertex $y$ is reachable from a vertex $x$. This problem has many applications, including databases, geographical information systems, social networks, and bioinformatics~\cite{Jin:2008}.
A classic solution to this problem is to compute the transitive closure matrix of $G$, either by performing a graph traversal (e.g., depth-first or breadth-first search) once per each vertex as source, or via matrix multiplication.
For a digraph with $n$ vertices and $m$ edges, the former solution runs in $\bigo(mn)$ time, while the latter runs in $\bigo(n^{\omega})$, where $\omega$
is the matrix multiplication exponent~\cite{matrix_mult:cw,LeGall:2014,Williams:2012}.
Here we study a natural generalization of the all-pairs reachability problem, that we refer to as \emph{all-pairs $2$-reachability}, where we wish to preprocess $G$ so that we can answer fast the following type of queries:
For a given vertex pair $x, y \in V$, are there two edge-disjoint (resp., internally vertex-disjoint) paths from $x$ to $y$?
Equivalently, by Menger's theorem~\cite{menger}, we ask if there is an edge $e \in E$ (resp., a vertex $z \in V$) such that there is no path from $x$ to $y$ in $G \setminus e$ (resp., $G \setminus z$). We call such an edge (resp., vertex) \emph{separating} for the pair $x$, $y$.

One solution to the all-pairs $2$-reachability problem is to compute all the dominator trees of $G$, with each vertex as source. The dominator tree of $G$ with start vertex $s$ is a tree
rooted at $s$, such that a vertex $v$ is an ancestor of a vertex $w$
if and only if all paths from $s$ to $w$ include $v$~\cite{domin:lt}.
All the separating edges and vertices for a pair $s$, $v$, appear on the path from $s$ to $v$ in the dominator tree rooted at $s$, in the same order as they appear in any path from $s$ to $v$ in $G$.
Given all the dominator trees, we can process them to compute the $2$-reachability information for all pairs of vertices (see Section \ref{sec:applications}). Since a dominator tree can be computed in $\bigo(m)$ time~\cite{domin:ahlt,dominators:bgkrtw}, the overall running time of this algorithm is $\bigo(mn)$.

\paragraph{Our Results.}
In this paper, we show how to beat the $\bigo(nm)$ bound for dense graphs. Specifically, we present an algorithm that computes $2$-reachability information for all pairs of vertices in
$\bigo(n^{\omega})$ time in a strongly connected digraph, and in  $\bigo(n^{\omega} \log{n})$ time in a general digraph.
Hence, we show that the running time of all-pairs $2$-reachability is only within a $\log$ factor of transitive closure.
This result is tight up to a $\log$ factor, since it can be shown that all-pairs $2$-reachability is at least as hard as computing the transitive closure,
which is asymptotically equivalent to Boolean matrix multiplication
\cite{fischer1971boolean}.
Moreover, our algorithm produces a witness (separating edge or separating vertex) whenever $2$-reachability does not hold.
By processing these witnesses, we can find all the dominator trees of $G$ in $\bigo(n^2)$ additional time.
Thus, we also show how
to compute all the dominator trees of a digraph in $\bigo(n^{\omega} \log{n})$ time (in $\bigo(n^{\omega})$ time if the graph is strongly connected),
which improves the previously known $\bigo(mn)$ bound for dense graphs.
This in turn enables us to answer various connectivity queries in $\bigo(1)$ time. For instance, we can test in $\bigo(1)$ time if there is a path from $u$ to $v$ avoiding an edge $e$, for any pair of query vertices $u$ and $v$, and any query edge $e$, or if there is a path from $u$ to $v$ avoiding a vertex $w$, for any query vertices $u$, $v$, and $w$.
We can also report all the edges or vertices that appear in all paths from $u$ to $v$, for any query vertices $u$ and $v$.

\paragraph{Related Work.}
To the best of our knowledge, ours is the first work that considers the all-pairs $2$-reachability problem and gives a fast algorithm for it.
In recent work Georgiadis et al.~\cite{GIP17:SODA}
investigate the effect of an edge or a vertex failure in
a digraph $G$ with respect to strong connectivity. Specifically, they show how to preprocess
$G$ in $\bigo(m+n)$ time in order to answer various sensitivity queries regarding strong connectivity in $G$ under an arbitrary edge or vertex failure.
For instance, they can compute in $\bigo(n)$ time the strongly connected components (SCCs) that remain in $G$ after the deletion of an edge or a vertex, or report various statistics such as the number of SCCs in constant time per query (failed) edge or vertex.
This result, however, cannot be applied for the solution of the $2$-reachability  problem. The reason is that if the deletion of an edge $e$ leaves two vertices
$u$ and $v$ in different SCCs in $G \setminus e$, the algorithm of \cite{GIP17:SODA} is not able to distinguish if there is still a path or no path
from $u$ to $v$ in $G \setminus e$.

Previously, King and Sagert~\cite{KING2002150} gave an algorithm that can quickly answer sensitivity queries for reachability in
a directed acyclic graph (DAG)~\cite{KING2002150}.
Specifically, they show how to process a DAG $G$ so that, for any pair of query vertices $x$ and $y$, and a query edge $e$, one can test in constant time if there is a path from $x$ to $y$ in $G \setminus e$. Note that the result of  King and Sagert does not yield an efficient solution to the all-pairs $2$-reachability problem, since we need $\bigo(m)$ queries just to find if there is a separating edge for a single pair of vertices. Moreover, their preprocessing time is $\bigo(n^3)$.

Another interesting fact that arises from our work is that, somewhat surprisingly, computing all dominator trees in dense graphs is currently faster than computing a spanning arborescence
from each vertex. The best algorithm for this problem is given by Alon et al.~\cite{alon1992witnesses},
 who studied the problem of constructing a BFS tree from every vertex, and gave an algorithm that runs in $\bigo(n^{(3+\omega)/2})$ time.

Our algorithm uses fast matrix multiplication. Several other important graph-theoretic and network optimization problems can be solved by reductions to fast matrix multiplication.
These include finding maximum weight matchings~\cite{Sankowski:2009},
computing shortest paths~\cite{Zwick:2002},
and finding least common ancestors in DAGs~\cite{Czumaj:2007} and junctions in DAGs~\cite{yuster2008all}.
Our algorithms can be used for constant-time queries on whether there exists a path from vertex $u$ to vertex $v$ avoiding an edge $e$ (called \emph{avoiding path}).
This notion is closely related to a \emph{replacement path}~\cite{Grandoni2012Improved,VassilevskaWilliams:2011,Weimann:2013} (for which we additionally require to be shortest in $G \setminus e$).

\paragraph{Our Techniques.}
Our result is based on two novel approaches, one for DAGs
and one for strongly connected digraphs.
For DAGs we develop an algebra that operates on paths.
We then use some
version of $1$-superimposed coding
to apply our path algebra in a divide and conquer approach.
This allows us
to use Boolean
matrix multiplication, in a similar vein to the computation of transitive closure.
Unfortunately, our algebraic approach does not work for strongly connected digraphs. In this case, we exploit
dominator trees in order to transform a strongly connected digraph $G$ into two auxiliary graphs, so as to
reduce $2$-reachability queries in $G$ to $1$-reachability queries in those auxiliary graphs. This reduction works only for strongly connected digraphs and does not carry over to general digraphs.
Our algorithm for general digraphs is obtained via a careful combination of those two approaches.

\paragraph{Organization.}
The remainder of the paper is organized as follows.
After introducing some basic definitions and notation in Section~\ref{sec:notation}, we present our algorithm in three steps.
In Section~\ref{sec:DAGs} we describe our approach for acyclic graphs, Section~\ref{sec:SCC} covers strongly connected graphs and Section~\ref{sec:generalgraphs} describes their combination for arbitrary digraphs.
We provide a matching lower bound and extend our approach to vertex-disjointness in Sections~\ref{sec:lowerbounds}~and~\ref{sec:vertexdisjoint}, respectively.
Finally, Section~\ref{sec:applications} lists several applications of our algorithm.

\section{Preliminaries}
\label{sec:notation}
We assume that the reader is familiar with standard graph terminology, as contained for instance in~\cite{CLRS}.
Let $G=(V,E)$ be a directed graph (digraph). Given an edge $e=(x,y)$ in $E$, we denote $x$ (resp., $y$) as the \emph{tail}  (resp., \emph{head}) \emph{of $e$}.
A \emph{directed path} in $G$ is a  sequence of vertices $v_1$, $v_2$, $\ldots$, $v_k$, such that edge $(v_i,v_{i+1})\in E$ for $i = 1, 2, \ldots , k-1$. The path is said to contain vertex $v_i$, for $i = 1, 2, \ldots , k$, and edge $(v_i,v_{i+1})$, for $i = 1, 2, \ldots , k-1$.
The \emph{length} of a directed path is given by its number of edges. As a special case, there is a path of length $0$ from each vertex to itself.
We write $\reach{u}{v}$ to denote that there is a path from $u$ to $v$, and $\nreach{u}{v}$ if there is no path from $u$ to $v$.
A \emph{directed cycle} is a directed path, with length greater than $0$, starting and ending at the same vertex.
A \emph{directed acyclic graph} (in short \emph{DAG}) is a digraph with no cycles.
A DAG has a
\emph{topological ordering}, i.e.,  a linear ordering of its vertices such that for every
edge $(u,v)$,
 $u$ comes before $v$ in the ordering (denoted by $u<v$).
A digraph $G$ is \emph{strongly connected} if there is a directed path from each vertex to every other vertex.
The \emph{strongly connected components} of a digraph are its maximal strongly connected subgraphs.
Given a subset of vertices $V'\subset V$, we denote by $G\setminus V'$ the digraph obtained after deleting all the vertices in $V'$, together with their incident edges. Given a subset of
edges $E'\subset E$, we denote by $G\setminus E'$ the digraph obtained after deleting all the edges in $E$'.

\paragraph{2-Reachability and 2-Reachability closure.}
We write $\tedp{u}{v}$ (resp., $\tvdp{u}{v}$) to denote that there are two \emph{edge-disjoint} (resp., internally \emph{vertex-disjoint}) paths from $u$ to $v$, and $\ntedp{u}{v}$ (resp., $\ntvdp{u}{v}$) otherwise. As a special case, we assume that $\tedp{v}{v}$ (resp., $\tvdp{v}{v}$)  for each vertex $v$ in $G$.
We define an abstract set $E^+ = E \cup \{\top,\bot\}$. The semantic of this set is as follows: $e \in E$ corresponds to an edge $e$ separating two vertices, $\top$ corresponds to $\tedp{}{}$ (no single edge  separates) and $\bot$ corresponds to $\nreach{}{}$ (no edge is necessary for separation, vertices are already separated).
Given a digraph $G$, we define the \emph{$2$-reachability closure} of $G$, denoted by $\closure{G}$, to be a matrix such that:
$$
\closure{G}[u,v] \defeq
\begin{cases}
\top \quad& \text{ if } \tedp{u}{v}\\
\bot \quad& \text{ if } \nreach{u}{v}\\
e \quad& \text{ where } e \text{ is any separating edge for } u \text{ and } v.\\
\end{cases}
$$
Since  $\tedp{v}{v}$ for each $v \in V$,
$\closure{G}[v,v]=\top$.
An example of a graph with a $2$-reachability closure matrix is given in Figure~\ref{fig:small_example}.
Note that a 2-reachability closure matrix is not necessarily unique, as there might be multiple separating edges for a given vertex pair.
We define the \emph{$2$-reachability left closure} $\closure{G}_L$ by replacing \emph{any separating edge} with \emph{first separating edge} and the \emph{$2$-reachability right closure} $\closure{G}_R$ by replacing it with \emph{last separating edge}.
\begin{figure}
\begin{minipage}{.49\textwidth}
\centering\includegraphics[scale=1.3]{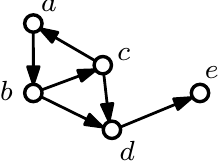}
\end{minipage}
\begin{minipage}{.49\textwidth}
$$\begin{bmatrix} \top & (a,b) & (a,b) & (a,b) & (a,b) \\ (b,c) & \top & (b,c) & \top & (d,e) \\ (c,a) & (c,a) & \top & \top & (d,e) \\ \bot & \bot & \bot & \top & (d,e) \\ \bot & \bot & \bot & \bot & \top \end{bmatrix}$$
\end{minipage}
\caption{A graph and its (non-unique) $2$-reachability closure matrix.}
\label{fig:small_example}
\end{figure}

Note that if there is only one edge separating $u$ and $v$, then $\closure{G}[u,v] = \closure{G}_L[u,v] = \closure{G}_R[u,v]$.
Given any 2-reachability closure matrix, one can compute efficiently the 2-reachability left and right closure matrices. We sketch below the basic idea for the left closure (the right closure being completely symmetric). Let $u$ and $v$ be any two
vertices. If $\closure{G}[u,v]$ is either $\top$ or $\bot$, then $\closure{G}_L[u,v]=\closure{G}[u,v]$. Otherwise, let $\closure{G}[u,v]=(x,y)$:
if $\tedp{u}{x}$ (i.e., if $\closure{G}[u,x]=\top$)
then $(x,y)$ is the first separating edge for $u$ and $v$ and $\closure{G}_L[u,v] = (x,y)$; otherwise, $\ntedp{u}{x}$ (i.e., $\closure{G}[u,x]\neq\top$) and $\closure{G}_L[u,v] = \closure{G}_L[u,x]$.
Algorithm~\ref{alg:leftclosurerecovery}  gives the pseudo-code for computing the $2$-reachability left and right closures $\closure{G}_L$ and $\closure{G}_R$ in a total of
$\bigo(n^2)$ worst-case time.

\begin{algorithm}[t]
\label{alg:leftclosurerecovery}
\caption{Closure recovery}
\KwIn{$N \times N$ matrix $IN$ of $2$-reachability closure, type $\in$ \{\KwLeft, \KwRight\!\!\}}
\KwOut{Matrix $OUT$ of $2$-reachability left closure.}
\SetKwFunction{KwAux}{aux}
\Def{$\KwRecovery(IN, type)$}
{
	$OUT \gets$ $N \times N$  matrix of \KwUnset\;
	\For{$i \gets 1$ \KwTo $N$}
    {
   		\For{$j \gets 1$ \KwTo $N$}
		{
        	$\KwAux(i,j,IN,OUT,type)$\;
        }
    }
	\Return{$OUT$}\;
}
\Def{$\KwAux(u,v,IN,OUT,type)$}
    {
    	\uIf{$OUT[u][v]$ is \KwUnset}
        {
        	\uIf{$IN[u][v] == \top$ \emph{\textbf{or}} $IN[u][v] == \bot$}
          {
          	$OUT[u][v] \gets IN[u][v]$\;
          }
          \uElse
          {
          	$(x,y) \gets IN[u][v]$\;
            \uIf{$type == \KwLeft$}
            {
              \uIf{$IN[u][x] == \top$}
              {
              	$OUT[u][v] \gets (x,y)$\;
              }
              \uElse
              {
            		$OUT[u][v] \gets \KwAux(u,x,IN,OUT,type)$\;
              }
            }
            \uElse
            {
              \uIf{$IN[y][v] == \top$}
              {
                $OUT[u][v] \gets (x,y)$\;
              }
              \uElse
              {
                $OUT[u][v] \gets \KwAux(y,v,IN,OUT,type)$\;
              }
            }
          }
        }
        \Return $OUT[u][v]$\;
    }
\end{algorithm}

\section{All-pairs $2$-reachability in DAGs}
\label{sec:DAGs}
In this section, we present our ${\bigo}(n^{\omega}\log n)$ time algorithm for all-pairs $2$-reachability in DAGs.
The high-level idea is to mimic the way Boolean  matrix multiplication can be used to compute the transitive closure of a graph:
recursively along a topological order, combine the transitive closure of the first and the second half of the vertices in a single matrix multiplication. However, while in transitive closure
for each pair $(i,j)$ we have to store only information on whether there is a path from $i$ to $j$, for all-pairs $2$-reachability this is not enough. First, we describe a path algebra, used by our algorithm to operate on paths between pairs of vertices in a concise manner. We then continue with the description of a matrix product-like operation, which is the backbone of our recursive algorithm. Finally, we show how to implement those operations efficiently using some binary encoding and decoding at every step of the recursion.

Before introducing our new algorithm,
we need some terminology.
Let $G=(V,E)$ be a DAG, and let $E_1, E_2$ be a partition of its edge set $E$, $E=E_1\cup E_2$.
We say that a partition is an \emph{edge split} if there is no triplet of vertices $x,y,z$ in $G$ such that $(x,y) \in E_2$ and $(y,z) \in E_1$ simultaneously.
Informally speaking, under such a split, any path in $G$ from a vertex $u$ to a vertex $v$
consists of a sequence of edges from $E_1$ followed by a sequence of edges from $E_2$ (as a special case, any of those sequences can be empty).
We denote the edge split by $G=(V,E_1,E_2)$ (See Figure~\ref{fig:edge_partition}).
We say that vertex $x$ in $G=(V,E_1,E_2)$  is on the \emph{left} (resp., \emph{right}) \emph{side} of the partition if $x$ is adjacent only to edges in $E_1$ (resp., $E_2$).
We assume without loss of generality that the vertices of $G$ are given in a topological ordering $v_1,v_2,\ldots,v_n$.

\subsection{Algebraic approach}
\label{sec:algebraic}
Consider a family of paths $\mathcal{P} = \{P_1,P_2,\ldots,P_{\ell}\}$, all sharing the same starting and ending vertices $u$ and $v$. We would like to distinguish between the following three possibilities:
(i)~$\mathcal{P}$ is empty;
(ii)~at least one edge $e$ belongs to every path $P_i \in \mathcal{P}$; or
(iii)~there is no edge that belongs to all paths in (nonempty) $\mathcal{P}$.
To do that, we  define the \emph{representation} $\repr(\mathcal{P})$:
$$\repr(\mathcal{P}) \defeq  \bigcap_{i = 1}^{\ell} P_i =
\begin{cases}
\mathbb{U} \quad&\text{ if } \mathcal{P} = \emptyset\\
\emptyset \quad&\text{ if no edge belongs to all $P_i$}\\
\{e\in E : e \in P_i,\ 1\leq i\leq\ell\} \quad&\text{ otherwise. }
\end{cases}
$$
where $\mathbb{U}$ denotes the top symbol in the Boolean algebra of sets (i.e., the complement of $\emptyset$).

We also define
a \emph{left representation} $\lrepr(\mathcal{P}) \in E^+$, where $E^+ = E \cup \{\top,\bot\}$,
 as follows:
$$\lrepr(\mathcal{P}) \defeq
\begin{cases}
\bot \quad&\text{ if } \mathcal{P} = \emptyset\\
\top \quad&\text{ if no edge belongs to all $P_i$}\\
e \quad&\text{ such that } e \in P_i, 1\leq i\leq\ell, \text{ and tail(}e\text{) is \emph{minimum} in the topological order}
\end{cases}
$$
A \emph{right representation} $\rrepr(\mathcal{P}) \in E^+$ is defined symmetrically to $\lrepr(\mathcal{P})$, by replacing \emph{minimum} with \emph{maximum}.
If $\lrepr(\mathcal{P})\in E$ (resp.,
$\rrepr(\mathcal{P}) \in E$), we say that $\lrepr(\mathcal{P})$ (resp., $\rrepr(\mathcal{P})$) is the \emph{first} (resp., \emph{last}) \emph{common edge} in $\mathcal{P}$.
Note that if $\mathcal{P}$ is the set of \emph{all the paths} from $u$ to $v$, then $
\repr(\mathcal{P})$ contains all the information about $\closure{G}[u,v]$. Additionally,
$\closure{G}_L[u,v] = \lrepr(\mathcal{P})$ and $\closure{G}_R[u,v] = \rrepr(\mathcal{P})$. With a slight abuse of  notation we also say that $\closure{G}[u,v] \in  \repr(\mathcal{P})$.

\begin{observation}
\label{ob:repr}
Let $G=(V,E_1,E_2)$ be an edge split of a DAG, and
let $u$ and $v$ be two arbitrary vertices in $G$. For $1 \le i \le n$, let  $\mathcal{P}_i = \{ P\subseteq E_1 : P \text{ is a path from }u\text{ to }v_i \}$, and $\mathcal{Q}_i = \{ Q\subseteq E_2 : Q \text{ is a path from }v_i\text{ to }v \}$  (See Figure~\ref{fig:edge_partition}) and let $\mathcal{S}$ be the family of all paths from $u$ to $v$. Then:
$\repr(\mathcal{S}) = \bigcap_{i=1}^{n} \Big(\repr(\mathcal{P}_i) \cup \repr(\mathcal{Q}_i)\Big)$
\end{observation}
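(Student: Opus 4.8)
The plan is to prove the identity by a double inclusion of sets, after first reducing everything to a clean combinatorial statement about the edge split. The key structural fact I would use is that, because $E = E_1 \cup E_2$ is an edge split, every path $S$ from $u$ to $v$ factors \emph{uniquely} as $S = P \cdot Q$, where $P$ is a (possibly empty) prefix consisting of $E_1$-edges and $Q$ is a (possibly empty) suffix consisting of $E_2$-edges, meeting at some vertex $v_i$. Indeed, once an $E_2$-edge appears on $S$, no subsequent edge can be in $E_1$, since that would create the forbidden triple $(x,y)\in E_2$, $(y,z)\in E_1$; so the $E_1$-edges of $S$ form a prefix and the $E_2$-edges form a suffix, and the split point $v_i$ is the unique vertex between them. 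Conversely, for each $i$, concatenating any $P\in\mathcal P_i$ with any $Q\in\mathcal Q_i$ gives a path in $\mathcal S$. Thus $\mathcal S = \bigcup_{i=1}^{n}\{\,P\cdot Q : P\in\mathcal P_i,\ Q\in\mathcal Q_i\,\}$, and moreover $P\cdot Q = P\cup Q$ as edge sets (the concatenation shares no edge, since $P\subseteq E_1$ and $Q\subseteq E_2$ are disjoint).

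Next I would unwind the definition $\repr(\mathcal X)=\bigcap_{X\in\mathcal X}X$, treating the empty intersection as $\mathbb U$, and reduce the claim to the purely set-theoretic identity
$$
\bigcap_{i=1}^{n}\ \bigcap_{\substack{P\in\mathcal P_i\\ Q\in\mathcal Q_i}} (P\cup Q)
\;=\;
\bigcap_{i=1}^{n}\Bigl(\bigl(\textstyle\bigcap_{P\in\mathcal P_i}P\bigr)\cup\bigl(\textstyle\bigcap_{Q\in\mathcal Q_i}Q\bigr)\Bigr).
$$
The left side is $\repr(\mathcal S)$ by the decomposition above (a double intersection over $i$ and over pairs collapses to the intersection over all of $\mathcal S$), and the right side is exactly $\bigcap_{i=1}^n(\repr(\mathcal P_i)\cup\repr(\mathcal Q_i))$. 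So it suffices to verify, for each fixed $i$, the inner identity
$$
\bigcap_{P\in\mathcal P_i}\ \bigcap_{Q\in\mathcal Q_i}(P\cup Q)\;=\;\Bigl(\bigcap_{P\in\mathcal P_i}P\Bigr)\cup\Bigl(\bigcap_{Q\in\mathcal Q_i}Q\Bigr),
$$
which is just the distributive law $\bigcap_a\bigcap_b(A_a\cup B_b)=(\bigcap_a A_a)\cup(\bigcap_b B_b)$ for nonempty index sets, with the convention that an empty intersection is $\mathbb U$. For the ``$\subseteq$'' direction: an edge $e$ in the left side lies in $P\cup Q$ for all $P,Q$; if $e$ fails to be in some $P_0\in\mathcal P_i$, then $e\in Q$ for every $Q\in\mathcal Q_i$ (pair $P_0$ with each $Q$), so $e\in\bigcap_Q Q$; symmetrically the other way. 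The ``$\supseteq$'' direction is immediate. The boundary cases where $\mathcal P_i=\emptyset$ or $\mathcal Q_i=\emptyset$ (so that the corresponding $\repr$ is $\mathbb U$) have to be checked separately but work out, since then $\{P\cdot Q\}$ over that $i$ is empty and contributes $\mathbb U$ to both sides; likewise the edge case where $\mathcal S=\emptyset$ makes both sides $\mathbb U$.

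The only genuinely delicate point is the handling of the top symbol $\mathbb U$ together with the empty-intersection convention: one must be careful that whenever $\mathcal P_i$ or $\mathcal Q_i$ is empty, the term $\repr(\mathcal P_i)\cup\repr(\mathcal Q_i)$ equals $\mathbb U$ and therefore drops out of the outer intersection on the right, matching the fact that such an $i$ contributes no path to $\mathcal S$ and hence nothing to the left side. Once the edge-split factorization and this bookkeeping are in place, the rest is the elementary distributive computation above, so I do not expect any real obstacle beyond stating the conventions cleanly.
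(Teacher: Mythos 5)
Your argument is correct, and it is exactly the justification the paper has in mind: the paper states this as an Observation with no proof, relying on the remark that under an edge split every $u$--$v$ path is a (possibly empty) block of $E_1$-edges followed by a (possibly empty) block of $E_2$-edges, which is precisely your unique-factorization step; the rest is the distributivity of intersection over union with the $\mathbb U$ convention for empty families, as you carry out. The only points worth stating explicitly in a write-up are the ones you already flag — that a concatenated walk in a DAG is automatically a simple path, and that indices $i$ with $\mathcal P_i=\emptyset$ or $\mathcal Q_i=\emptyset$ contribute $\mathbb U$ to both sides — so there is no gap.
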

A straightforward application of Observation~\eqref{ob:repr}
yields immediately
a polynomial time algorithm for computing $\closure{G}$. However, this algorithm is not very efficient, since the size of $\repr(\mathcal{P})$ can be as large as $(n-1)$.
In the following
we will show how to obtain a faster algorithm, by replacing
$\repr(\mathcal{P})$ with a suitable combination of $\lrepr(\mathcal{P})$ and $\rrepr(\mathcal{P})$.

\begin{figure}
\begin{minipage}{1\textwidth}
\centering\includegraphics[scale=0.7]{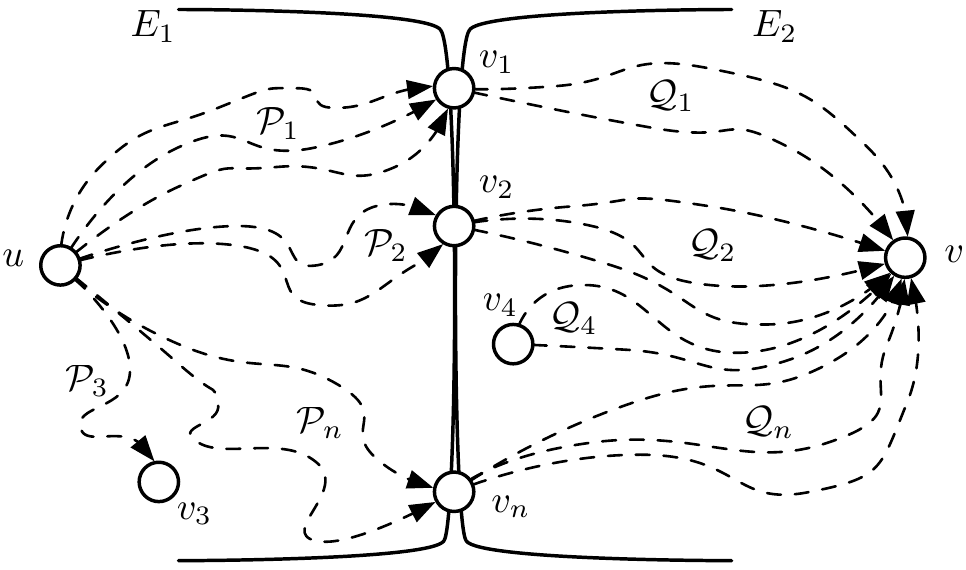}
\end{minipage}
\caption{An edge split of a DAG $G=(V,E_1,E_2)$.}
\label{fig:edge_partition}
\end{figure}
\begin{lemma}
\label{lem:first_edge}
Let $G, \mathcal{P}_i, \mathcal{Q}_i$ and $\mathcal{S}$ be as in Observation~\ref{ob:repr}.
 If $v_i$ is such that both $\reach{u}{v_i}$ in $E_1$ and $\reach{v_i}{v}$ in $E_2$ (both $\mathcal{P}_i \neq \emptyset$ and $\mathcal{Q}_i \neq \emptyset$), then
\begin{enumerate}[(a)]
\item if $\lrepr(\mathcal{S}) = e\in E_1$ and $u\neq v_i$, then $\lrepr(\mathcal{P}_i) = e$;
\item if $\rrepr(\mathcal{S}) = e\in E_2$ and $v_i \neq v$, then $\rrepr(\mathcal{Q}_i) = e$.
\end{enumerate}
\end{lemma}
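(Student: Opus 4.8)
The plan is to prove part (a) in detail and to obtain part (b) by the mirror-image argument. Two standing facts about DAGs will be used repeatedly: a walk in a DAG is automatically a simple path (a repeated vertex would create a cycle), so a path ending at a vertex $w$ may be concatenated with a path starting at $w$; and, since paths are simple, at most one edge with a prescribed tail can be common to a nonempty family of paths.

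First I would establish that $e$ itself is a common edge of $\mathcal{P}_i$. Pick $P\in\mathcal{P}_i$ and $Q\in\mathcal{Q}_i$ (both exist since $\mathcal{P}_i,\mathcal{Q}_i\neq\emptyset$); their concatenation is a path from $u$ to $v$, hence lies in $\mathcal{S}$, hence contains $e=\lrepr(\mathcal{S})$. Since $e\in E_1$ while $Q\subseteq E_2$, the edge $e$ must lie on $P$; as $P$ was arbitrary, $e$ is common to all of $\mathcal{P}_i$. Consequently $\lrepr(\mathcal{P}_i)\in E$ (not $\bot$, since $\mathcal{P}_i\neq\emptyset$; not $\top$, since a common edge exists), and $\mathrm{tail}(\lrepr(\mathcal{P}_i))\le\mathrm{tail}(e)$.

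It then remains to rule out that $\mathcal{P}_i$ has a common edge whose tail strictly precedes $\mathrm{tail}(e)$ in the topological order. Write $e=(a,b)$ and fix $P\in\mathcal{P}_i$ (so $e\in P$); suppose $f=(c,d)$ is a common edge of $\mathcal{P}_i$ with $c<a$. I would show $f$ lies on every $R\in\mathcal{S}$, which contradicts $\lrepr(\mathcal{S})=e$ since $c<a$. Given $R\in\mathcal{S}$, the edge-split property writes $R$ as an $E_1$-prefix $R_1$ followed by an $E_2$-suffix, and since $e\in E_1$ is common to $\mathcal{S}$ it lies on $R_1$. I would then splice: the walk $R_1^{\,u\to a}\cdot e\cdot P^{\,b\to v_i}$ (the part of $R_1$ from $u$ to $a=\mathrm{tail}(e)$, then $e$, then the part of $P$ from $b=\mathrm{head}(e)$ to $v_i$) uses only $E_1$-edges and runs from $u$ to $v_i$, hence is a path in $\mathcal{P}_i$, so $f$ lies on it. But every edge of $R_1^{\,u\to a}$ has tail $<a$, the edge $e$ has tail $a$, and every edge of $P^{\,b\to v_i}$ has tail $\ge b>a$; since $\mathrm{tail}(f)=c<a$, the edge $f$ can only be one of those of $R_1^{\,u\to a}$, hence lies on $R$. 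This proves $f$ is common to $\mathcal{S}$, the desired contradiction. Therefore the common edge of $\mathcal{P}_i$ with smallest tail has tail exactly $a$, and since $e$ is such a common edge and at most one common edge has tail $a$, we get $\lrepr(\mathcal{P}_i)=e$.

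Part (b) follows by exchanging $E_1\leftrightarrow E_2$, $u\leftrightarrow v$, and ``first/minimum''$\leftrightarrow$``last/maximum'': one shows $e\in E_2$ is common to $\mathcal{Q}_i$, and for $R\in\mathcal{S}$, decomposed as an $E_1$-prefix $R_1$ and an $E_2$-suffix $R_2$, one splices a chosen $Q\in\mathcal{Q}_i$ with $R_2$ via $Q^{\,v_i\to\mathrm{tail}(e)}\cdot e\cdot R_2^{\,\mathrm{head}(e)\to v}\in\mathcal{Q}_i$ to force $\rrepr(\mathcal{Q}_i)$ onto $R$. The hypothesis $u\neq v_i$ (resp.\ $v_i\neq v$) is what guarantees that $\mathcal{P}_i$ (resp.\ $\mathcal{Q}_i$) does not consist solely of the trivial empty path, in which case $\lrepr$ (resp.\ $\rrepr$) would be $\top$. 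The main obstacle, and really the only nonroutine step, is the splicing argument: recognizing that the right object to compare with $R$'s $E_1$-half is the ``repaired'' path built from $R$'s prefix up to $e$ and $P$'s suffix after $e$; everything else is bookkeeping with the edge split and acyclicity.
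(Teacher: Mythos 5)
Your proof is correct and takes essentially the same route as the paper's: first establish that $e$ is a common edge of $\mathcal{P}_i$ by concatenating a path of $\mathcal{P}_i$ with one of $\mathcal{Q}_i$, then rule out a common edge with earlier tail by splicing the $E_1$-prefix of a path in $\mathcal{S}$ with the suffix of a path in $\mathcal{P}_i$ across $e$. Your write-up merely makes explicit the splicing and topological-order bookkeeping that the paper's final contradiction leaves implicit.
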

\begin{proof}
We only prove (a), since (b) is completely analogous.
Assume by contradiction that
$\lrepr(\mathcal{S}) = e\in E_1$, $u\neq v_{i}$ and $\reach{v_i}{v}$ in $E_2$, but
 $\lrepr(\mathcal{P}_{i}) = e' \not= e$.
Since $\lrepr(\mathcal{S}) = e\in E_1$, it must be
$e \in \repr(\mathcal{P}_{i})$, as otherwise we would have a path
$u\leadsto v_i\leadsto v$
avoiding $e$.
Since $e \in \repr(\mathcal{P}_{i})$ and $\lrepr(\mathcal{P}_{i}) = e' \not= e$, all paths in $\mathcal{P}_{i}$ must go first from $u$ to edge $e'$, then to edge $e$ and finally to $v_{i}$. However,
since $\lrepr(\mathcal{S}) = e\in E_1$, then
$e$ is reachable from $u$ by a path avoiding $e'$. By  definition of the edge split, this path must be fully contained in $E_1$, which contradicts
the fact that edge $e'$ precedes $e$ in all paths in $\mathcal{P}_{i}$.
\end{proof}

It is important to note that Lemma~\ref{lem:first_edge} holds regardless of whether $u$ and $v$ are on the same side of the partition or not.

Next, we define two operations, denoted as \emph{serial} and \emph{parallel}.
Although those operations are formally defined on $E^+ = E \cup \{\top,\bot\}$, they have a more intuitive interpretation as operations on path families.
We start with the serial operation $\otimes$. For $a,b \in E^+$, we define:
$$a \otimes b \defeq
\begin{cases}
(\bot,\bot) \quad& \text{ if } a=\bot \text{ or } b=\bot\\
(a,b) \quad& \text{ otherwise.}\\
\end{cases}
$$
We define $\oplus$ as the parallel operator. Namely,
for arbitrary $a \in E^+$:
$a\oplus \bot \defeq a$, $\bot \oplus a \defeq a$,
$a\oplus \top \defeq \top$, $\top \oplus a \defeq \top$,
and otherwise, for $e, e' \in E$:
$$
e \oplus e' \defeq
\begin{cases}
\top \quad& \text{ if } e\not=e'\\
e \quad& \text{ if } e = e'\\
\end{cases}
$$
We extend the definition of $\oplus$ to operate on elements of $E^+ \times E^+$, as follows:
$(a_1,b_1) \oplus (a_2,b_2) \defeq (a_1 \oplus a_2, b_1 \oplus b_2)$.
Ideally, we want the operator $\oplus$ either to preserve consistently the first common edge or to preserve consistently the last common
 edge, under the union of path families. If for instance we preserve the first common edge, that means that if $\mathcal{P}$ and $\mathcal{P}'$ are two path families sharing the same endpoints
 then we want $\lrepr(\mathcal{P} \cup \mathcal{P}') = \lrepr(\mathcal{P}) \oplus \lrepr(\mathcal{P}')$ to hold. However, this is not necessarily the case, as for example both $\mathcal{P}$ and $\mathcal{P}'$ could consist of a single path, with both paths sharing an intermediate edge $e'$, but both with
two different initial edges, respectively $e_1$ and $e_2$. Thus $\lrepr(\mathcal{P}) \oplus \lrepr(\mathcal{P}') = e_1 \oplus e_2 = \top$ while $\lrepr(\mathcal{P} \cup \mathcal{P}') = e'$. As shown in the following lemma, this is not an issue if the path families considered are exhaustive in taking every possible path between a pair of vertices.

\begin{lemma}
\label{lem:path_compression}
Let $G, \mathcal{P}_i, \mathcal{Q}_i$ and $\mathcal{S}$ be as in Observation~\ref{ob:repr}.
Then:
\begin{enumerate}[(a)]
\item $\bigoplus_{i=1}^{n} (\lrepr(\mathcal{P}_i) \otimes
\rrepr(\mathcal{Q}_i)) = (\bot,\bot)$ iff $\repr(\mathcal{S}) = \mathbb{U}$;
\item \label{case_b} if $\bigoplus_{i=1}^{n} (\lrepr(\mathcal{P}_i) \otimes
\rrepr(\mathcal{Q}_i)) = (e_1,\top)$ then $\repr(\mathcal{S}) \ni e_1$;
\item if $\bigoplus_{i=1}^{n} (\lrepr(\mathcal{P}_i) \otimes
\rrepr(\mathcal{Q}_i)) = (\top,e_2)$ then $\repr(\mathcal{S}) \ni e_2$;
\item if $\bigoplus_{i=1}^{n} (\lrepr(\mathcal{P}_i) \otimes
\rrepr(\mathcal{Q}_i)) = (e_1,e_2)$ then $\repr(\mathcal{S}) \ni e_1,e_2$;
\item $\bigoplus_{i=1}^{n} (\lrepr(\mathcal{P}_i) \otimes
\rrepr(\mathcal{Q}_i)) = (\top,\top)$ iff $\repr(\mathcal{S}) = \emptyset$.
\end{enumerate}
\end{lemma}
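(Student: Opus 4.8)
The plan is to rewrite the big $\oplus$-sum in a normalized form and then dispatch the five cases, four of which follow from a single ``workhorse'' claim. Put $I = \{\, i : \mathcal{P}_i\neq\emptyset \text{ and } \mathcal{Q}_i\neq\emptyset \,\}$, the set of indices with $\reach{u}{v_i}$ in $E_1$ and $\reach{v_i}{v}$ in $E_2$. For $i\notin I$ one of $\lrepr(\mathcal{P}_i),\rrepr(\mathcal{Q}_i)$ is $\bot$, so $\lrepr(\mathcal{P}_i)\otimes\rrepr(\mathcal{Q}_i)=(\bot,\bot)$, the identity of $\oplus$; hence $\bigoplus_{i=1}^{n}(\lrepr(\mathcal{P}_i)\otimes\rrepr(\mathcal{Q}_i))=(L,R)$ with $L=\bigoplus_{i\in I}\lrepr(\mathcal{P}_i)$ and $R=\bigoplus_{i\in I}\rrepr(\mathcal{Q}_i)$. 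For $i\in I$ neither $\lrepr(\mathcal{P}_i)$ nor $\rrepr(\mathcal{Q}_i)$ is $\bot$, and $\oplus$ never produces $\bot$ from such arguments, so $L,R\in E\cup\{\top\}$ as soon as $I\neq\emptyset$. Any path from $u$ to $v$ cuts uniquely at the head of its last $E_1$-edge (at $u$ if it uses none) into an $E_1$-path to some $v_j$ followed by an $E_2$-path to $v$, so $I=\emptyset \iff \mathcal{S}=\emptyset \iff \repr(\mathcal{S})=\mathbb{U}$. Together with the previous sentence this is exactly part~(a): the sum is $(\bot,\bot)$ iff $I=\emptyset$ iff $\repr(\mathcal{S})=\mathbb{U}$, and otherwise $(L,R)$ has one of the forms $(e_1,\top)$, $(\top,e_2)$, $(e_1,e_2)$, $(\top,\top)$ with $e_1,e_2\in E$.

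Next I establish the workhorse claim: \emph{if $I\neq\emptyset$ and $L=e_1\in E$, then $e_1\in\repr(\mathcal{S})$} (and symmetrically with $R$ and $\rrepr(\mathcal{Q}_i)$). In the join-semilattice $(E\cup\{\top\},\oplus)$ ($\top$ absorbing, distinct edges joining to $\top$), a nonempty $\oplus$-sum can equal an edge $e_1$ only if every summand equals $e_1$; hence $\lrepr(\mathcal{P}_i)=e_1$ for all $i\in I$. Given any $S\in\mathcal{S}$, cut it at its transition vertex $v_j$ as above into $P\cdot Q$ with $P\in\mathcal{P}_j$, $Q\in\mathcal{Q}_j$; then $j\in I$, so $\lrepr(\mathcal{P}_j)=e_1\in E$, which means $e_1$ lies on every path of $\mathcal{P}_j$, in particular $e_1\in P\subseteq S$. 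As $S$ was arbitrary, $e_1\in\repr(\mathcal{S})$. Reading off $L$ and $R$, this claim gives parts~(b),~(c),~(d) directly. It also gives the ``$\Leftarrow$'' of (e): if $\repr(\mathcal{S})=\emptyset$ then $\mathcal{S}\neq\emptyset$, so $I\neq\emptyset$ and $L,R\in E\cup\{\top\}$; were $L$ (or $R$) an edge, the claim would place it in the empty set $\repr(\mathcal{S})$, impossible, so $L=R=\top$.

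The only substantive point left is the ``$\Rightarrow$'' of (e): from $(L,R)=(\top,\top)$ deduce $\repr(\mathcal{S})=\emptyset$. Since the sum is not $(\bot,\bot)$ we have $\mathcal{S}\neq\emptyset$, hence $\repr(\mathcal{S})\subseteq E$; suppose for contradiction it contains some edge. The key structural fact about an edge split is that all edges of $\repr(\mathcal{S})$ lie on a common path of $\mathcal{S}$ and, along that path, every $E_1$-edge precedes every $E_2$-edge in the topological order; consequently, if $\repr(\mathcal{S})$ meets $E_1$ then $\lrepr(\mathcal{S})\in E_1$, and if it meets $E_2$ then $\rrepr(\mathcal{S})\in E_2$. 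Case one: $\repr(\mathcal{S})$ meets $E_1$, so $e^\ast:=\lrepr(\mathcal{S})\in E_1$. Then $u\notin I$, since $u\in I$ would supply a $u$-to-$v$ path lying entirely in $E_2$ and force $\repr(\mathcal{S})\subseteq E_2$; thus $v_i\neq u$ for every $i\in I$, and Lemma~\ref{lem:first_edge}(a) yields $\lrepr(\mathcal{P}_i)=e^\ast$ for all $i\in I$, whence $L=e^\ast\in E$, contradicting $L=\top$. Case two: $\repr(\mathcal{S})\subseteq E_2$ and is nonempty, so $e^{\ast\ast}:=\rrepr(\mathcal{S})\in E_2$. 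Then $v\notin I$, since $v\in I$ would supply a $u$-to-$v$ path lying entirely in $E_1$ and force $\repr(\mathcal{S})\subseteq E_1$; thus $v_i\neq v$ for all $i\in I$, and Lemma~\ref{lem:first_edge}(b) yields $\rrepr(\mathcal{Q}_i)=e^{\ast\ast}$ for all $i\in I$, whence $R=e^{\ast\ast}\in E$, contradicting $R=\top$. Both cases are impossible, so $\repr(\mathcal{S})=\emptyset$. I expect this last direction to be the main obstacle: invoking Lemma~\ref{lem:first_edge} requires excluding the degenerate cut points $v_i\in\{u,v\}$, and the clean way to do that is precisely the combination of the topological placement of common edges inside an edge split with the all-$E_1$ / all-$E_2$ shortcut paths used above.
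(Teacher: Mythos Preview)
Your proof is correct and follows essentially the same approach as the paper: both arguments isolate the indices $i$ with $\mathcal{P}_i,\mathcal{Q}_i\neq\emptyset$, use that a $\oplus$-sum over $E\cup\{\top\}$ equals an edge only if every summand equals that edge (handling (a)--(d) and the easy direction of (e)), and settle the hard direction of (e) by invoking Lemma~\ref{lem:first_edge} after ruling out the degenerate split point $v_i=u$ (resp.\ $v_i=v$) via the existence of an all-$E_2$ (resp.\ all-$E_1$) path. Your packaging via the pair $(L,R)=\bigl(\bigoplus_{i\in I}\lrepr(\mathcal{P}_i),\,\bigoplus_{i\in I}\rrepr(\mathcal{Q}_i)\bigr)$ and the single ``workhorse'' claim is a bit cleaner than the paper's case-by-case treatment, but the underlying ideas are identical.
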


\begin{proof}
We proceed with a case analysis:
\begin{enumerate}[(a)]
\item
$\repr(\mathcal{S}) = \mathbb{U}$ iff $\mathcal{S} = \emptyset$ iff $\forall_i$ $(\mathcal{P}_i = \emptyset \text{ or } \mathcal{Q}_i = \emptyset)$ iff $\forall_i$ $\lrepr(\mathcal{P}_i) \otimes \rrepr(\mathcal{Q}_i) = (\bot,\bot)$ iff $\bigoplus_{i=1}^{n} (\lrepr(\mathcal{P}_i) \otimes
\rrepr(\mathcal{Q}_i)) = (\bot,\bot)$.
\item
Let $\bigoplus_{i=1}^{n} (\lrepr(\mathcal{P}_i) \otimes
\rrepr(\mathcal{Q}_i)) = (e_1,\top)$.
By definition of $\oplus$, we must have that
$\forall_i (\lrepr(\mathcal{P}_i) \in \{e_1,\bot\} \wedge \rrepr(\mathcal{Q}_i) = \bot)$ and there must be at least one $j$ such that $\lrepr(\mathcal{P}_j) = e_1$ and $\rrepr(\mathcal{Q}_j) \neq \bot$. Hence, any path in $\mathcal{S}$ must contain $e_1$.
\item The proof is similar to (b).
\item The proof is again similar to (b).
\item
If $\bigoplus_{i=1}^{n} (\lrepr(\mathcal{P}_i) \otimes
\rrepr(\mathcal{Q}_i)) \not= (\top,\top)$, then by cases (b), (c) or (d) it follows that $\repr(\mathcal{S}) \not= \emptyset$, clearly a contradiction.

To prove the other direction, assume that $\bigoplus_{i=1}^{n} (\lrepr(\mathcal{P}_i) \otimes
\rrepr(\mathcal{Q}_i)) = (\top,\top)$ and $\repr(\mathcal{S}) \neq \emptyset$.
From case (a) we know that $\repr(\mathcal{S}) \neq \mathbb{B}$, and thus there exists an edge $e\in E$ such that
$e \in \repr(\mathcal{S})$. Without loss of generality, assume that $e \in E_1$.
Then, it must be $\lrepr(\mathcal{S}) = e'$ for some edge $e' \in E_1$. Without loss of generality, assume that $v_1,v_2,\ldots,v_{j}$, are all the vertices such that simultaneously $\reach{u}{v_i}$ in $E_1$, $\reach{v_i}{v}$ in $E_2$ and $u \neq v_i$ (there is at least one such vertex, since $\mathcal{S} \neq \emptyset$ and $\repr(\mathcal{S}) \cap E_1 \neq \emptyset$).
By Lemma~\ref{lem:first_edge}(a), $e'= \lrepr(\mathcal{P}_i)$, $1 \leq i \leq j$. Thus:

$$\bigoplus_{i=1}^{n} (\lrepr(\mathcal{P}_i) \otimes \rrepr(\mathcal{Q}_i)) = \bigoplus_{i=1}^{j} (e',\rrepr(\mathcal{Q}_i)) \oplus\bigoplus_{i=j+1}^{n} (\lrepr(\mathcal{P}_i) \otimes \rrepr(\mathcal{Q}_i))=$$ $$=\bigoplus_{i=1}^{j} (e',\rrepr(\mathcal{Q}_i)) \oplus \bigoplus_{i=j+1}^{n} (\bot,\bot)= (e', \bigoplus_{i=1}^j \rrepr(\mathcal{Q}_i)) \not= (\top,\top),$$
where we have used that (i) if $v_i = u$, then $\mathcal{Q}_i = \emptyset$, as otherwise $\reach{u}{v}$ in $E_2$, with a path avoiding $e \in E_1$, and (ii) by the choice of $j$, for $i > j$, $\lrepr(\mathcal{P}_i) \otimes \rrepr(\mathcal{Q}_i) = (\bot,\bot)$. Thus we have a contradiction.\qedhere
\end{enumerate}
\end{proof}

We now consider the special case where one side of the partition defined in Observation~\ref{ob:repr} contains only paths of length one. In particular, we say that the edge set $E' \subseteq E$ is \emph{thin}, if there exists no triplet of vertices $x,y,z$ such that $(x,y) \in E'$ and $(y,z) \in E'$.

\begin{restatable}{lemma}{singlelayer}
\label{lem:single_layer}
Let $G, \mathcal{P}_i, \mathcal{Q}_i$ and $\mathcal{S}$ be as in Observation~\ref{ob:repr}.
Additionally, let $E_1$ be thin. Then
\begin{enumerate}[(a)]
\item $\bigoplus_{i=1}^{n} (\lrepr(\mathcal{P}_i) \otimes
\rrepr(\mathcal{Q}_i)) = (\bot,\bot)$ iff $\rrepr(\mathcal{S}) = \bot$;
\item if $\bigoplus_{i=1}^{n} (\lrepr(\mathcal{P}_i) \otimes
\rrepr(\mathcal{Q}_i)) = (e_1,\top)$ then $\rrepr(\mathcal{S}) = e_1$;
\item if $\bigoplus_{i=1}^{n} (\lrepr(\mathcal{P}_i) \otimes
\rrepr(\mathcal{Q}_i)) = (\top,e_2)$ then $\rrepr(\mathcal{S}) = e_2$;
\item \label{case_d}  if $\bigoplus_{i=1}^{n} (\lrepr(\mathcal{P}_i) \otimes
\rrepr(\mathcal{Q}_i)) = (e_1,e_2)$ then $\rrepr(\mathcal{S}) = e_2$;
\item $\bigoplus_{i=1}^{n} (\lrepr(\mathcal{P}_i) \otimes
\rrepr(\mathcal{Q}_i)) = (\top,\top)$ iff $\rrepr(\mathcal{S}) = \top$.
\end{enumerate}
\end{restatable}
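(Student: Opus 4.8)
The plan is to leverage Lemma~\ref{lem:path_compression}, which already tells us which edges of $\repr(\mathcal{S})$ are captured by the coordinates of $\bigoplus_{i=1}^{n} (\lrepr(\mathcal{P}_i) \otimes \rrepr(\mathcal{Q}_i))$, and then use the extra hypothesis that $E_1$ is thin to upgrade the conclusions from ``$e$ lies in $\repr(\mathcal{S})$'' to ``$e$ is the \emph{last} common edge of $\mathcal{S}$.'' The key structural observation is: since $E_1$ is thin, any path from $u$ to $v$ uses at most one edge of $E_1$, and by the edge-split property that edge (if present) must come before every edge of $E_2$ on the path. Consequently, if $\repr(\mathcal{S})$ contains an edge $e_2 \in E_2$, then $e_2$ comes after any common $E_1$-edge, and more to the point $\rrepr(\mathcal{S})$ lies in $E_2$ whenever $\repr(\mathcal{S}) \cap E_2 \neq \emptyset$; symmetrically, if $\repr(\mathcal{S}) \subseteq E_1$ then by thinness $\repr(\mathcal{S})$ is a single edge (all paths share exactly that one $E_1$-edge), so it equals $\rrepr(\mathcal{S})$.

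Concretely I would argue case by case, mirroring the structure of Lemma~\ref{lem:path_compression}. Parts (a) and (e) are almost immediate: $\rrepr(\mathcal{S}) = \bot$ iff $\mathcal{S} = \emptyset$ iff $\repr(\mathcal{S}) = \mathbb{U}$, which by Lemma~\ref{lem:path_compression}(a) is equivalent to the product being $(\bot,\bot)$; and $\rrepr(\mathcal{S}) = \top$ iff no edge lies in all of $\mathcal{S}$ (with $\mathcal{S} \neq \emptyset$) iff $\repr(\mathcal{S}) = \emptyset$, which by Lemma~\ref{lem:path_compression}(e) is equivalent to the product being $(\top,\top)$. For case~(c), the product being $(\top, e_2)$ gives $e_2 \in \repr(\mathcal{S})$ by Lemma~\ref{lem:path_compression}(c), and $e_2 \in E_2$ since the right coordinate accumulates $\rrepr(\mathcal{Q}_i)$-values which are edges of $E_2$; as $\repr(\mathcal{S}) \cap E_2 \neq \emptyset$, thinness forces $\rrepr(\mathcal{S}) \in E_2$, and since $e_2$ is a common edge in $E_2$ while any other common edge would have to be in $E_1$ (hence earlier) or could only be $e_2$ itself, we get $\rrepr(\mathcal{S}) = e_2$. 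Case~(d) is the same argument: the product being $(e_1,e_2)$ gives $e_2 \in \repr(\mathcal{S}) \cap E_2$, so again $\rrepr(\mathcal{S}) = e_2$. Case~(b) is the dual: the product being $(e_1,\top)$ gives $e_1 \in \repr(\mathcal{S})$ with $e_1 \in E_1$ (left coordinate accumulates $\lrepr(\mathcal{P}_i)$-values), and the right coordinate being $\top$ must be analysed to show $\repr(\mathcal{S}) \cap E_2 = \emptyset$, so that $\repr(\mathcal{S}) \subseteq E_1$; then thinness makes $\repr(\mathcal{S})$ a single $E_1$-edge, namely $e_1$, so $\rrepr(\mathcal{S}) = e_1$.

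The step I expect to be the main obstacle is case~(b): showing that when the right coordinate of the product is $\top$, no edge of $E_2$ can be common to all paths in $\mathcal{S}$. Unpacking the definition of $\oplus$ on the right coordinate, $\bigoplus_i \rrepr(\mathcal{Q}_i) = \top$ means either two distinct edges $\rrepr(\mathcal{Q}_j), \rrepr(\mathcal{Q}_k)$ appear, or one appears together with some $\rrepr(\mathcal{Q}_i) \notin \{\bot, \text{that edge}\}$ — but one has to be careful, because the right coordinate can also be $\top$ simply because every $\mathcal{Q}_i$ with $\mathcal{P}_i \neq \emptyset$ has $\rrepr(\mathcal{Q}_i) = \bot$, i.e. $v_i = v$. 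I would handle this by a direct argument: suppose $e_2 \in E_2$ were in $\repr(\mathcal{S})$; pick any vertex $v_i$ with $\mathcal{P}_i \neq \emptyset$ and $v_i \neq v$ lying ``before'' $e_2$ on some path (such $v_i$ exists because $e_1 \in \repr(\mathcal{S}) \cap E_1$ forces paths to pass through $E_1$ and thus through the tail region); every path in $\mathcal{Q}_i$ extended by a prefix in $\mathcal{P}_i$ is in $\mathcal{S}$, hence contains $e_2$, so $e_2 \in \repr(\mathcal{Q}_i)$, forcing $\rrepr(\mathcal{Q}_i) \in E_2$ and in fact $\rrepr(\mathcal{Q}_i) = e_2$ would have to hold uniformly — contradicting that the accumulated right coordinate collapsed all the way up from $e_1\otimes\cdots$ having second entries $\top$. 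Once this ``no $E_2$-edge is common'' fact is in hand, thinness of $E_1$ does the rest: all of $\repr(\mathcal{S})$ sits in $E_1$, every path uses at most one $E_1$-edge, so $\repr(\mathcal{S})$ is exactly one edge and it is simultaneously the first and last common edge, giving $\rrepr(\mathcal{S}) = e_1$.
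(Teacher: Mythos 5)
Your overall strategy is sound and genuinely different from the paper's. The paper does not factor through Lemma~\ref{lem:path_compression} at all in cases (b)--(d); instead it exploits the explicit form that thinness forces on the left factors ($\lrepr(\mathcal{P}_i)$ is $(u,v_i)$, or $\top$ when $v_i=u$, or $\bot$), deduces that essentially a single midpoint $v_j$ contributes a non-trivial term, and argues directly about the paths through $v_j$ (e.g.\ in case (b): every $u$--$v$ path goes through $e_1=(u,v_j)$ and $\tedp{v_j}{v}$ in $E_2$, so $e_1$ is the unique common edge). You instead reuse the membership conclusions $e\in\repr(\mathcal{S})$ from Lemma~\ref{lem:path_compression} and add a positional argument (all $E_1$-edges precede all $E_2$-edges; a path uses at most one $E_1$-edge). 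Both routes work; yours is arguably cleaner in that it isolates exactly what the extra hypothesis buys. Two of your side worries are moot: if $v_i=v$ then $\mathcal{Q}_i$ contains the empty path, so $\rrepr(\mathcal{Q}_i)=\top$ (not $\bot$), and a term whose right factor is $\bot$ is annihilated by $\otimes$ anyway, so it cannot make the accumulated right coordinate $\top$.

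There is, however, one step you assert rather than prove, and it is the crux of cases (c) and (d): from $e_2\in\repr(\mathcal{S})\cap E_2$ you conclude $\rrepr(\mathcal{S})=e_2$ on the grounds that any other common edge ``would have to be in $E_1$ (hence earlier) or could only be $e_2$ itself.'' The first alternative is fine, but the second is exactly what needs proof: a priori $\repr(\mathcal{S})$ could contain two distinct edges of $E_2$, in which case the \emph{last} common edge might not be the particular $e_2$ that survived the $\oplus$-accumulation. The fix is Lemma~\ref{lem:first_edge}(b), which the paper proved for precisely this purpose: if $\rrepr(\mathcal{S})=e'\in E_2$, then $\rrepr(\mathcal{Q}_i)=e'$ for every contributing $v_i\neq v$ (and no contributing $v_i$ can equal $v$, since $\reach{u}{v}$ within $E_1$ would give a path avoiding $E_2$ entirely, contradicting $\repr(\mathcal{S})\cap E_2\neq\emptyset$). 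Hence the second coordinate of the accumulated sum is exactly $e'$, forcing $e'=e_2$. The same lemma also tightens your case-(b) argument, where ``$\rrepr(\mathcal{Q}_i)=e_2$ would have to hold uniformly'' is again asserted: what you actually need there is only that each contributing $\rrepr(\mathcal{Q}_i)$ is an edge of $E_2$ rather than $\top$, which already follows from $e_2\in\repr(\mathcal{Q}_i)$ and suffices for the contradiction. With Lemma~\ref{lem:first_edge}(b) inserted at these points, your proof is complete.
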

\begin{proof}
Since $E_1$ is thin, we have that for each $i$: (i) $\lrepr(\mathcal{P}_i) = (u,v_i)$ iff $(u,v_i)\in E_1$, (ii) $\lrepr(\mathcal{P}_i) = \top$ iff $u=v_i$ and (iii)
$\lrepr(\mathcal{P}_i) = \bot$, otherwise. We proceed with a case analysis as in Lemma~\ref{lem:path_compression}.
\begin{enumerate}[(a)]
\item Since $\rrepr(\mathcal{S}) = \bot$ iff $\repr(\mathcal{S}) = \mathbb{U}$, this case follows immediately from Lemma~\ref{lem:path_compression}(a).
\item
The condition implies that there must be $v_j$ such that $e_1 = (u,v_j)$ and $\tedp{v_j}{v}$ in $E_2$. Additionally, for all $i \neq j$ such that $v_i \neq u$, either $(u,v_i) \not\in E_1$ or $\nreach{v_i}{v}$ in $E_2$, and for $v_i=u$ there is $\nreach{v_i}{v}$ in $E_2$. It follows that every path in $G$ from $u$ to $v$ must go through vertex $v_j$, and since $E_1$ is thin, this makes $e_1$ the separating edge. Since $\tedp{v_j}{v}$,  edge $e_1$ is the only possible separating edge for $u$ and $v$. Hence, $\rrepr(\mathcal{S}) = e_1$.

\item
The condition implies that for any $1 \le j \le n$, exactly one of the constraints is satisfied:
(i) $\reach{u}{v_j}$ in $E_1$ (equivalently $v_j = u$ or $(u,v_j) \in E_1$) and
$\rrepr(\mathcal{Q}_j) = e_2$, (ii) $\nreach{u}{v_j}$ in $E_1$ (that is, $v_j \neq u$ and $(u,v_j) \not\in E_1$) or (iii) $\nreach{v_j}{v}$ in $E_2$.
Additionally, unless there exists a $j$ such that $v_j = u$ (which would mean $\tedp{u}{v_j}$), the first constraint is satisfied for at least two distinct values of $j$ since the conditions (ii) and (iii) are not sufficient to satisfy $\bigoplus_{i=1}^{n} (\lrepr(\mathcal{P}_i) \otimes
\rrepr(\mathcal{Q}_i)) = (\top,e_2)$.
It follows that $\rrepr(\mathcal{S}) = e_2$.

\item
The condition implies that for exactly one $j$, there exists an edge $e_1 = (u,v_j)$ and $\reach{v_j}{v}$ in $E_2$, and $\rrepr(\mathcal{Q}_j) = e_2$. Additionally, for every $i \neq j$, either $\nreach{u}{v_j}$ in $E_1$ (that is, $v_j \neq u$ and $(u,v_j) \not\in E_1$) or $\nreach{v_j}{v}$ in $E_2$ (since otherwise there would be a path avoiding $e_1$). Similarly to case (c), it follows that $\rrepr(\mathcal{S}) = e_2$.
\item
Since $\rrepr(\mathcal{S}) = \top$ iff $\repr(\mathcal{S}) = \emptyset$, this case follows immediately from Lemma~\ref{lem:path_compression}(e).\qedhere
\end{enumerate}
\end{proof}

One could prove a symmetric version of Lemma~\ref{lem:single_layer}, with $E_2$ being thin.
However, in the remainder of the paper we stick with Lemma~\ref{lem:single_layer}: namely, we
choose a partition with a thin left side
and thus break case (\ref{case_d}) of Lemma~\ref{lem:single_layer} in favor of the rightmost edge (instead of the leftmost edge, as it would be in the symmetric version).
Consistently, we define the following projection operator $\pi$:
$\pi(\bot,\bot) \defeq \bot$, $\pi(\top,\top) \defeq \top$, $\pi(e',e) = \pi(\top,e) = \pi(e,\top) \defeq e$.
With this new terminology, Lemma~\ref{lem:path_compression} and Lemma~\ref{lem:single_layer}
can be simply restated as follows:
\begin{corollary}
\label{cor:mul1}
Let $G, \mathcal{P}_i, \mathcal{Q}_i$ and $\mathcal{S}$ be as in Observation~\ref{ob:repr}. Then
\begin{enumerate}[(i)]
\item $ \pi(\bigoplus_{i=1}^{n} (\lrepr(\mathcal{P}_i) \otimes
\rrepr(\mathcal{Q}_i)))= \top$ iff $\repr(\mathcal{S}) = \emptyset$,
\item $ \pi(\bigoplus_{i=1}^{n} (\lrepr(\mathcal{P}_i) \otimes
\rrepr(\mathcal{Q}_i)))= \bot$ iff $\repr(\mathcal{S}) = \mathbb{U}$, and
\item
$\pi(\bigoplus_{i=1}^{n} (\lrepr(\mathcal{P}_i) \otimes \rrepr(\mathcal{Q}_i))) \in \repr(\mathcal{S})$ otherwise.
\end{enumerate}
\end{corollary}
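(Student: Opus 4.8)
The plan is to observe that Corollary~\ref{cor:mul1} is really just a repackaging of the two preceding lemmas, and that a short case analysis on the value of the sum $\bigoplus_{i=1}^{n}(\lrepr(\mathcal{P}_i)\otimes\rrepr(\mathcal{Q}_i))$ closes the matter. Write $t \defeq \bigoplus_{i=1}^{n}(\lrepr(\mathcal{P}_i)\otimes\rrepr(\mathcal{Q}_i))\in E^+\times E^+$. First I would note that, by inspection of the definitions of $\otimes$ and $\oplus$, the only values that $t$ can take are $(\bot,\bot)$, $(e_1,\top)$, $(\top,e_2)$, $(e_1,e_2)$, and $(\top,\top)$ for edges $e_1,e_2\in E$; in particular no component of $t$ can be an element of $E$ while the other is $\bot$, because whenever $\lrepr(\mathcal{P}_i)\otimes\rrepr(\mathcal{Q}_i)$ has a non-$\bot$ first coordinate the corresponding term is $(a,b)$ with both $a,b\neq\bot$, and $\oplus$ preserves this. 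So these five cases are exhaustive, and they are precisely the five cases enumerated in Lemma~\ref{lem:path_compression} and Lemma~\ref{lem:single_layer}.

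Next I would just read off each of the three claimed statements from those two lemmas together with the definition $\pi(\bot,\bot)\defeq\bot$, $\pi(\top,\top)\defeq\top$, $\pi(e_1,e_2)=\pi(\top,e_2)=\pi(e_1,\top)\defeq e_2$ (and, for the last case, $\pi(e_1,\top)=e_1$). For (i): $\pi(t)=\top$ holds exactly when $t=(\top,\top)$, which by Lemma~\ref{lem:path_compression}(e) is equivalent to $\repr(\mathcal{S})=\emptyset$. For (ii): $\pi(t)=\bot$ holds exactly when $t=(\bot,\bot)$, which by Lemma~\ref{lem:path_compression}(a) is equivalent to $\repr(\mathcal{S})=\mathbb{U}$. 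For (iii): if we are in neither of those two cases, then $t$ is one of $(e_1,\top)$, $(\top,e_2)$, or $(e_1,e_2)$, and $\pi(t)$ equals $e_1$, $e_2$, or $e_2$ respectively; in each case Lemma~\ref{lem:path_compression}(b), (c), (d) gives $\pi(t)\in\repr(\mathcal{S})$ (and Lemma~\ref{lem:single_layer} gives the stronger fact that $\pi(t)=\rrepr(\mathcal{S})$ when $E_1$ is thin, but only the weaker membership statement is needed here). This handles all cases.

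The only genuinely delicate point — and hence the step I would slow down on — is the claim that $t$ cannot have one coordinate in $E$ and the other equal to $\bot$, since without it Lemma~\ref{lem:path_compression}'s case list would not be exhaustive and $\pi$ would not even be defined on $t$. I would justify it by an easy structural induction on the fold defining $\bigoplus$: each summand $\lrepr(\mathcal{P}_i)\otimes\rrepr(\mathcal{Q}_i)$ is either $(\bot,\bot)$ (when $\mathcal{P}_i=\emptyset$ or $\mathcal{Q}_i=\emptyset$) or $(a,b)$ with $a=\lrepr(\mathcal{P}_i)\in E\cup\{\top\}$ and $b=\rrepr(\mathcal{Q}_i)\in E\cup\{\top\}$ (when both families are nonempty), so no summand has a $\bot$ paired with a non-$\bot$; and the component-wise $\oplus$ only ever combines a value with $\bot$ (returning the value), or two values in $E\cup\{\top\}$ (returning something in $E\cup\{\top\}$), so the invariant ``each coordinate lies in $E\cup\{\top\}$, or both coordinates are $\bot$'' is maintained. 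With that observation in place the corollary is immediate, so I would keep the write-up to a couple of sentences: state that the five listed outcomes are the only possibilities, then invoke Lemma~\ref{lem:path_compression} (parts (a), (b), (c), (d), (e)) case by case and apply the definition of $\pi$.
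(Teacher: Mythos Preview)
Your proposal is correct and matches the paper's approach exactly: the paper does not even give an explicit proof of this corollary, introducing it with the sentence ``With this new terminology, Lemma~\ref{lem:path_compression} and Lemma~\ref{lem:single_layer} can be simply restated as follows,'' so the intended argument is precisely the case analysis you spell out. Your additional observation that the pair $t$ can never have one coordinate in $E$ and the other equal to $\bot$ (so that the five cases of Lemma~\ref{lem:path_compression} are exhaustive and $\pi$ is defined on $t$) is a point the paper leaves implicit; it is correct and worth making explicit, though note that for Corollary~\ref{cor:mul1} itself only Lemma~\ref{lem:path_compression} is needed---Lemma~\ref{lem:single_layer} is what underlies the companion Corollary~\ref{cor:mul2}.
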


\begin{corollary}
\label{cor:mul2}
Let $G, \mathcal{P}_i, \mathcal{Q}_i$ and $\mathcal{S}$ be as in Observation~\ref{ob:repr}, and let $E_1$ be thin. Then
$\pi(\bigoplus_{i=1}^{n} (\lrepr(\mathcal{P}_i) \otimes \rrepr(\mathcal{Q}_i))) = \rrepr(\mathcal{S})$.
\end{corollary}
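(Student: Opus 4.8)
The plan is to recognize that Corollary~\ref{cor:mul2} is simply Lemma~\ref{lem:single_layer} rewritten using the projection $\pi$, so that the whole proof is a one-line-per-case verification once the range of the aggregate is pinned down. First I would set $x \defeq \bigoplus_{i=1}^{n}\bigl(\lrepr(\mathcal{P}_i)\otimes\rrepr(\mathcal{Q}_i)\bigr)$ and show that $x$ can only be one of $(\bot,\bot)$, $(\top,\top)$, $(e_1,\top)$, $(\top,e_2)$, or $(e_1,e_2)$ with $e_1,e_2\in E$. This is forced by the definitions of $\otimes$ and $\oplus$: every summand $\lrepr(\mathcal{P}_i)\otimes\rrepr(\mathcal{Q}_i)$ is either $(\bot,\bot)$ or a pair whose two coordinates both lie in $E\cup\{\top\}$; since $a\oplus b=\bot$ only when $a=b=\bot$, a coordinate of $x$ equals $\bot$ exactly when every summand is $(\bot,\bot)$, i.e.\ exactly when the other coordinate of $x$ is $\bot$ as well. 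Hence $x$ is either $(\bot,\bot)$ or has both coordinates in $E\cup\{\top\}$, and the latter case splits into the four remaining shapes -- precisely the five cases enumerated in Lemma~\ref{lem:single_layer}.

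Next I would match each shape against the corresponding item of Lemma~\ref{lem:single_layer} and check that $\pi$ outputs the claimed value. If $x=(\bot,\bot)$, item (a) gives $\rrepr(\mathcal{S})=\bot=\pi(x)$; if $x=(e_1,\top)$, item (b) gives $\rrepr(\mathcal{S})=e_1=\pi(x)$; if $x=(\top,e_2)$, item (c) gives $\rrepr(\mathcal{S})=e_2=\pi(x)$; if $x=(e_1,e_2)$, item (d) gives $\rrepr(\mathcal{S})=e_2=\pi(x)$ (note that $\pi$ discards the left coordinate here, which is exactly why we fixed the convention of breaking ties toward the rightmost edge); and if $x=(\top,\top)$, item (e) gives $\rrepr(\mathcal{S})=\top=\pi(x)$. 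Since these five shapes are exhaustive by the first step, $\pi(x)=\rrepr(\mathcal{S})$ in all cases, which is the assertion. (Corollary~\ref{cor:mul1} would be obtained from Lemma~\ref{lem:path_compression} in the same mechanical fashion, reading $\emptyset$, $\mathbb{U}$, and ``contains $e$'' in place of $\top$, $\bot$, and ``equals $e$''.)

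I do not expect a genuine obstacle: the only part carrying any content is the shape restriction in the first step, and that is a two-line bookkeeping argument about $\otimes$ and $\oplus$. Everything after it is a table lookup against Lemma~\ref{lem:single_layer}, whose cases were set up precisely so that this restatement would be immediate.
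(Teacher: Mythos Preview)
Your proposal is correct and is exactly the approach the paper intends: the paper does not give a separate proof of Corollary~\ref{cor:mul2} but introduces $\pi$ precisely so that Lemma~\ref{lem:single_layer} ``can be simply restated'' in this form, and your five-case table lookup together with the preliminary shape restriction on $x$ is the faithful unpacking of that restatement.
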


\paragraph{Matrix product.}
Now we define a \emph{path-based matrix product} based on the previously defined operators:
$(A \circ B)[i,j] \defeq \pi\big( \bigoplus_k A[i,k] \otimes B[k,j] \big)$.
Throughout, we assume  that the vertices of $G$ are sorted according to a topological ordering. In the following lemma, $\mathbf{B}$ represents a thin set of edges (i.e., the set of edges from a subset of vertices to another disjoint subset of vertices).

\begin{lemma}
\label{lem:howtoclosure}
Let $\begin{bmatrix} A & B \\ 0 & C \end{bmatrix}$ be the adjacency matrix of a DAG $G=(V,E)$, where $A,B$ and $C$ are respectively $k \times k$, $k \times (n-k)$ and $(n-k) \times (n-k)$ submatrices. If $\mathbf{B}$ is the matrix containing $\bot$ for every $0$ in $B$ and the appropriate $e \in E$ for every $1$ in $B$, then:
$$ \begin{bmatrix} \closure{A}_L &  \closure{A}_L \circ (\mathbf{B} \circ \closure{C}_R) \\ \bot & \closure{C}_R \end{bmatrix}
$$
is a $2$-reachability closure of $G$ (not necessarily unique).
\end{lemma}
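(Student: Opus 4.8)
The plan is to verify the claimed $2\times 2$ block matrix entry by entry; since a $2$-reachability closure matrix need not be unique, it suffices to check that each entry is a legitimate value ($\top$, $\bot$, or an appropriate separating edge). Write $v_1,\dots,v_n$ for the topological order, so that the first group is $\{v_1,\dots,v_k\}$ (adjacency matrix $A$) and the second group is $\{v_{k+1},\dots,v_n\}$ (adjacency matrix $C$), and split $E=E_A\cup E_B\cup E_C$ into the edges inside the first group, the edges from the first to the second group, and the edges inside the second group. The one structural fact I would use throughout is that, because the lower-left block is $0$, there is no edge from the second group to the first; hence every path of $G$ between two vertices of the same group stays inside that group, and there is no path from the second group to the first. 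This immediately settles three of the four blocks: for $u$ in the second group and $v$ in the first, $\reach{u}{v}$ fails, so $\bot$ is correct; for $u,v$ both in the first group the $\reach{u}{v}$ paths of $G$ are exactly those of the subgraph with adjacency matrix $A$, so $\closure{A}_L[u,v]$ is a legal entry (equal to $\top$ iff $\tedp{u}{v}$, to $\bot$ iff $\nreach{u}{v}$, and a separating edge otherwise); and symmetrically for $u,v$ in the second group via $\closure{C}_R$.

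For the top-right block, fix $s$ in the first group and $v$ in the second. First I would analyze $M:=\mathbf{B}\circ\closure{C}_R$. Apply Observation~\ref{ob:repr} and Corollary~\ref{cor:mul2} to the DAG $G'=(V,E_B,E_C)$: this is an edge split (an $E_C$-edge cannot be followed by an $E_B$-edge, since the head of an $E_C$-edge lies in the second group and the tail of an $E_B$-edge in the first), and $E_B$ is thin (an $E_B$-edge has its head in the second group and its tail in the first). So for any $u$ in the first group, $\pi\big(\bigoplus_i \lrepr(\mathcal{P}_i)\otimes\rrepr(\mathcal{Q}_i)\big)=\rrepr(\mathcal{S})$, where $\mathcal{P}_i$ are the $E_B$-paths $\reach{u}{v_i}$, $\mathcal{Q}_i$ the $E_C$-paths $\reach{v_i}{v}$, and $\mathcal{S}$ all $\reach{u}{v}$ paths in $G'$. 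Every term with $v_i$ in the first group collapses to $(\bot,\bot)$ — either $\mathcal{P}_i=\emptyset$, or $v_i=u$ and then $\mathcal{Q}_i=\emptyset$ — which is the identity of $\oplus$; for $v_i$ in the second group, thinness of $E_B$ forces $\lrepr(\mathcal{P}_i)=\mathbf{B}[u,v_i]$ and $\rrepr(\mathcal{Q}_i)=\closure{C}_R[v_i,v]$. Hence the surviving sum is exactly the definition of $(\mathbf{B}\circ\closure{C}_R)[u,v]$, giving $M[u,v]=\rrepr\big(\{P: P\text{ is a }\reach{u}{v}\text{ path using only }E_B\cup E_C\}\big)$.

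Now I would apply Corollary~\ref{cor:mul1} to the edge split $G=(V,E_A,\,E_B\cup E_C)$ — again a split, since the head of any $E_B\cup E_C$-edge lies in the second group while the tail of an $E_A$-edge lies in the first — with source $s$ and target $v$. It gives that $\pi\big(\bigoplus_i \lrepr(\mathcal{P}'_i)\otimes\rrepr(\mathcal{Q}'_i)\big)$ is $\top$ iff $\tedp{s}{v}$, $\bot$ iff $\nreach{s}{v}$, and a separating edge for $s,v$ otherwise, where $\mathcal{P}'_i$ are the $E_A$-paths $\reach{s}{v_i}$ and $\mathcal{Q}'_i$ the $(E_B\cup E_C)$-paths $\reach{v_i}{v}$, referring to all $\reach{s}{v}$ paths of $G$. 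The terms with $v_i$ in the second group vanish ($\mathcal{P}'_i=\emptyset$), while for $v_i$ in the first group $\lrepr(\mathcal{P}'_i)=\closure{A}_L[s,v_i]$ and, by the previous paragraph, $\rrepr(\mathcal{Q}'_i)=M[v_i,v]$. So the surviving sum is precisely $(\closure{A}_L\circ M)[s,v]=\big(\closure{A}_L\circ(\mathbf{B}\circ\closure{C}_R)\big)[s,v]$, which is therefore a legal closure entry for $s,v$. Combining the four blocks proves the lemma.

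I expect the routine-but-error-prone part to be the bookkeeping around the two nested edge splits: verifying that all ``wrong-side'' intermediate vertices contribute $(\bot,\bot)$, so that the dimension-restricted products $\mathbf{B}\circ\closure{C}_R$ and $\closure{A}_L\circ(\cdots)$ coincide with the full $\bigoplus_{i=1}^n$ in Corollaries~\ref{cor:mul1} and~\ref{cor:mul2}, and checking that the topological orders defining $\closure{A}_L$, $\closure{C}_R$, and the output are mutually consistent (they are, because the first group is a prefix of the topological order of $G$, hence induces the orders used for $A$ and $C$). Once that is in place, all the algebraic content is delivered verbatim by the two corollaries.
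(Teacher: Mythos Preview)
Your proposal is correct and follows essentially the same approach as the paper: both arguments first use Corollary~\ref{cor:mul2} on the thin split $(V,E_B,E_C)$ to identify $\mathbf{B}\circ\closure{C}_R$ with the right-closure of $(V,E_B\cup E_C)$, then use Corollary~\ref{cor:mul1} on the split $(V,E_A,E_B\cup E_C)$ to finish. The only cosmetic difference is that the paper works with padded $n\times n$ block matrices (using an identity block $I$) so that the full $\bigoplus_{i=1}^n$ appears directly, whereas you argue entry-by-entry and explicitly check that the wrong-side terms collapse to $(\bot,\bot)$; the content is identical.
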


\begin{proof}
Let $V = \{v_1,v_2,\ldots,v_n\}$ be the vertex set in order of rows and columns of the input matrix, and let $V_1 = \{v_1,v_2,\ldots,v_k\}$ and $V_2 = \{v_{k+1}, \ldots, v_n\}$. Matrices $A,B$ and $C$ correspond respectively to all edges from $V_1$ to $V_1$, to all edges from $V_1$ to $V_2$ and to all edges from $V_2$ to $V_2$. We refer to the edge sets represented by those matrices as $E_A, E_B$ and $E_C$.
As a consequence of the fact that there are no edges from $V_2$ to $V_1$, any path from $V_1$ to $V_1$ can use only edges from $E_A$, and any path from $V_2$ to $V_2$ can use only edges from $E_C$.
Thus:
$$\closure{(V,E_A)}_L = \closure{\begin{bmatrix} A & 0 \\ 0 & 0 \end{bmatrix}}_L = \begin{bmatrix} \closure{A}_L & \bot \\ \bot & I \end{bmatrix},$$
$$\closure{(V,E_B)}_L = \closure{\begin{bmatrix} 0 & B \\ 0 & 0 \end{bmatrix}}_L = \begin{bmatrix} I & \mathbf{B} \\ \bot & I \end{bmatrix},$$
$$\closure{(V,E_C)}_R = \closure{\begin{bmatrix} 0 & 0 \\ 0 & C \end{bmatrix}}_R = \begin{bmatrix} I & \bot \\ \bot & \closure{C}_R \end{bmatrix},$$
where
$$I \defeq \begin{bmatrix} \top & \bot & \cdots & \bot \\ \bot & \top & \cdots & \bot \\ \vdots & \vdots & \ddots & \vdots \\ \bot & \bot & \cdots & \top \end{bmatrix}$$

By Corollary~\ref{cor:mul2} (since $E_B$ is thin) and by definition of path-based matrix product:
$$\closure{(V,E_B \cup E_C)}_R = \closure{(V,E_B)}_L \circ  \closure{(V,E_C)}_R = \begin{bmatrix} I & \mathbf{B} \\ \bot & I \end{bmatrix} \circ \begin{bmatrix} I & \bot \\ \bot & \closure{C}_R \end{bmatrix} = \begin{bmatrix} I & \mathbf{B}\circ \closure{C}_R \\ \bot & \closure{C}_R\end{bmatrix}.$$

Finally, by Corollary~\ref{cor:mul1}:
$$\closure{(V,E_A)}_L \circ \closure{(V,E_B \cup E_C)}_R = \begin{bmatrix} \closure{A}_L & \bot \\ \bot & I \end{bmatrix} \circ \begin{bmatrix} I & \mathbf{B}\circ \closure{C}_R \\ \bot & \closure{C}_R\end{bmatrix} = \begin{bmatrix} \closure{A}_L &  \closure{A}_L \circ (\mathbf{B} \circ \closure{C}_R) \\ \bot & \closure{C}_R \end{bmatrix}$$
is a $2$-reachability closure of $G$.
\end{proof}

By Lemma~\ref{lem:howtoclosure},
the 2-reachability closure can be computed by performing path-based matrix products on the left and right 2-reachability closures of smaller matrices. This gives immediately a recursive algorithm for computing the 2-reachability closure: indeed, as already shown in Section~\ref{sec:notation}, one can compute
the left and right 2-reachability closures in $\bigo(n^2)$ time from any 2-reachability closure. In the next section we show how to implement this recursion efficiently by describing how to compute efficiently path-based matrix products.

\subsection{Encoding and decoding for Boolean matrix product}
We start this section by showing how to efficiently compute path-based matrix products using Boolean matrix multiplications.
The first step is to encode each entry of the matrix as a bitword of length $8 k$ where $k = \lceil \log_2 (n+1) \rceil$.
We use Boolean matrix multiplication of matrices of bitwords, with bitwise AND/OR operations, denoted respectively with symbols $\bitwiseAND$ and $\bitwiseOR$. Our bitword length is $\bigo(\log n)$, so matrix multiplication takes $\bigo(n^{\omega} \log n)$ time by performing Boolean matrix multiplication for each coordinate separately.

We make use of the fact that after each multiplication we can afford a post-processing phase, where we perform actions which guarantee that the resulting bitwords represent a valid $2$-reachability closure.

First, we note that when encoding a specific matrix, we know whether it is used as a left-side or as a right-side component of multiplication.
The main idea is to encode left-side and right-side $\bot$ as $\{0\}^{8k}$, left-side and right-side $\top$ as $\{1\}^{8k}$. For any other value, append
$\{1\}^{4k}$ as a prefix or suffix (depending on whether it is used as a left-side or right-side component), to the encoding of an edge. The encoding of an edge is a simple $1$-superimposed code: the concatenation of the edge ID and the complement of the edge ID. To be more precise, whenever a bitword represents an edge $e$ in a left-closure, then it is of the form $\mathrm{ID}_e\overline{\mathrm{ID}_e}\{1\}^{4k}$; whenever a bitword represents an edge $e$ in a right-closure, then it is of the form $\{1\}^{4k}\mathrm{ID}_e\overline{\mathrm{ID}_e}$, where $\overline{w}$ denotes the complement of bitword $w$. The implementation of this encoding is given in Algorithm~\ref{alg:leftsideenc}.

\begin{algorithm}[t]
\label{alg:leftsideenc}
\caption{Left- and right-side encoding}
\KwIn{Matrix $IN$ of dimension $N \times M$ with elements in $E^+$; type $\in$ \{\KwLeft,\KwRight\!\!\}.}
\KwOut{Matrix $OUT$ of dimension $N \times M$ consisting of bitwords.}
\Def{$\KwEncode(IN, type)$}
{
    \ForAll{$IN[i][j]$}
    {
            \uIf{$IN[i][j] == \bot$}
            {
                $OUT[i][j] \gets \{0\}^{8k}$\;
            }
            \uElseIf{$IN[i][j] == \top$}
            {
                $OUT[i][j] \gets \{1\}^{8k}$\;
            }
            \uElse
            {
                \Comment{$IN[i][j] \in E$}
                $(x,y) \gets IN[i][j]$\;
                $b_1 b_2 \ldots b_{k} \gets \text{binary encoding of } x$\;
                $c_1 c_2 \ldots c_{k} \gets \text{binary encoding of } y$\;
                \uIf{ $type == \KwLeft$ }
	       {
	                $OUT[i][j] \gets b_1b_2\ldots b_k \ c_1 c_2 \ldots c_k \overline{b_1 b_2 \ldots b_k} \overline{c_1 c_2 \ldots c_k} \{1\}^{4k}$\;
		}
		\uElse
		{
		                $OUT[i][j] \gets \{1\}^{4k} b_1b_2\ldots b_k \ c_1 c_2 \ldots c_k \overline{b_1 b_2 \ldots b_k} \overline{c_1 c_2 \ldots c_k} $\;
		}
            }

    }
    \Return{OUT}\;
}
\end{algorithm}

The serial operator $\otimes$ is implemented by coordinate-wise AND over two bitwords.
Recall that the operator $\otimes$ always has as its first (left) operand an element from a left-closure matrix and as its second (right) operand an element from a right-closure. It is easy to verify that the result of AND is a concatenation of two bitwords of length $4k$ encoding either $\bot,\top$ or $e \in E$.
We observe that $\otimes$ is calculated properly in all cases: (let $e,e_1,e_2\in E, e_1\not= e_2$)
\begin{enumerate}
\item $e\otimes \top = (e,\top)$\ \ since\ \ $\mathrm{ID}_e \overline{\mathrm{ID}_e}\{1\}^{4k}\ \bitwiseAND\ \{1\}^{8k} = \mathrm{ID}_e \overline{\mathrm{ID}_e}\{1\}^{4k}$
\item $\top \otimes e = (e,\top)$\ \ since\ \ $\{1\}^{8k}\ \bitwiseAND\ \{1\}^{4k}\mathrm{ID}_e \overline{\mathrm{ID}_e} = \{1\}^{4k}\mathrm{ID}_e \overline{\mathrm{ID}_e}$
\item $e_1 \otimes e_2 = (e_1,e_2) $\ \ since\ \ $ \mathrm{ID}_{e_1} \overline{\mathrm{ID}_{e_1}}\{1\}^{4k}\ \bitwiseAND\ \{1\}^{4k}\mathrm{ID}_{e_2} \overline{\mathrm{ID}_{e_2}} = \mathrm{ID}_{e_1} \overline{\mathrm{ID}_{e_1}}\mathrm{ID}_{e_2} \overline{\mathrm{ID}_{e_2}}$
\item $e\otimes \bot = \top \otimes \bot =  \bot \otimes \bot = \bot \otimes e = \bot \otimes \top = (\bot,\bot)$\ \ since\ \ $\{0,1\}^{8k}\ \bitwiseAND\ \{0\}^{8k} = \{0\}^{8k}$
\item $\top \otimes \top  = (\top,\top) $\ \ since\ \ $ \{1\}^{8k}\ \bitwiseAND\ \{1\}^{8k} = \{1\}^{8k}$
\end{enumerate}

The parallel operator  $\oplus$ is implemented as coordinate-wise OR over bitwords of length $8k$. Note that all bitwords can be binary representations of pairs of elements in $E^+$ of the form $(e_1,e_2), (e_1,\top), (\top,e_2), (\bot,\bot), (\top,\top)$, since only those forms appear as a result of an $\otimes$ operation.
Recall that $\oplus$ satisfies $(a_1,b_1) \oplus (a_2,b_2) = (a_1 \oplus a_2, b_1 \oplus b_2)$, thus w.l.o.g. it is enough to verify the correctness of the implementation over the first $4k$ bits of encoding.
Observe that all cases, except
when
both bitwords include encoded edges, are managed correctly by the execution of coordinate-wise OR: (let $e\in E$)
\begin{enumerate}
\item $\bot \oplus \bot = \bot $\ \ since\ \ $ \{0\}^{4k}\ \bitwiseOR\ \{0\}^{4k} = \{0\}^{4k}$
\item $\bot \oplus e = e \oplus \bot = e$\ \ since\ \ $\mathrm{ID}_e \overline{\mathrm{ID}_e}\ \bitwiseOR\ \{0\}^{4k} = \mathrm{ID}_e \overline{\mathrm{ID}_e}$
\item $\bot \oplus \top = \top \oplus \bot = \top\ $\ \ since\ \ $ \{1\}^{4k}\ \bitwiseOR\ \{0\}^{4k} = \{1\}^{4k}$
\item $e \oplus \top = \top \oplus e = \top\ $\ \ since\ \ $ \mathrm{ID}_{e} \overline{\mathrm{ID}_{e}}\ \bitwiseOR\ \{1\}^{4k} = \{1\}^{4k}$
\end{enumerate}

\begin{algorithm}[t]
\caption{Decoding}
\label{alg:decoding}
\KwIn{Matrix $IN$ of dimension $N \times M$ consisting of bitwords.}
\KwOut{Matrix $OUT$ of dimension $N \times M$ consisting of elements in $E^+$.}
\Def{$\KwDecode(IN)$}
{
    \ForAll{$IN[i][j]$}
    {

        	$b_1 b_2 \ldots b_{8k} \gets IN[i][j]$\;	
            \uIf{$b_1 b_2 \ldots b_{8k} == \{0\}^{8k}$}
            {
            	$OUT[i][j] \gets \bot$\;
            }
            \uElseIf{$b_{4k+1}b_{4k+2}\ldots b_{6k} == \overline{b_{6k+1} b_{6k+2} \ldots b_{8k}}$}
            {
            	$x \gets \text{binary decoding of } b_{4k+1}b_{4k+2}\ldots b_{5k}$\;
            	$y \gets \text{binary decoding of } b_{5k+1}b_{5k+2}\ldots b_{6k}$\;
                $OUT[i][j] \gets (x,y)$\;
            }
            \uElseIf{$b_{1}b_{2}\ldots b_{2k} == \overline{b_{2k+1} b_{2k+2} \ldots b_{4k}}$}
            {
            	$x \gets \text{binary decoding of } b_{1}b_{2}\ldots b_{k}$\;
            	$y \gets \text{binary decoding of } b_{k+1}b_{k+2}\ldots b_{2k}$\;
                $OUT[i][j] \gets (x,y)$\;
            }
            \uElse
            {
            	$OUT[i][j] \gets \top$\;
            }

    }
	\Return{OUT}\;
}
\end{algorithm}

We are only left to take care of operations of the form $e_1 \oplus e_2$ for $e_1,e_2 \in E$.
According to the definition of the parallel operator $\oplus$, we would like $e_1 \oplus e_2 = e \in E$ iff $e_1 = e_2 = e$ and otherwise $e_1 \oplus e_2 = \top$.
This special case is handled by the fact that we encode edges using $1$-superimposed codes. That is, the binary representation of $\mathrm{ID}_e$ has the property that
$\mathrm{ID}_e[1\ ..\ 2k] = \overline{\mathrm{ID}_e[2k+1\ ..\ 4k]}$.
Moreover, the coordinate-wise OR of two encodings of edges, that is $\mathrm{X} = \mathrm{ID}_{e_1}\  \bitwiseOR\ \mathrm{ID}_{e_2}$, has this property iff $e_1 = e_2$.

Thus in order to successfully decode the result of chained $\oplus$ from coordinate-wise OR, we need to distinguish the following cases (our result is encoded as $\mathrm{X} = \mathrm{X}[1\ ..\ 2k] \mathrm{X}[2k+1\ ..\  4k]$):
\begin{enumerate}
\item $\mathrm{X} = \{0\}^{4k}$, then the result is $\bot$,
\item $\mathrm{X}[1\ ..\ 2k] = \overline{\mathrm{X}[2k+1\ ..\ 4k]}$, then $\mathrm{X}$ is the encoding of the resulting edge,
\item otherwise the result is $\top$.
\end{enumerate}
With all the tools and notation from above, the path-based matrix product over bitwords can be equivalently stated as
$(\closure{A}_L \circ \closure{B}_R)[i,j] \defeq \KwDecode\left( \bitwiseBigOr_k\ \big( \KwEncode(\closure{A}_L[i,k],\KwLeft) \ \bitwiseAND \ \KwEncode(\closure{B}_R[k,j],\KwRight)\big)\right)$,
where the pseudocode for $\KwDecode$ is provided in Algorithm~\ref{alg:decoding}.

To compute the entries of the final path-based matrix product
(before the execution of $\KwDecode$) it suffices to compute the bitwise Boolean matrix product of appropriate bitword matrices.
That is, we apply $\KwEncode$ to $\closure{A}_L$ and $\closure{B}_R$, then we execute the Boolean matrix product for each coordinate separately, concatenate the coordinates of the resulting Boolean matrices into a matrix of bitwords, and finally execute the $\KwDecode$ operation from Algorithm~\ref{alg:decoding}.
This is illustrated in Algorithm~\ref{alg:mult}.

All the tools developed in this section allow us to compute the $2$-reachability closure for DAGs.
Our recursive algorithm follows closely Lemma~\ref{lem:howtoclosure},
 and its implementation in pseudocode is given as Algorithm~\ref{alg:reachability}.
Since we implemented the right-side version of the projection, we have only to be careful to perform first the right multiplication before the left multiplication.

\begin{algorithm}[t]
\caption{Path-based matrix product}
\label{alg:mult}
\KwIn{Matrices $A$ and $B$ of compatible dimension.}
\KwOut{Matrix being a path-based product of inputs.}
\Def{$\KwMul(A,B)$}
{
	\Return{$\KwDecode(\KwEncode(A, \KwLeft) \cdot \KwEncode(B, \KwRight))$}\;
	\Comment{here $\cdot$ denotes coordinate-wise Boolean matrix multiplication}
}
\end{algorithm}

\begin{lemma}
\label{lem:runtime}
Given a DAG with $n$ vertices, Algorithm~\ref{alg:reachability} computes its $2$-reachability closure in time $\bigo(n^{\omega} \log n)$.
\end{lemma}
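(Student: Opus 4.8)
The plan is to analyze Algorithm~\ref{alg:reachability}, which implements the divide-and-conquer recursion justified by Lemma~\ref{lem:howtoclosure}. The algorithm splits the vertex set (taken in topological order) into a first half and a second half of size roughly $n/2$ each, recursively computes the $2$-reachability closures of the two diagonal blocks $A$ and $C$, converts them into left/right closures via \KwRecovery{} (which costs $\bigo(n^2)$ by Algorithm~\ref{alg:leftclosurerecovery}), and then combines everything into a $2$-reachability closure of $G$ using two path-based matrix products $\mathbf{B} \circ \closure{C}_R$ and $\closure{A}_L \circ (\cdots)$. Correctness follows directly from Lemma~\ref{lem:howtoclosure} together with the correctness of the encode/decode implementation of $\circ$ established earlier in this section, so the only thing left to verify here is the running time.

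First I would set up the recurrence. Let $T(n)$ denote the running time of the recursive procedure on an $n$-vertex DAG. In each call we make two recursive calls on instances of size $\lceil n/2 \rceil$ and $\lfloor n/2 \rfloor$, plus the cost of the non-recursive work: two path-based matrix products on $n \times n$ matrices (or appropriately padded to power-of-two size), plus the $\bigo(n^2)$ cost of the two \KwRecovery{} calls, plus $\bigo(n^2)$ for assembling the block matrices, reading off $\mathbf{B}$ from the adjacency matrix, and filling in the zero block. The dominant non-recursive term is therefore the cost of a path-based matrix product. By the construction in this section, a path-based matrix product reduces to $\KwEncode$ and $\KwDecode$ (each $\bigo(n^2 \log n)$ since each of the $n^2$ entries is a bitword of length $8k = \bigo(\log n)$) plus one Boolean matrix multiplication performed bit-by-bit: there are $8k = \bigo(\log n)$ bit positions, and each Boolean matrix product of $n \times n$ $0/1$-matrices costs $\bigo(n^\omega)$, for a total of $\bigo(n^\omega \log n)$ per path-based product. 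Hence the non-recursive work per call is $\bigo(n^\omega \log n)$, giving $T(n) \le 2\,T(n/2) + \bigo(n^\omega \log n)$.

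Next I would solve this recurrence. Since $\omega \ge 2 > 1$, we have $2 \cdot (n/2)^\omega = 2^{1-\omega} n^\omega \le \tfrac12 n^\omega$, so the $n^\omega$ part is geometrically decreasing down the recursion tree; the extra $\log n$ factor is at most $\log n$ at every level, so summing over the $\bigo(\log n)$ levels of recursion still yields a geometric series dominated by its top term. Concretely, $T(n) = \sum_{i \ge 0} 2^i \cdot \bigo\big((n/2^i)^\omega \log(n/2^i)\big) \le \bigo(n^\omega \log n) \sum_{i \ge 0} 2^{i(1-\omega)} = \bigo(n^\omega \log n)$, since the geometric sum converges. (One may also invoke the Master Theorem: with $a = 2$, $b = 2$, $\log_b a = 1 < \omega$, the work is dominated by the root, and the $\log n$ factor in the combine step is absorbed into the stated bound.) This establishes $T(n) = \bigo(n^\omega \log n)$ as claimed.

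I do not expect a serious obstacle here: the recursion structure and the per-level cost are both transparent once Lemma~\ref{lem:howtoclosure} and the encode/decode machinery are in hand. The one point that deserves a careful sentence is why performing Boolean matrix multiplication "coordinate by coordinate" over bitwords costs only $\bigo(n^\omega \log n)$ rather than more — namely that the bitword length is $\Theta(\log n)$ and that for each of these $\Theta(\log n)$ bit positions we run a single $\bigo(n^\omega)$ Boolean matrix multiplication — together with the remark (already made in the text) that we must perform the right multiplication $\mathbf{B} \circ \closure{C}_R$ before the left one $\closure{A}_L \circ (\cdot)$, so that the $\pi$-projection is consistently the rightmost-edge version as required by Corollary~\ref{cor:mul1} and Corollary~\ref{cor:mul2}. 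With padding to the next power of two (which at most doubles $n$ and hence changes the bound only by a constant factor), the recursion is well-defined down to the base case $n = 1$, where the closure is the trivial $1 \times 1$ matrix $[\top]$.
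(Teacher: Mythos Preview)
Your proposal is correct and follows essentially the same approach as the paper: set up the recurrence from the divide-and-conquer in Lemma~\ref{lem:howtoclosure}, observe that a path-based matrix product costs $\bigo(\log n)$ Boolean matrix multiplications, and solve the resulting geometric recurrence.

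One small technical slip worth tightening: the bitword length $8k$ is fixed by the \emph{original} graph size $n_0$ (edge identifiers must be globally unique across all recursive calls), so the non-recursive cost at a subproblem of size $n$ is $\bigo(n^{\omega}\log n_0)$, not $\bigo(n^{\omega}\log n)$. Your expansion $\sum_i 2^i\,\bigo((n/2^i)^{\omega}\log(n/2^i))$ therefore \emph{under}estimates the true cost at deeper levels. The paper writes the recurrence as $T(n) = T(\lfloor n/2\rfloor) + T(\lceil n/2\rceil) + \bigo(n^{\omega}\log n_0)$ with the $\log n_0$ factor treated as a constant throughout the recursion; your geometric-series argument goes through unchanged with this correction, and the final bound $\bigo(n_0^{\omega}\log n_0)$ is the same.
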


\begin{proof}
Algorithm~\ref{alg:mult} computes the path-based matrix product of matrices with every dimension bounded by $n$, if the initial graph size was $n_0$, in time $\bigo({n}^{\omega} \log n_0)$, as it needs to compute $\bigo(\log n_0)$ Boolean matrix products, one for each coordinate of the stored bitwords. Closures are computed in time $\bigo(n^2)$.
The recursion that captures the runtime of Algorithm~\ref{alg:reachability} is thus given by the formula
$T(n) =  T(\lfloor n/2 \rfloor) + T(\lceil n/2 \rceil) + \bigo({n}^{\omega} \log n_0)$
which is satisfied by setting $T(n) = \bigo({n}^{\omega} \log n_0)$. The bound follows.
\end{proof}

\begin{algorithm}
\caption{$2$-reachability closure for DAGs}
\label{alg:reachability}
\KwIn{Matrix $G$ of dimension $N \times N$, $G$ is a DAG with topological order $1,2,\ldots,n$.}
\KwOut{$2$-reachability closure of $G$.}
\Def{$\KwClosureDAG(G)$}
{
	\uIf{$N == 1$}
    {
    	\Return{$\begin{bmatrix} \top \end{bmatrix}$}\;
    }
    $N' \gets \lfloor N/2 \rfloor$\;
    $A \gets G[1\ ..\ N'][1\ ..\ N']$\;
    $B \gets G[1\ ..\ N'][(N'+1)\ ..\ N]$, $0$'s replaced with $\bot$ and $1$'s with edge labels\;
    $C \gets G[(N'+1)\ ..\ N][(N'+1)\ ..\ N]$\;
    $\mathit{A'} \gets \KwRecovery(\KwClosureDAG(A),\KwLeft)$\;
    $\mathit{C'} \gets \KwRecovery(\KwClosureDAG(C),\KwRight)$\;
    \Return{$\begin{bmatrix} \mathit{A'} & \KwMul(\mathit{A'},\KwMul(B,\mathit{C'})) \\ 0 & \mathit{C'} \end{bmatrix} $}\;
}
\end{algorithm}

\section{All-pairs $2$-reachability in strongly connected graphs}
\label{sec:SCC}
In this section we focus on strongly connected graphs.
In this case reachability is simple: for any pair of vertices $( u, v ) \in V \times V$ we have $\reach{u}{v}$ in G. But in case that $\ntedp{u}{v}$ in $G$, finding a separating edge that appears in all paths from $u$ to $v$ in $G$ can still be a challenge. We show that we can report such an edge in constant time after $\bigo(n^{\omega})$ preprocessing. The main result of this section is the following theorem.

\begin{theorem}
\label{thm:scc}
The $2$-reachability closure of a
strongly connected graph can be computed  in time $\bigo(n^{\omega})$.
\end{theorem}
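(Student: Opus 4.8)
The plan is to reduce all-pairs $2$-edge-reachability in a strongly connected $G$ to ordinary reachability (transitive closure) in two auxiliary digraphs. Fix an arbitrary vertex $s$; since $G$ is strongly connected, $s$ reaches and is reached by every other vertex, so the flowgraph $G_s$ of $G$ rooted at $s$ and the flowgraph $G_s^R$ of the reverse graph both span $V$. Compute the dominator tree $D$ of $G_s$ and the dominator tree $D^R$ of $G_s^R$, and from them build two auxiliary digraphs $\auxa$ and $\auxb$ on $\bigo(n)$ vertices whose reachability relation encodes $2$-edge-reachability in $G$ together with a witnessing separating edge. The construction should exploit the known characterization of the strong bridges of $G$ (they are exactly the bridges of the flowgraph $G_s$ and the bridges of the flowgraph $G_s^R$), together with the fact recalled in the introduction that, for a fixed source, all edges separating a pair lie on one root-to-vertex path of the dominator tree and occur there in the same order as on every path in $G$. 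Concretely, $\auxa$ should record, for each vertex $w$, the strong bridges that must be crossed ``on the way out of $s$ to $w$'', and $\auxb$ the strong bridges crossed ``on the way into $s$ from $w$''; then $u$ and $v$ admit no two edge-disjoint paths precisely when some strong bridge lies strictly after $u$ and strictly before $v$ in the linear order that $D$ and $D^R$ jointly induce on the strong bridges, and that bridge is a separating edge for $u,v$.

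Granting such a construction, the algorithm is: (1) compute $D$ and $D^R$, and hence $\auxa$ and $\auxb$, in $\bigo(m)=\bigo(n^2)$ time; (2) compute the transitive closures of $\auxa$ and of $\auxb$, which are reachability matrices of $\bigo(n)$-vertex digraphs and hence take $\bigo(n^{\omega})$ time via Boolean matrix multiplication; (3) combine the two transitive-closure matrices, restricted to the original vertices, with the dominator-tree structure to read off $\closure{G}[u,v]$ for all pairs in $\bigo(n^2)$ total time. In step~(3) we output $\top$ when $\tedp{u}{v}$ (no strong bridge lies between $u$ and $v$) and otherwise the edge associated with the separating bridge-vertex — the one closest to $u$ if we want the first separating edge (via $\auxa$), closest to $v$ if we want the last (via $\auxb$), which is precisely the reason for maintaining two auxiliary graphs; the value $\bot$ never arises since $G$ is strongly connected. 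Note that step~(3) needs no ``matrix-witness'' machinery: the separating edge is determined structurally by which bridge-vertex is reached, not by exhibiting an actual path, so the $\bigo(n^{\omega})$ bound for plain transitive closure suffices. Summing up, the total running time is $\bigo(n^{\omega})$.

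The main obstacle is the correctness of the reduction: proving that reachability in $\auxa$ and $\auxb$ coincides exactly with $2$-edge-reachability in $G$, and that the indicated bridge-vertex yields a genuine separating edge (the first, resp.\ the last, one when chosen accordingly). This will require a case analysis on the positions of $u$ and $v$ relative to the two dominator trees — in particular whether the separating edge is a bridge of $G_s$ or of $G_s^R$, and the boundary cases where $u$ or $v$ is itself an endpoint of that bridge (the analogues of the side conditions $u\neq v_i$ and $v_i\neq v$ appearing in Lemma~\ref{lem:first_edge}) — together with a proof that the order in which $D$ and $D^R$ list the strong bridges matches the order in which they appear along every $u\to v$ path of $G$. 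One should also verify explicitly that strong connectivity is essential: a single edge failure in a digraph that is not strongly connected can make $v$ unreachable from $u$ rather than merely placing them in different $2$-edge-connected pieces, which is exactly why the acyclic algorithm of Section~\ref{sec:DAGs} must be combined with this one for general digraphs in Section~\ref{sec:generalgraphs}.
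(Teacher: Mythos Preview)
Your high-level plan---fix a root $s$, build two auxiliary digraphs from the dominator trees of $G_s$ and $G_s^R$, compute their transitive closures in $\bigo(n^{\omega})$, and read off $\closure{G}$ in $\bigo(n^2)$---matches the paper's approach exactly. What is missing is the one idea that makes the reduction work, and the intuition you offer in its place (a ``linear order that $D$ and $D^R$ jointly induce on the strong bridges'', with the witness determined by ``which bridge-vertex is reached'') is not the mechanism the paper uses and would not obviously yield a correct construction.

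The paper's key step is Lemma~\ref{lem:EuOrEv}: for every pair $(u,v)$, if \emph{any} edge separates $u$ from $v$, then already one of two specific candidate edges does---namely $e_v=(d(r_v),r_v)$, the bridge entering $v$'s tree in the bridge decomposition of $D$, or $e_u=(r_u^R,d^R(r_u^R))$, the corresponding bridge in $D^R$. Crucially, $e_v$ depends only on $v$ and $e_u$ only on $u$. This is what permits a \emph{single} auxiliary graph $\auxa$ on the original vertex set (no extra ``bridge vertices'') to serve all pairs at once: $\auxa$ is built so that $\reach{u}{v}$ in $\auxa$ if and only if $\reach{u}{v}$ in $G\setminus e_v$ (Lemma~\ref{lem:scqueries}), and symmetrically for $\auxb$. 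The witness for a failing pair is then simply $e_v$ or $e_u$, precomputed from the bridge decomposition and stored in an array---there is no need to inspect which intermediate vertex a path hits. Your proposal never isolates this ``two candidates suffice'' lemma, and without it there is no clear reason a single reachability matrix should encode, for each $v$ simultaneously, reachability in the graph with $v$'s own bridge removed. The concrete construction of $\auxa$ (Definition~\ref{def:auxconstruction}: delete all bridges of $G_s$ and, for each bridge $(p,q)$, add shortcut edges $(p,y)$ for every edge leaving $D(q)$) and its correctness proof (Lemmas~\ref{lem:nothing_into_the_slice}--\ref{lem:scqueries}) are the substance of the section; your proposal defers all of this to ``the main obstacle'' without indicating how it would be overcome.
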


Our construction is based on the notion of auxiliary graph and it will be given in
Section \ref{sec:overview}.
A detailed implementation will be provided in Algorithm~\ref{alg:reachability-scc}. Its running time will be analyzed in Lemma~\ref{lem:aux_time} and its correctness hinges on Lemma~\ref{lem:scqueries}.

\subsection{Reduction to two single-source problems}
\label{sec:sccpreliminaries}
Let $G=(V,E)$ be a strongly connected digraph.
Let $s$ be a fixed but arbitrary vertex of $G$.
The proof of the following lemma is immediate.

\begin{lemma}
\label{lem:sOnOneSide}
For any pair of vertices $u$ and $v$: If there is an edge $e \in E(G)$ such that $\nreach{u}{v}$ in ${G \setminus e}$, then either $\nreach{u}{s}$ in ${G \setminus e}$ or $\nreach{s}{v}$ in ${G \setminus e}$.
\end{lemma}

Let $\mathcal{P}_{u,s}$ be the family of all paths from $u$ to $s$ and let $\mathcal{P}_{s,v}$ be the family of all paths from $s$ to $v$.
We denote by $e_u$ the first edge on all paths in $\mathcal{P}_{u,s}$, and by $e_v$ the last edge on all paths in $\mathcal{P}_{s,v}$.
Note that there might be no edge that is on all paths of $\mathcal{P}_{u,s}$: in this case we say that $e_u$ does not exist. If there are several edges on all paths in $\mathcal{P}_{u,s}$, then they are totally ordered, so it is clear which is the \emph{first} edge (similarly for $e_v$ and $\mathcal{P}_{s,v}$).
We now show that in order to search for a separation witness for $( u, v )$, it suffices to focus on $e_u$ and $e_v$.

\begin{lemma}
\label{lem:EuOrEv}
If there is some $e$ such that $\nreach{u}{v}$ in ${G \setminus e}$, then at least one of the following statements is true:
\begin{itemize}
	\item $e_u$ exists and $\nreach{u}{v}$ in ${G \setminus e_u}$.
	\item $e_v$ exists and $\nreach{u}{v}$ in ${G \setminus e_v}$.
\end{itemize}
\end{lemma}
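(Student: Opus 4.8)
My plan is to assume that some edge $e$ separates $u$ from $v$ in $G$, i.e., $\nreach{u}{v}$ in $G\setminus e$, and show that then $e_u$ or $e_v$ must do the same job. By Lemma~\ref{lem:sOnOneSide}, applied to this $e$, we know that either $\nreach{u}{s}$ in $G\setminus e$ or $\nreach{s}{v}$ in $G\setminus e$. By the symmetry of the two statements in the lemma (one obtained from the other by reversing all edges and swapping the roles of $u$ and $v$), it suffices to treat the first case: assume $\nreach{u}{s}$ in $G\setminus e$. I will show that then $e_u$ exists and $\nreach{u}{v}$ in $G\setminus e_u$.

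The first step is to argue that $e_u$ exists and in fact $e=e_u$ in a suitable sense. Since $\nreach{u}{s}$ in $G\setminus e$, the edge $e$ lies on every path from $u$ to $s$, so $e$ is one of the edges common to all paths in $\mathcal{P}_{u,s}$; in particular this family of common edges is non-empty, so $e_u$ (its first element in the natural total order of these edges along $u\leadsto s$ paths) exists. It is not necessarily true that $e=e_u$, but I claim removing $e_u$ already disconnects $u$ from $v$. Suppose for contradiction that there is a path $\pi$ from $u$ to $v$ in $G\setminus e_u$. Since $G$ is strongly connected, extend $\pi$ by any path $\sigma$ from $v$ back to $s$, giving a walk from $u$ to $s$. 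This walk contains a simple path $\rho$ from $u$ to $s$, and $\rho$ must contain $e$ (as $e$ separates $u$ from $s$) and in fact must contain $e_u$ (as $e_u$ is on every $u\leadsto s$ path). So $e_u$ lies on $\rho$, which is a sub-walk of $\pi$ followed by $\sigma$; since $\pi$ avoids $e_u$, the edge $e_u$ must lie on $\sigma$, i.e., on the portion from $v$ to $s$.

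This is the delicate point, and I expect it to be the main obstacle: I need to exploit that $e_u$ is the \emph{first} common edge, not just any common edge, to derive a contradiction. Here is the idea. Since $e_u=(x,y)$ is the first edge common to all $u\leadsto s$ paths, there is a path from $u$ to $x$ that avoids $e_u$ entirely, and hence avoids all common edges of $\mathcal{P}_{u,s}$ — so in particular $u\neq x$ unless this path is trivial, and more importantly $x$ is reachable from $u$ by a path that can then be continued to $y$ via $e_u$. The contradiction will come from combining the path $u\leadsto x$ (avoiding $e_u$), the fact that $e_u$ appears on $\sigma$ between $v$ and $s$, and the path $\pi$ from $u$ to $v$ avoiding $e_u$: concatenating $\pi$ with the prefix of $\sigma$ up to the tail $x$ of $e_u$ gives a walk, hence a simple path, from $u$ to $x$ avoiding $e_u$ — that part is fine — but then continuing from $x$ through $e_u$ and completing this to a path to $s$ that uses $e_u$ is consistent, so I instead need to route \emph{around} $e_u$: using strong connectivity, take a path from $y$ (head of $e_u$) to $x$ that, composed with $e_u$, would create a cycle; the real contradiction is that the path $u \leadsto v$ (via $\pi$) followed by $v\leadsto x$ (prefix of $\sigma$) avoids $e_u$, reaching $x$, and since $e_u$ being the \emph{first} common edge means there is a $u\leadsto x$ path avoiding $e_u$, one then goes $x \to y$ by $e_u$ — but we need a path to $s$ avoiding $e$; so instead I will use that $e$ itself is a common edge appearing no earlier than $e_u$, contradicting that $\pi\cdot\sigma$-derived path reaches $s$ avoiding $e_u$. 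I will make this precise by carefully choosing $\sigma$ and invoking strong connectivity to reroute, mirroring the argument in the proof of Lemma~\ref{lem:first_edge}. Once the contradiction is obtained, $\nreach{u}{v}$ in $G\setminus e_u$ follows, and the symmetric case gives $e_v$, completing the proof.
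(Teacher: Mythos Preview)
Your setup is correct and matches the paper: invoke Lemma~\ref{lem:sOnOneSide}, reduce by symmetry to the case $\nreach{u}{s}$ in $G\setminus e$, and conclude that $e$ lies on every $u$-to-$s$ path so that $e_u$ exists. The gap is in the contradiction argument. You assume a path $\pi$ from $u$ to $v$ avoiding $e_u$ and then extend by an arbitrary $\sigma:v\to s$; you correctly deduce that $e_u$ must appear on $\sigma$, but from there the argument stalls. Knowing only that $e_u\in\sigma$ gives nothing: the concatenation $\pi\cdot\sigma$ still uses $e_u$, and reaching $\text{tail}(e_u)$ while avoiding $e_u$ is not a contradiction (that is exactly what ``first common edge'' guarantees is possible). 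The final paragraph circles around the right idea without landing on it.

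The missing observation is that you never use the hypothesis $\nreach{u}{v}$ in $G\setminus e$ on $\pi$ itself: since $e$ separates $u$ from $v$, the path $\pi$ must contain $e$. This is the whole point, and once you see it the detour through $\sigma$ is unnecessary. The paper's argument is then one line: both $e_u$ and $e$ are common to all $u$-to-$s$ paths, and $e_u$ is the \emph{first} such edge, so $e_u$ lies on every path from $u$ to (the tail of) $e$; since every $u$-to-$v$ path passes through $e$, it also passes through $e_u$. If you prefer to keep your contradiction framing: the prefix of $\pi$ up to $\text{tail}(e)$ is a $u$-to-$\text{tail}(e)$ path avoiding $e_u$; concatenate it with $e$ and then with the suffix after $e$ of any fixed simple $u$-to-$s$ path (this suffix avoids $e_u$ because on a simple path $e_u$ occurs before $e$). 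The result is a $u$-to-$s$ walk avoiding $e_u$, contradicting the definition of $e_u$. Drop $\sigma$ entirely.
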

\begin{proof} If $e = e_u$ or $e = e_v$, the claim is trivial.
Otherwise, by Lemma~\ref{lem:sOnOneSide}, we know that $\nreach{u}{s}$ or $\nreach{s}{v}$ in ${G \setminus e}$. Let us assume that $\nreach{u}{s}$ (See Figure~\ref{fig:edge_separator}).
So $e$ lies not only on any path from $u$ to $v$ but also on any path from $u$ to $s$. As $e_u$ is the first common edge of every path from $u$ to $s$, $e_u$ also lies on every path from $u$ to $e$. As all paths from $u$ to $v$ have to go through $e$, they also have to go through $e_u$ and hence $\nreach{u}{v}$ in ${G \setminus e_u}$.
If $\nreach{s}{v}$ in ${G \setminus e}$, we can show that $\nreach{u}{v}$ in ${G \setminus e_v}$ by the same extremality argument for $e_v$.
\end{proof}

\begin{figure}[t]
\begin{minipage}{1\textwidth}
\centering\includegraphics[scale=0.9]{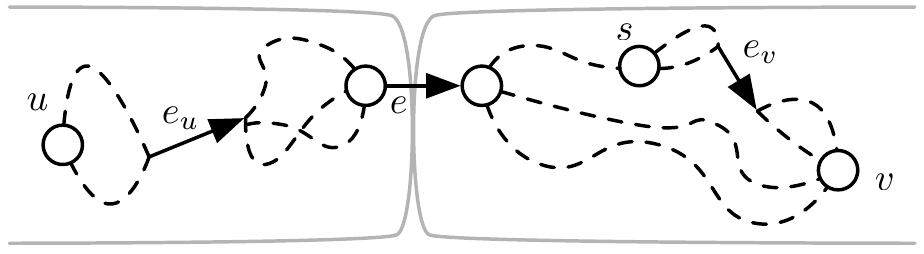}
\end{minipage}
\caption{Illustration of the first case of Lemma~\ref{lem:EuOrEv}.}
\label{fig:edge_separator}
\end{figure}

Hence, in order to check whether there is an edge that separates $u$ from $v$ in $G$, it suffices to look at the reachability information in ${G \setminus e_u}$ (a graph which does not depend on $u$) and at the reachability information in ${G \setminus e_v}$ (a graph which does not depend on $v$). Unfortunately, this is not enough to derive an efficient algorithm, since we would have still to look at as many as $2n$ different graphs (as we explain later, and as it was first shown in \cite{Italiano2012}, there can be at most $2n-2$ edges whose removal can affect the strong connectivity of the graph). As a result, computing the transitive closures of all those graphs would require $\bigo(n^{\omega+1})$ time.
The key insight to reduce the running time to $\bigo(n^{\omega})$ is to construct an auxiliary graph $\auxa$, whose reachability is identical to ${G \setminus e_v}$ for any query pair $( u, v )$, and a second auxiliary graph $\auxb$ whose reachability is identical to ${G \setminus e_u}$ for any query pair $( u, v )$. Note that the edge that is missing from the graph
depends always on one of the two endpoints of the reachability query. As a consequence, we have to consider only $n^2$ and not $n^3$ different queries for $\auxa$ and $\auxb$.

\subsection{Strong bridges and dominator tree decomposition}
\label{sec:dominators}
Before we construct these auxiliary graphs, we need some more terminology and prior results.

\paragraph{Flow graphs, dominators, and bridges.}
A \emph{flow graph} $G_s = (V, E, s)$ is a digraph with a distinguished \emph{start vertex} $s$.
We denote by $G^{R}_s = (V, E^R, s)$ the reverse flow graph of $G_s$; the graph resulted by reversing the direction of all edges $e\in E$.
Vertex $u$ is a \emph{dominator} of a vertex $v$ ($u$ \emph{dominates} $v$) if every path from $s$ to $v$ in $G_s$ contains $u$; $u$ is a \emph{proper dominator} of $v$ if $u$ dominates $v$ and $u \not= v$.
The dominator relation is reflexive and transitive. Its transitive reduction is a rooted tree, the \emph{dominator tree} $D$: $u$ dominates $v$ if and only if $u$ is an ancestor of $v$ in $D$, see Figure~\ref{fig:dominators_example} and Figure~\ref{fig:dominators_rev_example} for examples.
If $v \not= s$, the parent of $v$ in $D$, denoted by $d(v)$, is the \emph{immediate dominator} of $v$: it is the unique proper dominator of $v$ that is dominated by all proper dominators of $v$.
For any vertex $v$, we let $D(v)$ denote the set of descendants of $v$ in $D$, i.e., the vertices dominated by $v$.
Lengauer and Tarjan~\cite{domin:lt} presented an algorithm for computing dominators in  $\bigo(m \alpha(m,n))$ time for a flow graph with $n$ vertices and $m$ edges, where $\alpha$ is a functional inverse of Ackermann's function~\cite{dsu:tarjan}.
Dominators can be computed in linear time
\cite{domin:ahlt,dominators:bgkrtw,dominators:Fraczak2013}.
An edge $(x,y)$ is a \emph{bridge} of the flow graph $G_s$ if all paths from $s$ to $y$ include $(x,y)$.

\begin{figure}[t]
\begin{minipage}{.495\textwidth}
\centering\includegraphics[scale=1.3]{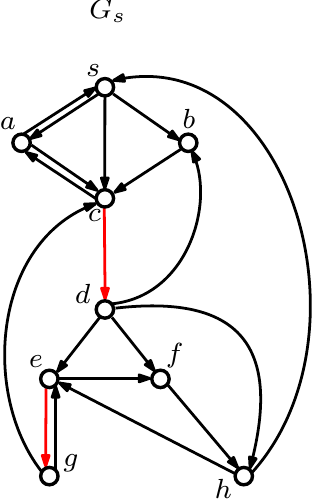}
\end{minipage}
\begin{minipage}{.495\textwidth}
\centering\includegraphics[scale=1.3]{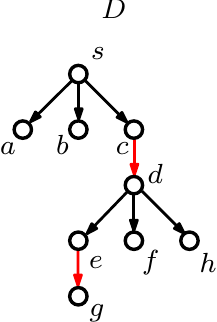}
\end{minipage}
\caption{A flow graph and its dominator tree. Edges marked in red are bridges.}
\label{fig:dominators_example}
\end{figure}

\paragraph{Strong bridges.}
Let $G=(V,E)$ be a strongly connected digraph.
An edge $e$ of $G$ is a \emph{strong bridge} if $G\setminus e$ is no longer
strongly connected.
Let $s$ be an
arbitrary start vertex of $G$.
Since $G$ is strongly connected, all vertices are reachable from $s$ and reach $s$, so we can view both $G$ and $G^R$ as flow graphs with start vertex $s$, denoted respectively by $G_s$ and $G_s^R$.
\begin{property}
\label{property:strong-bridge} \emph{(\cite{Italiano2012})}
Let $s$ be an arbitrary start vertex of $G$. An edge $e=(x,y)$ is a strong bridge of $G$ if and only if it is a bridge of $G_s$
or a bridge of $G_s^R$
(or both).
\end{property}

As a consequence of Property~\ref{property:strong-bridge},
all the strong bridges of the digraph $G$ can be obtained from the bridges of the flow graphs $G_s$ and $G_s^R$, and thus there can be at most $(2n-2)$ strong bridges in a digraph $G$.
Using the linear time algorithms for computing dominators, we can thus compute all strong bridges of $G$ in time $\bigo(m+n) \subseteq \bigo(n^\omega)$.
We  use the following lemma from \cite{2ECB} that holds for a flow graph $G_s$ of a strongly connected digraph $G$.

\begin{lemma}
\label{lemma:partition-paths} \emph{(\cite{2ECB})}
Let $G$ be a strongly connected digraph and let $(x,y)$ be a strong bridge of $G$. Also, let $D$ and $D^R$ be the dominator trees of the corresponding flow graphs $G_s$ and $G_s^R$, respectively, for an arbitrary start vertex $s$.
\begin{enumerate}[(a)]
\item Suppose $x=d(y)$. Let $w$ be any vertex that is not a descendant of $y$ in $D$. Then there is a path from $w$ to $x$ in $G$ that does not contain any proper descendant of $y$ in $D$. Moreover, all simple paths in $G$ from $w$ to any descendant of $y$ in $D$ must contain the edge $(d(y),y)$.
\item Suppose $y=d^R(x)$. Let $w$ be any vertex that is not a descendant of $x$ in $D^R$. Then there is a path from $x$ to $w$ in $G$ that does not contain any proper descendant of $x$ in $D^R$. Moreover, all simple paths in $G$ from any descendant of $x$ in $D^R$ to $w$ must contain the edge $(x,d^R(x))$.
\end{enumerate}
\end{lemma}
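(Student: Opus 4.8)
The plan is to prove part (a) directly and then derive part (b) by applying part (a) to the reverse digraph $G^R$, which is again strongly connected. Write $D(y)$ for the subtree of $D$ rooted at $y$; since $x=d(y)$ we have $x\notin D(y)$. The key preliminary step, which I expect to be the main obstacle, is to upgrade the hypothesis ``$x=d(y)$'' to the stronger statement that $(x,y)$ is a \emph{bridge} of the flow graph $G_s$, i.e., that every path from $s$ to $y$ uses the edge $(x,y)$.

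To establish this, recall that by Property~\ref{property:strong-bridge} the strong bridge $(x,y)$ is a bridge of $G_s$ or of $G_s^R$, and suppose for contradiction that it is not a bridge of $G_s$. Then some path from $s$ to $y$ avoids $(x,y)$, and since $x=d(y)$ this path passes through $x$, so its suffix from the last occurrence of $x$ is a path $X$ from $x$ to $y$ avoiding $(x,y)$. On the other hand, since $(x,y)$ is a strong bridge, $G\setminus(x,y)$ is not strongly connected, so $\nreach{u}{v}$ in $G\setminus(x,y)$ for some $u$ and $v$; every path from $u$ to $v$ in $G$ must then use $(x,y)$, and hence has the shape $u\leadsto x\to y\leadsto v$ in which neither the $u$-to-$x$ prefix nor the $y$-to-$v$ suffix uses $(x,y)$. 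Replacing the edge $(x,y)$ inside such a path by $X$ produces a walk from $u$ to $v$ in $G\setminus(x,y)$, contradicting $\nreach{u}{v}$. Hence $(x,y)$ is a bridge of $G_s$.

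Next I would describe the boundary of $D(y)$ inside $G$. Since $y$ dominates every vertex of $D(y)$, every path from $s$ to a vertex of $D(y)$ has $y$ as its first vertex lying in $D(y)$ (a strict prefix ending inside $D(y)\setminus\{y\}$ would already have to contain $y$). Moreover, any vertex $p\notin D(y)$ is reachable from $s$ by a path avoiding $y$, hence avoiding all of $D(y)$; consequently there is no edge of $G$ from $V\setminus D(y)$ into $D(y)\setminus\{y\}$, and, since every path from $s$ to $y$ uses $(x,y)$, the only edge from $V\setminus D(y)$ into $y$ is $(x,y)$. Thus $(x,y)$ is the unique edge of $G$ entering $D(y)$ from outside, and both conclusions of (a) now follow quickly. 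For the ``moreover'' part, a simple path $\pi$ from $w\notin D(y)$ to any $z\in D(y)$ must cross once from $V\setminus D(y)$ into $D(y)$, and that crossing edge is necessarily $(x,y)=(d(y),y)$, so $\pi$ contains $(d(y),y)$. For the path-existence part, take any path $P$ from $w$ to $x$ in $G$ (it exists by strong connectivity) and truncate it at its first visit to $x$; if this prefix met $D(y)$, its crossing edge into $D(y)$ would be $(x,y)$, forcing an occurrence of $x$ strictly before the truncation point, which is impossible, so the truncated path avoids $D(y)$ entirely and in particular avoids all proper descendants of $y$.

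Finally, part (b) follows by applying part (a) to $G^R$ with the strong bridge $(y,x)$, the reverse of $(x,y)$ (a strong bridge of $G^R$ since $G\setminus(x,y)$ is not strongly connected): the hypothesis $y=d^R(x)$ says precisely that $y$ is the immediate dominator of $x$ in the dominator tree $D^R$ of $G_s^R$. Part (a) then gives, for any $w$ that is not a descendant of $x$ in $D^R$, that every simple path in $G$ from any descendant of $x$ in $D^R$ to $w$ contains the edge $(x,d^R(x))$, together with a path from $y$ to $w$ in $G$ that avoids all proper descendants of $x$ in $D^R$; prepending the edge $(x,y)$, whose endpoints $x$ and $y=d^R(x)$ are not proper descendants of $x$ in $D^R$, and extracting a simple path then yields the required path from $x$ to $w$.
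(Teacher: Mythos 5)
The paper does not actually prove this lemma: it is imported verbatim from~\cite{2ECB} as a black box, so there is no in-paper argument to compare against, and your proof has to be judged on its own. It is correct and self-contained. The two genuinely nontrivial points are exactly the ones you isolate. First, the hypothesis ``$x=d(y)$ and $(x,y)$ is a strong bridge'' must be upgraded to ``$(x,y)$ is a bridge of $G_s$'', and your argument for this is sound: if some $s$-to-$y$ path avoided the edge $(x,y)$, then since $x$ dominates $y$ its suffix from the last visit to $x$ would be an $x$-to-$y$ path $X$ in $G\setminus(x,y)$, and rerouting every use of $(x,y)$ through $X$ would keep $G\setminus(x,y)$ strongly connected, contradicting the strong-bridge hypothesis. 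Second, once $(x,y)$ is known to be a bridge of $G_s$, your boundary analysis correctly shows that $(x,y)$ is the \emph{unique} edge of $G$ entering $D(y)$ from outside, after which both conclusions of (a) follow by a first-crossing argument and (b) follows by applying (a) to $G^R$ and prepending the edge $(x,y)$. One cosmetic remark: when you write that a $u$-to-$v$ path ``has the shape $u\leadsto x\to y\leadsto v$ in which neither the prefix nor the suffix uses $(x,y)$'', you should take the prefix before the \emph{first} use and the suffix after the \emph{last} use of $(x,y)$ (or simply substitute $X$ for every occurrence of $(x,y)$), since the paper's notion of path permits repeated vertices and edges; this does not affect the validity of the argument.
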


\paragraph{Bridge decomposition.}
After deleting from the dominator trees $D$ and $D^R$ respectively the bridges of $G_s$ and $G_s^R$, we obtain the \emph{bridge decomposition}  of $D$ and $D^R$ into forests $\mathcal{D}$ and $\mathcal{D}^R$.
Throughout this section, we denote by $T_v$ (resp., $T_v^R$) the tree in $\mathcal{D}$ (resp., $\mathcal{D}^R$) containing vertex $v$, and by $r_v$ (resp., $r^R_v$) the root of $T_v$ (resp., $T_v^R$).
Given a digraph $G=(V,E)$, and a set of vertices $S \subseteq V$, we denote by $G[S]$ the subgraph induced by $S$.
In particular, $G[D(r)]$ denotes the subgraph induced by the descendants of vertex $r$ in $D$.

\begin{figure}[t]
\begin{minipage}{.495\textwidth}
\centering\includegraphics[scale=1.3]{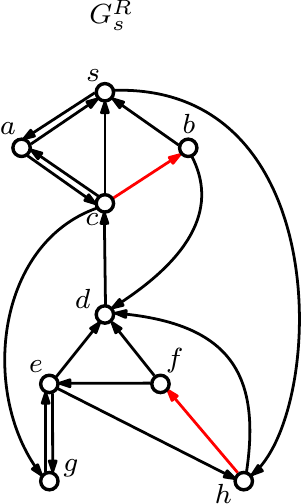}
\end{minipage}
\begin{minipage}{.495\textwidth}
\centering\includegraphics[scale=1.3]{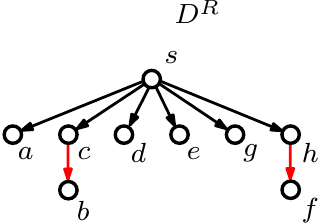}
\end{minipage}
\caption{A reverse flow graph and its dominator tree. Edges marked in red are bridges.}
\label{fig:dominators_rev_example}
\end{figure}

\subsection{Overview of the algorithm and construction of auxiliary graphs}
\label{sec:overview}
The high-level idea of our algorithm is to compute two auxiliary graphs $\auxa$ and $\auxb$ from $G$ and $G^R$, respectively, with the following property:
Given two vertices $u$ and $v$, we have that $\tedp{u}{v}$ in $G$ if and only $\reach{u}{v}$ in $\auxa$ and $\reach{v}{u}$ in $\auxb$.
To construct the auxiliary graphs $\auxa$ and $\auxb$, we use the bridge decompositions of $D$ and $D^R$, respectively.

The two extremal edges $e_u$ and $e_v$,
defined in Section~\ref{sec:sccpreliminaries}, can be also
defined in terms of the bridge decompositions. In particular,
$e_v$ is the bridge entering the tree $T_v$ of the bridge decomposition of $D$, so $e_v = (d(r_v), r_v)$, and $e_u$ is the reverse bridge entering the tree $D^R_u$ of the bridge decomposition of $D^R$, so $e_u = (r_u^R, d^R(r_u^R))$.
Hence if there exists a path from $u$ to $v$ avoiding each of the strong bridges $e_v$ and $e_u$, then
$\tedp{u}{v}$ in $G$.
By Lemma~\ref{lem:EuOrEv},
it is enough if $\auxa$ models the reachability of ${G \setminus e_v}$ and $\auxb$ the reachability of ${G \setminus e_u}$.
So
$\auxa$ is responsible for answering whether $u$ has a path to $v$ avoiding $e_v$, while
$\auxb$ is responsible for answering whether $u$ has a path to $v$ avoiding $e_u$.
Then, if any of the reachability queries in $\auxa$ and $\auxb$ returns false, we immediately have an edge that appears in all paths from $u$ to $v$.

We next show to compute the auxiliary graphs $\auxa$ and $\auxb$ in $\bigo(n^2)$ time. In particular, the auxiliary graph $\auxa = (V,E')$ of the flow graph $G_s = (V,E,s)$ is constructed as follows.
Initially, $E' = E \setminus BR$, where $BR$ is the set of bridges of $G_s$.
For all bridges $(p,q)$ of $G_s$ do the following:
For each edge $(x,y) \in E$ such that $x \in D(q), y \notin D(q)$, we add the edge $(p,y)$ in $E'$, i.e., we set $E' = E' \cup (p,y)$.
A detailed implementation is provided in Algorithm~\ref{alg:reachability-scc}.
Together with graph $\auxa$, the algorithm outputs an array of edges (``witnesses'') $W$,
such that for each vertex $v \not= s$, $W[v]=(d(r_v),r_v)$ is a candidate separating edge for $v$ and any other vertex.
The computation of $\auxb$ is completely analogous.

\SetKwFunction{KwChildren}{children}
\SetKwFunction{KwTransitiveClosure}{transitiveClosure}
\SetKwFunction{KwauxiliaryGraph}{auxiliaryGraph}

\begin{algorithm}
	\caption{$2$-reachability closure in strongly connected graphs}
	\label{alg:reachability-scc}
	\KwIn{Strongly connected graph $G$ on $N$ vertices.}
	\KwOut{$2$-reachability closure of $G$.}
	\Def{$\KwClosureSCC(G)$}
	{
		$s \gets $ arbitrary vertex of $G$\;
		$H,W \gets \KwauxiliaryGraph(G,s)$\;
		$H',W' \gets \KwauxiliaryGraph(G^{R},s)$\;
		$OUT \gets$ $N \times N$ matrix\;
		\ForAll{$OUT[i][j]$}
		{
				\uIf{$(i,j) \not\in H$}
				{
					$OUT[i][j] \gets W[j]$\;
				}
				\uElseIf{$(j,i) \not\in H'$}
				{
					$OUT[i][j] \gets W'[i]^R$\;
				}
				\uElse
				{
					$OUT[i][j] \gets \top$\;
				}
		}
		\Return{OUT}\;
	}
	\Def{$\KwauxiliaryGraph(G,s)$}
	{
		$H \gets G$\;
		$W,R \gets $ arrays of size $N$\;
		$D \gets$ dominator tree of $G_s$\;
		\For{\emph{tree $T$, rooted at $r$, in bottom-up order of bridge decomposition of }$D$}
		{
			$R[r] \gets \bigcup \big\{R[r'] : {d(r') \in V(T)}\big\}$ \quad\quad\Comment{$r'$ is root of children component of $T$}
			$D[r] \gets V(T) \cup \bigcup \big\{D[r'] : {d(r') \in V(T)}\big\}$\;
			\For{$(x,y) \in V(T) \times V(G)$}
			{
				\uIf{$(x,y) \in G$}
				{
					$R[r] \gets R[r] \cup \{y\}$\;
				}
			}
			$R[r] \gets R[r] \setminus D[r]$\;
			\uIf{$s \not\in T$}
			{
				$p \gets d(r)$ \quad\quad\Comment{ $(p,r)$ is a bridge connecting parent of $T$ to $T$}
				\For{$y \in R[r]$}
				{
					$H \gets H \cup (p,y)$\;
				}
				\For{$x \in V(T)$}
				{
					$W[x] \gets (p,r)$
				}
				$H \gets H \setminus (p,r)$\;
			}
		}
		\Return{$\KwTransitiveClosure(H)$,W}\;
	}
\end{algorithm}

Once $\auxa$ and $\auxb$ are computed, their
transitive closure can be computed in
$\bigo(n^\omega)$ time, after which reachability queries can be answered in constant time.
Thus, we can preprocess a strongly connected digraph $G$ in total time $\bigo(n^\omega)$ and answer $2$-reachability queries in constant time, as claimed by Theorem~\ref{thm:scc}.

\begin{figure}
\begin{minipage}{.24\textwidth}
\centering\includegraphics[scale=1]{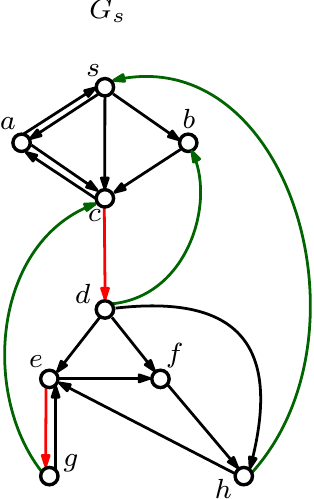}
\end{minipage}
\begin{minipage}{.24\textwidth}
\centering\includegraphics[scale=1]{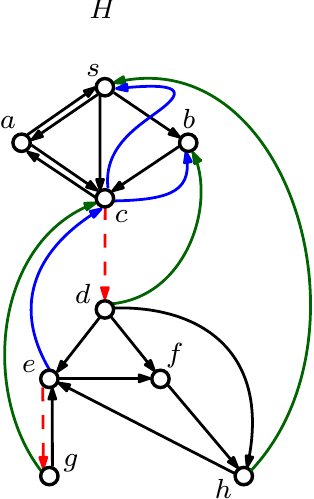}
\end{minipage}
\begin{minipage}{.24\textwidth}
\centering\includegraphics[scale=1]{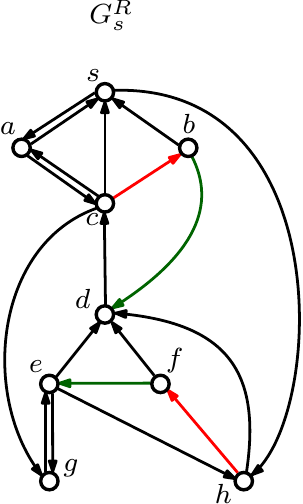}
\end{minipage}
\begin{minipage}{.24\textwidth}
\centering\includegraphics[scale=1]{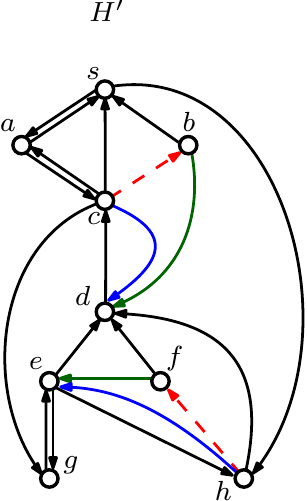}
\end{minipage}
\caption{Auxiliary graphs $\auxa$ and $\auxb$ which are derived from $G_s$ and $G_s^R$, respectively. The deleted edges, the bridges of $G_s$ and $G_s^R$, are shown in red, the newly added edges are shown in blue. The blue edges are drawn along the green edges from $G_s$ and $G_s^R$ which are the reason for their insertion.
Here we see, that for example $b$ is $2$-reachable from $e$, since there are two (edge and vertex) disjoint paths $(e, g, c, d, b)$ and $(e, f, h, s, b)$ in $G$.
In $\auxa$, $e$ reaches $b$ through the path $(e, c, b)$, and in $\auxb$, $b$ reaches $e$ through the path $(b, s, h, e)$. We also see, that edge $(c,d)$ separates $a$ and $f$ in $G$, and even though $f$ reaches $a$ in $\auxb$ through the path $(f, d, c, a)$, $a$ does not reach $f$ in $\auxa$.
To illustrate why both $\auxa$ and $\auxb$ are relevant in Lemma~\ref{lem:scqueries}, consider the following example:
 vertex $c$ is unreachable from $b$ in $G\setminus(b,c)$, which we also detect as there is no $c$-$b$ path in $\auxb$ (even though there is a $b$-$c$ path in $\auxa$).}
\label{fig:aux_example}
\end{figure}

\begin{definition} [Auxiliary graph construction]
\label{def:auxconstruction}
The auxiliary graph $\auxa = (V,E')$ of the flow graph $G_s = (V,E,s)$ is constructed as follows.
Initially, $E' = E \setminus BR$, where $BR$ is the set of bridges of $G_s$.
For all bridges $(p,q)$ of $G_s$ do the following:
For each edge $(x,y) \in E$ such that $x \in D(q), y \notin D(q)$, we add the edge $(p,y)$ in $E'$, i.e., we set $E' = E' \cup (p,y)$.
\end{definition}

\begin{restatable}{lemma}{auxtime}
\label{lem:aux_time}
The auxiliary graph $\auxa$ can be computed in $\bigo(n^2)$ time and space.
\end{restatable}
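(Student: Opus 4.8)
The plan is to show that the construction in Definition~\ref{def:auxconstruction} can be carried out within the stated bounds by amortizing the work over the bridges of $G_s$. First I would compute the dominator tree $D$ of $G_s$ in $\bigo(m+n) \subseteq \bigo(n^2)$ time, using any of the linear-time algorithms cited in Section~\ref{sec:dominators}, and identify the set $BR$ of bridges of $G_s$. Initializing $E' = E \setminus BR$ costs $\bigo(m) \subseteq \bigo(n^2)$. The remaining work is the loop over bridges $(p,q) \in BR$: for each such bridge we must enumerate all edges $(x,y) \in E$ with $x \in D(q)$ and $y \notin D(q)$, and add $(p,y)$ to $E'$.

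The key observation making this efficient is that the sets $D(q)$, ranging over the heads $q$ of bridges of $G_s$, are nested (they are subtrees of the dominator tree rooted at distinct vertices), so each edge $(x,y)$ of $G$ can be "charged" by a bridge $(p,q)$ only when $x \in D(q)$ and $y \notin D(q)$. Rather than scanning all of $E$ for each bridge, I would process the trees of the bridge decomposition $\mathcal{D}$ of $D$ in bottom-up order, maintaining for each bridge head $r$ a set $R[r]$ of the vertices that are heads of edges leaving $D(r)$ — exactly the set computed in the \KwauxiliaryGraph{} routine of Algorithm~\ref{alg:reachability-scc}. Concretely, $R[r]$ is obtained by taking the union of the $R[r']$ of the child components, adding the heads of all edges emanating from $V(T_r)$ itself, and then removing $D(r)$ (the vertices that have become internal). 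Since $|R[r]| \le n$ and there are at most $2n-2$ bridges, the total size of all the $R[r]$ is $\bigo(n^2)$, and the blue edges $(p,y)$, $y \in R[r]$, inserted into $H$ number $\bigo(n^2)$ in total; this is where the $\bigo(n^2)$ space bound comes from (as well as the $\bigo(n^2)$ time to emit them).

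For the time bound I would argue as follows. Computing $D[r] = V(T_r) \cup \bigcup \{ D[r'] : d(r') \in V(T_r)\}$ over all trees $T_r$, if done as a disjoint accumulation along the bridge decomposition, touches each vertex $\bigo(1)$ times per ancestor component, which is $\bigo(n^2)$ total; the set-difference $R[r] \setminus D[r]$ costs $\bigo(n)$ per component and hence $\bigo(n^2)$ overall. Scanning the edges out of $V(T_r)$ for each component $T_r$ visits each edge of $G$ once, for $\bigo(m)$ total, and unioning their heads into $R[r]$ is $\bigo(m+n^2)$. Finally, inserting the blue edges is $\bigo(n^2)$ as noted, and filling the witness array $W$ assigns $W[x] = (p,r)$ once per vertex $x$, which is $\bigo(n)$. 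Summing up, every phase is $\bigo(n^2)$, giving the claimed bound. The main subtlety — the part I would be most careful about — is justifying that the nested/bottom-up maintenance of $R[r]$ and $D[r]$ really does produce exactly the edges required by Definition~\ref{def:auxconstruction} (i.e.\ that $y \in R[r]$ at the moment we emit $(p,y)$ iff there is an edge $(x,y) \in E$ with $x \in D(r)$ and $y \notin D(r)$), since this is where the correctness of the amortization meets the correctness of the construction; everything else is routine bookkeeping.
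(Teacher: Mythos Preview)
Your proposal is correct and follows essentially the same approach as the paper: both process the trees of the bridge decomposition bottom-up, maintain for each root $r$ the set $R(r)$ of heads of edges leaving $D(r)$ by unioning the children's sets, adding heads of edges out of $V(T_r)$, and subtracting $D(r)$, then emit the auxiliary edges $(d(r),y)$ for $y\in R(r)$; the paper additionally remarks that representing these sets as bitmasks gives the $\bigo(n^2)$ bound for the unions. One small imprecision: the number of bridges of a single flow graph $G_s$ is at most $n-1$ (the $2n-2$ figure is for strong bridges, combining $G_s$ and $G_s^R$), but this does not affect your asymptotic argument.
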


\begin{proof}
For each root $r$ of a tree $T_r \in \mathcal{D}$, we maintain a set  $R(r) \subseteq V$, initially set to $\emptyset$.
The value $R(r)$
 contains all such endpoints $y$ of edge $(x,y)$ such that $x \in D(r)$ and $y \notin D(r)$.
We process the trees of the bridge decomposition in a bottom-up order of their roots.
For each root $r$ that we visit we compute $R(r)$ in the following tree steps.
First, for each bridge $(p,q)$ of $G_s$ such that $p \in T_r$, we update $R(r)$ by setting $R(r) = R(r)  \cup R(q)$. Second, for every edge $(x,y)$ such that $x\in T_r$ we insert $y$ into $R(r)$.
Finally, we remove all $D(r)$, that is $R(r) = R(r) \setminus D(r)$.
We execute this final step since we are only interested whether there is an edge $(x,y)$ such that $x \in D(r)$ and $y \notin D(r)$.
Clearly, after these steps the set $R(r)$ contains only the desired endpoints.

Note that we actually wish to insert edges to $d(r)$ for each root $r$ of a tree on the bridge decomposition.
Therefore, after computing for each root $r$ its set $R(r)$, we insert to $\auxa$ an edge $(d(r),y)$ for every $y \in R(r)$ (notice that the outgoing edges of $d(r)$ in $G$, except $(d(r),r)$, are also outgoing edges of $d(r)$ in $\auxa$).
Overall, by representing sets as bitmasks, all the $R(r)$ sets can be computed in $\bigo(n^2)$ time.
We spend $\bigo(n^2)$ time in the third step, since we visit each vertex at most one.
Since we traverse every edge only once, the second step takes $\bigo(n+m)$ in total.
The bound follows.
\end{proof}

\begin{restatable}{lemma}{nothingintotheslice}
\label{lem:nothing_into_the_slice}
For all $w \in V$, no edge $(x,y) \in E(\auxa)$ exists with $x \notin D(r_w)$ and $y \in D(r_w)$.
\end{restatable}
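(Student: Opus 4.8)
The plan is to show that $\auxa$ respects the "slice" structure given by the bridge decomposition $\mathcal{D}$ of $D$: for each vertex $w$, the set $D(r_w)$ of descendants of the root $r_w$ of the tree $T_w\in\mathcal{D}$ is never entered by an edge of $\auxa$ whose tail lies outside $D(r_w)$. I would first recall that $\auxa$ has two kinds of edges: original edges of $G$ that are not bridges of $G_s$, and newly added edges of the form $(d(r),y)$ coming from some bridge $(d(r),r)$ together with an edge $(x,y)\in E$ with $x\in D(r)$, $y\notin D(r)$. So it suffices to handle these two types separately, and in each case show that no such edge can have its tail outside $D(r_w)$ and its head inside $D(r_w)$.

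For the original (non-bridge) edges: suppose $(x,y)\in E$, $x\notin D(r_w)$ and $y\in D(r_w)$. Since $y\in D(r_w)$, the vertex $r_w$ dominates $y$ in $G_s$; as $x\notin D(r_w)$, the edge $(x,y)$ gives a path from $s$ to $y$ — namely, take any $s$-to-$x$ path (which avoids $D(r_w)$ only up to the last moment it leaves it, but more carefully: extend a path from $s$ to $x$ inside $V\setminus D(r_w)$, which exists because $x$ is reachable from $s$ and the first vertex of $D(r_w)$ on any such path would be $r_w$ by the dominator property) followed by the edge $(x,y)$. This exhibits a path from $s$ to $y$ whose only vertex in $D(r_w)$ is $y$, so it does not pass through $r_w$ unless $y=r_w$. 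If $y=r_w$, then $(x,y)=(x,r_w)$ with $x=d(r_w)$ forced (since $r_w$ is the root of a tree in the bridge decomposition, the edge $(d(r_w),r_w)$ is a bridge, so it is deleted from $\auxa$); and if $x\neq d(r_w)$ we get a second path from $s$ to $r_w$, contradicting that $(d(r_w),r_w)$ is a bridge of $G_s$. Either way we reach a contradiction, so no such original edge survives in $\auxa$.

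For the added edges $(d(r),y)$: these have tail $d(r)$. If this edge entered $D(r_w)$ with $d(r)\notin D(r_w)$ and $y\in D(r_w)$, then again $r_w$ dominates $y$. The edge $(d(r),y)$ was added because of an original edge $(x,y)$ with $x\in D(r)$ and $y\notin D(r)$ — wait, but we also assumed $y\in D(r_w)$; these are compatible only if $D(r)$ and $D(r_w)$ are suitably nested or disjoint. I would use the fact that $y\notin D(r)$ together with $y\in D(r_w)$ to pin down the relationship between $T_r$ and $T_w$ in the forest $\mathcal{D}$: since $x\in D(r)\ni\!\!\!\!/\,y$ and $y\in D(r_w)$, the root $r_w$ is on the tree-path from the root of $D$ down towards $y$ but is not below $r$ — one then argues $d(r)$ is a descendant of $r_w$ or $d(r)=d(r_w)$, both of which contradict $d(r)\notin D(r_w)$, or else one obtains an $s$-to-$y$ path avoiding $r_w$ directly (route $s$ to $d(r)$ outside $D(r_w)$, then the edge to $y$). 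The cleanest route is: $d(r)\notin D(r_w)$ means $r_w$ does not dominate $d(r)$, so there is an $s$-to-$d(r)$ path avoiding $D(r_w)$ (hence avoiding $r_w$), and appending $(d(r),y)$ yields an $s$-to-$y$ path avoiding $r_w$, contradicting that $r_w$ dominates $y$ (again the boundary case $y=r_w$ is excluded, since $y\in R(r_w)$ would mean $y\notin D(r_w)$, or one checks the construction never adds an edge with head $r_w$ from outside).

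The main obstacle I anticipate is the bookkeeping in the added-edge case: carefully verifying, using Lemma~\ref{lemma:partition-paths} and the nesting structure of the bridge decomposition, that the head $y$ of a newly added edge is never $r_w$ for the relevant $w$, and that "$d(r)\notin D(r_w)$" genuinely implies the existence of an $s$-to-$d(r)$ path avoiding $D(r_w)$ — this last point is exactly where the dominator-tree characterization ("$u\in D(z)$ iff $z$ is an ancestor of $u$ in $D$") and the fact that $D(r_w)$ is a union of subtrees is used. Once that path-avoidance fact is in hand, both cases collapse to the same one-line contradiction with the domination of $y$ by $r_w$.
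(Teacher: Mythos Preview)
Your two-case split and your treatment of the original (non-bridge) edges are correct and match the paper's approach. The gap is in the added-edge case, specifically in your ``cleanest route''. You build an $s$-to-$d(r)$ path in $G$ avoiding $D(r_w)$ (this part is fine), and then you \emph{append the auxiliary edge $(d(r),y)$} to obtain an $s$-to-$y$ path avoiding $r_w$. But $(d(r),y)$ is an edge of $\auxa$, not of $G$, so the resulting path is not a $G$-path and therefore does not contradict the fact that $r_w$ dominates $y$ in $G_s$. The obstacles you flag in your last paragraph (the $y=r_w$ boundary and the existence of the $s$-to-$d(r)$ path avoiding $D(r_w)$) are not the real issue; the real issue is that you are mixing $\auxa$-edges into a $G$-path.

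The paper's fix, which also fits naturally into the tree-structure argument you sketched first, is to go back to the \emph{original} $G$-edge $(z,y)$ with $z\in D(r)$ that caused the auxiliary edge $(d(r),y)$ to be inserted. From $y\in D(r_w)$ and $y\notin D(r)$ you already observed that $r_w$ cannot be a descendant of $r$; and if $r_w$ were a proper ancestor of $r$ then $d(r)\in D(r_w)$, contradicting $d(r)\notin D(r_w)$. Hence $r$ and $r_w$ are incomparable in $D$, so $D(r)\cap D(r_w)=\emptyset$, and in particular $z\notin D(r_w)$. Now $(z,y)\in E(G)$ with $z\notin D(r_w)$ and $y\in D(r_w)$, and Lemma~\ref{lemma:partition-paths} (equivalently, your case-1 argument) gives the contradiction directly. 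In short: route through the original $G$-edge, not the auxiliary one.
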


\begin{proof}
Assume, by contradiction, that there is an edge $(x,y) \in E(\auxa)$ such that $x \notin D(r_w)$ and $y \in D(r_w)$.
Since $(d(r_w),r_w)$ is a strong bridge in $G_s$, $(x,y)$ does not exist in $G$ (by Lemma~\ref{lemma:partition-paths}).
Hence, by construction, there is an edge $(z,y)\in E(G)$ where $z \in D(x)\setminus x$ and $y \notin D(x)$.
Therefore, $x$ cannot be an ancestor of $w$ in $D$, which implies $z \notin D(r_w)$ and $z\not = d(r_w)$ since $D(z) \cap D(r_w) = \emptyset$.
This is a contradiction, since $(z,y)$, where $z \notin D(r_w)$ and $y\in D(r_w)$, cannot exist in $G$ by Lemma~\ref{lemma:partition-paths}.
\end{proof}

To show the correctness of our approach, we consider queries where we are given an ordered pair of vertices $(u,v)$, and we wish to return whether there exists an edge $e$ such that $\nreach{u}{v}$ in ${G \setminus e}$.
We can answer this query in constant time by answering the queries $\reach{u}{v}$ in $\auxa$ and $\reach{v}{u}$ in $\auxb$.
Given Lemma~\ref{lem:EuOrEv}, it is sufficient to prove the following:

\begin{lemma}
\label{lem:scqueries} The auxiliary graphs $\auxa$ and $\auxb$ satisfy these two conditions:
\begin{itemize}
	\item If $e_v$ exists, then $\reach{u}{v}$ in ${G \setminus e_v}$ if and only if $\reach{u}{v}$ in $\auxa$.
	\item If $e_u$ exists, then $\reach{u}{v}$ in ${G \setminus e_u}$ if and only if $\reach{v}{u}$ in $\auxb$.
\end{itemize}
\end{lemma}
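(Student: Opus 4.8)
The plan is to prove the two bullet points; by symmetry it suffices to establish the first one, that $\reach{u}{v}$ in $G\setminus e_v$ if and only if $\reach{u}{v}$ in $\auxa$, where $e_v=(d(r_v),r_v)$ is the bridge of $G_s$ entering the tree $T_v$ of the bridge decomposition of $D$. Throughout I would freely use Lemma~\ref{lemma:partition-paths}, which describes exactly how paths in $G$ interact with strong bridges of the form $(d(y),y)$, and Lemma~\ref{lem:nothing_into_the_slice}, which says that in $\auxa$ there is no edge entering the "slice" $D(r_w)$ from outside, for any $w$ — this will be the workhorse for pinning down where a path in $\auxa$ must come from.

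\emph{($\Leftarrow$) From $\auxa$-reachability to $G\setminus e_v$-reachability.} Suppose $\reach{u}{v}$ in $\auxa$, and take a path $\pi$ from $u$ to $v$ in $\auxa$. The idea is to "unfold" $\pi$ into a path in $G\setminus e_v$: every edge of $\pi$ is either an original edge of $G$ (and not a bridge of $G_s$, hence in particular not $e_v$), or a newly added edge $(p,y)$ that was inserted because of some bridge $(p,q)$ and some edge $(x,y)\in E$ with $x\in D(q)$, $y\notin D(q)$. In the latter case, by Lemma~\ref{lemma:partition-paths}(a) there is a path in $G$ from $p$ to $x$ that avoids all proper descendants of $q$ in $D$, hence avoids $q$ itself unless $p=d(q)$ trivially routes in — more carefully, $p=d(q)$ and there is a path $p\leadsto x$ inside $G\setminus e_v$ (we must argue it avoids $e_v$: since $v\in D(r_v)$, and the added edge lands at $y\notin D(q)$, I expect to show the unfolding never needs to re-enter $D(r_v)$ through $e_v$, using Lemma~\ref{lem:nothing_into_the_slice} to see that $\pi$ enters $D(r_v)$ at most once and only at $r_v$, i.e. essentially at the very end). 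Concatenating these $G$-subpaths with the original edges of $\pi$ yields a walk from $u$ to $v$ in $G\setminus e_v$, which contains a path.

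\emph{($\Rightarrow$) From $G\setminus e_v$-reachability to $\auxa$-reachability.} Conversely, suppose there is a path $P$ from $u$ to $v$ in $G\setminus e_v$; I would take $P$ simple. Walk along $P$: each edge of $P$ is either in $\auxa$ already, or it was deleted, i.e. it is a bridge $(p,q)$ of $G_s$ with $q\neq r_v$ (it cannot be $e_v$). Here is the key: if $P$ uses a bridge $(p,q)$, then by Lemma~\ref{lemma:partition-paths}(a), once $P$ is inside $D(q)$ it can only leave $D(q)$ via an edge $(x,y)$ with $x\in D(q)$, $y\notin D(q)$ — and $(d(q),y)=(p,y)$ is precisely an edge we added to $\auxa$. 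So I replace the segment of $P$ from $p$ through the $D(q)$-excursion up to $y$ by the single $\auxa$-edge $(p,y)$; this is legitimate because $p$ is the immediate dominator $d(q)$ of the subtree entered. Iterating (innermost bridges first, or by induction on the number of deleted edges used), every bridge-crossing gets shortcut, producing a path from $u$ to $v$ in $\auxa$. One subtlety to handle: the target $v$ may itself lie strictly inside $D(r_v)$ below $r_v$; but edges inside a single tree $T_{r_v}$ of the bridge decomposition are never bridges, so the final descent from $r_v$ down to $v$ uses only surviving edges, and Lemma~\ref{lem:nothing_into_the_slice} guarantees the constructed $\auxa$-path enters $D(r_v)$ cleanly at $r_v$.

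The main obstacle I anticipate is the careful bookkeeping in the $(\Rightarrow)$ direction when the path $P$ nests bridge-crossings — entering $D(q_1)$, then a deeper $D(q_2)\subseteq D(q_1)$, and so on — and in making the shortcutting argument an honest induction rather than hand-waving. The right formalization is probably: induct on the number of bridges of $G_s$ that $P$ traverses, and at each step pick the \emph{last} bridge $(p,q)$ crossed by $P$ (equivalently, look at the last time $P$ enters a new slice); apply Lemma~\ref{lemma:partition-paths}(a) to that crossing to get that $P$ leaves $D(q)$ via some added edge $(p,y)$, splice, and recurse on the shorter prefix. Getting the "does it avoid $e_v$" side-conditions exactly right in the $(\Leftarrow)$ direction — i.e. that none of the $G$-subpaths we glue in secretly use $e_v$ — is the other place where Lemma~\ref{lem:nothing_into_the_slice} will have to be invoked with some care.
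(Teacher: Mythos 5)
Your overall strategy is the same as the paper's: both directions are proved by path surgery, shortcutting bridge--excursions into auxiliary edges for the direction from $G\setminus e_v$ to $\auxa$, and expanding auxiliary edges back into real subpaths for the converse. Your $(\Rightarrow)$ direction matches the paper's almost exactly (the paper picks the \emph{first} bridge $e^*=(x^*,y^*)$ on $P$ and the \emph{first} vertex $z^*$ after it that leaves $D(y^*)$, notes that $z^*$ exists because $v\notin D(y^*)$, splices in the auxiliary edge $(x^*,z^*)$, and iterates; your ``last bridge first'' induction works just as well). The one point you gesture at but should make explicit there is \emph{why} every excursion into some $D(q)$ terminates before reaching $v$: it is because $v$ lies in the tree $T_{r_v}$ of the bridge decomposition, so no bridge head strictly below $r_v$ dominates $v$.

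The genuine gap is in your $(\Leftarrow)$ direction, in the step ``the unfolded walk avoids $e_v$.'' First, the invariant you propose --- that ``$\pi$ enters $D(r_v)$ at most once and only at $r_v$, essentially at the very end'' --- is not what Lemma~\ref{lem:nothing_into_the_slice} gives and is not how it is used: since \emph{no} edge of $\auxa$ enters $D(r_v)$, a path of $\auxa$ ending at $v$ can never enter $D(r_v)$ from outside, so it must lie \emph{entirely} inside $\auxa[D(r_v)]$ (in particular $u\in D(r_v)$, otherwise there is no path at all). Second, Lemma~\ref{lemma:partition-paths}(a) does not supply the subpath you need: it produces paths from vertices \emph{outside} $D(q)$ \emph{to the tail} $p=d(q)$ of the bridge, not a path from $p$ \emph{into} $D(q)$ down to the tail $x$ of the original edge $(x,y)$. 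The existence of a $p\leadsto x$ path, and crucially one contained in $G[D(r_v)]$ (which is what guarantees it avoids $e_v$, since the tail $d(r_v)$ of $e_v$ lies outside $D(r_v)$), requires a separate dominator argument: every path from $s$ to $x$ passes through $p$ because $q$ dominates $x$ and $(p,q)$ is a bridge, so a suffix of such a path gives $p\leadsto x$; and if every such path left $D(r_v)$ one could build an $s$--$x$ path avoiding $p$, a contradiction. Without that argument your reconstruction could, in principle, route through $e_v$, and the direction does not close.
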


We prove the lemma in two separate steps, one for each direction of the two equivalences.

\begin{lemma}
If $\reach{u}{v}$ in $\auxa$ then $\reach{u}{v}$ in ${G \setminus e_v}$ (and if $\reach{v}{u}$ in $\auxb$ then $\reach{u}{v}$ in ${G \setminus e_u}$).
\end{lemma}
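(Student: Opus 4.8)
The plan is to show that every edge of the auxiliary graph $\auxa$ can be ``simulated'' by a path in ${G \setminus e_v}$ (and symmetrically for $\auxb$ and ${G\setminus e_u}$), so that an entire $u$-$v$ walk in $\auxa$ unfolds into a genuine walk in ${G\setminus e_v}$. There are two kinds of edges in $\auxa$: the original edges of $G$ that were not bridges of $G_s$, and the newly inserted ``shortcut'' edges $(p,y)$ that were added because some bridge $(p,q)$ of $G_s$ had an edge $(x,y)\in E$ with $x\in D(q)$, $y\notin D(q)$. For an original edge $(a,b)\in E(\auxa)\cap E$ there is nothing to prove unless $(a,b)=e_v$; I would argue that $e_v=(d(r_v),r_v)$ is a bridge of $G_s$, hence was removed from $\auxa$ in the construction, so no original edge of $\auxa$ equals $e_v$, and an original edge of $\auxa$ is trivially a path of length one in ${G\setminus e_v}$.

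The substantive step is the shortcut edges. Given a shortcut $(p,y)\in E(\auxa)$ coming from bridge $(p,q)$ and original edge $(x,y)$ with $x\in D(q)$, $y\notin D(q)$, I want a path from $p$ to $y$ in ${G\setminus e_v}$. Since $x=d(q)$'s child-subtree vertex lies in $D(q)$, I apply Lemma~\ref{lemma:partition-paths}(a) with the strong bridge $(p,q)=(d(q),q)$: taking $w=p$, which is not a descendant of $q$ in $D$, there is a path from $p$ to $x$ in $G$ that avoids all proper descendants of $q$ in $D$ — in particular it does not use the edge $(p,q)$ (which enters $q\in D(q)$) and, crucially, it stays outside $D(q)$. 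Concatenating this path with the single edge $(x,y)$ gives a $p$-$y$ walk in $G$. I must now check this walk avoids $e_v=(d(r_v),r_v)$. The key observation is that $e_v$ enters $D(r_v)$, so the only way the walk could use $e_v$ is to pass through $r_v$, i.e. to enter $D(r_v)$ from outside; but by Lemma~\ref{lem:nothing_into_the_slice} (applied in $G$ rather than $\auxa$, or rather by the same reasoning via Lemma~\ref{lemma:partition-paths}) the only $G$-edges entering $D(r_v)$ from outside is $e_v$ itself, and I will have arranged that the constructed $p$-$x$ path together with $(x,y)$ either stays entirely within $D(r_v)$ or entirely outside it relative to the relevant bridge, so it cannot cross $e_v$. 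Putting the per-edge simulations together by concatenation, any $u$-$v$ walk in $\auxa$ becomes a $u$-$v$ walk in ${G\setminus e_v}$, hence $\reach{u}{v}$ in ${G\setminus e_v}$; the $\auxb$ statement follows by the symmetric argument on $G^R$.

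The main obstacle I anticipate is bookkeeping the ``slice'' structure carefully: the shortcut edge $(p,y)$ bridges a gap in the bridge-decomposition tree $D$, and I need the replacement path in $G$ to respect the query-specific bridge $e_v$, which is a different bridge. The cleanest way to handle this is to case on the position of the query vertices relative to the bridge $(p,q)$ defining the shortcut: either $r_v$ lies strictly inside $D(q)$, strictly outside $D(q)$, or equals $q$ (or an ancestor situation). In each case one checks that the Lemma~\ref{lemma:partition-paths}(a) path from $p$ to $x$ — which by construction never leaves the complement of $D(q)$ until the final step — cannot traverse $e_v$, because $e_v$'s head $r_v$ is reached from outside $D(r_v)$ only via $e_v$ and the constructed path's only excursion is the last edge $(x,y)$ with $x\in D(q)$. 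I would isolate this as a short sublemma about edges of $G$ crossing bridge-decomposition slices and then the concatenation argument is routine.
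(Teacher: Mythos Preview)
Your proposal has a genuine error in the application of Lemma~\ref{lemma:partition-paths}(a). You write that, for the bridge $(p,q)$ and $x\in D(q)$, taking $w=p$ yields ``a path from $p$ to $x$ in $G$ that avoids all proper descendants of $q$\ldots and stays outside $D(q)$''. But the lemma says the opposite: for $w\notin D(q)$ it gives a path from $w$ \emph{to $p$} (the bridge tail) avoiding proper descendants of $q$, and its second clause asserts that \emph{every} simple path from such a $w$ to any vertex of $D(q)$ must use the bridge $(p,q)$. Since $x\in D(q)$ (and typically $x\neq q$, so $x$ is itself a proper descendant of $q$), a $p$--$x$ path avoiding $D(q)$ simply does not exist. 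Your per-edge simulation therefore breaks down at exactly the step you rely on.

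The paper's proof fixes this by reversing the logic in two ways. First, it uses Lemma~\ref{lem:nothing_into_the_slice} globally: since no edge of $\auxa$ enters $D(r_v)$, any $u$--$v$ path in $\auxa$ lies entirely inside $D(r_v)$; in particular every shortcut edge $(x^*,y^*)$ on the path has $x^*\in D(r_v)$. This single observation dissolves the case analysis you flag as the ``main obstacle''. Second, for such a shortcut (with witnessing original edge $(x^o,y^*)$, $x^o\in D(q)$), the replacement path \emph{does} pass through $D(q)$: one argues that there is a path from $x^*$ to $x^o$ inside $G[D(r_v)]$, and then appends $(x^o,y^*)$. The bridge $(x^*,q)$ may well be used here---that is harmless, because we are building a path in $G\setminus e_v$, not in $\auxa$, and $(x^*,q)\neq e_v$ since $x^*\in D(r_v)$ while $d(r_v)\notin D(r_v)$. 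Staying inside $D(r_v)$ is what guarantees $e_v$ is avoided, not staying outside $D(q)$.
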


\begin{proof}
By Lemma~\ref{lem:nothing_into_the_slice}, there are no edges incoming into $D(r_v)$ in $\auxa$.
So for a path $P$ from $u$ to $v$ in $\auxa$ to exist, $u$ must lie in $D(r_v)$ and $P$ must lie within $\auxa[D(r_v)]$.
Clearly, if $P$ contains only edges from $E(G[D(r_v)])$, then $P$ is also a valid path from $u$ to $v$ in $G[D(r_v)]$ and thus also in ${G\setminus e_v}$ (recall that $e_v = (d(r_v),r_v)$) and we are done.

Otherwise, we iteratively substitute auxiliary edges of $P$, with paths in $G\setminus e_v$, so that in the end, $P$ is fully contained within ${G \setminus e_v}$.
Let $e^* = (x^*, y^*)$ be the first edge of $P$ such that $e^* \notin E(G[D(r_v)])$, i.e. an auxiliary edge of $\auxa$.
By $e^o = (x^o, y^o)$, we denote the original edge for which we inserted $e^*$ into $\auxa$. Then $x^o \in {D(x^*)\setminus x^*}$ and $y^* = y^o$.
Since all paths from $s$ to $x^o$ contain $x^*$, and all simple paths from $x^*$ to $x^o$ avoid vertices from $V\setminus D(r_v)$ (otherwise, if all paths contained such a vertex $w$ then $s$ would have a path to $x^o$ in $G$ avoiding $x^*$ by Lemma~\ref{lemma:partition-paths}), it follows that $x^*$ has a path $P_{x^*x^o}$ to $x^o$ in $G[D(r_v)]$.
If we now replace $e^*$ in $P$ by $P_{x^*x^o} \cdot e^o$, then $P$ contains a path from $u$ to $y^*$ containing only edges in ${G \setminus e_v}$.
We repeat this argument as long as $P$ contains auxiliary edges and get a path from $u$ to $v$ in ${G \setminus e_v}$.

The statement for $\auxb$ and ${G \setminus e_u}$ can be shown with completely analogous arguments. \qedhere
\end{proof}

\begin{lemma}
If $\reach{u}{v}$ in ${G \setminus e_v}$ then $\reach{u}{v}$ in $\auxa$ (and if $\reach{u}{v}$ in ${G \setminus e_u}$ then $\reach{v}{u}$ in $\auxb$).
\end{lemma}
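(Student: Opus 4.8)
The plan is to prove the contrapositive-free direction directly: assume there is a path $P$ from $u$ to $v$ in ${G \setminus e_v}$, and construct from it a path from $u$ to $v$ in $\auxa$. The key structural fact driving the argument is Lemma~\ref{lemma:partition-paths}: since $e_v = (d(r_v), r_v)$ is the bridge entering the tree $T_v$, every vertex reachable from $u$ without using $e_v$, when $u \in D(r_v)$, must itself lie in $D(r_v)$; and if $u \notin D(r_v)$, then every path from $u$ to $v$ in $G$ goes through $e_v$, so ${\reach{u}{v}}$ in ${G\setminus e_v}$ already forces $u \in D(r_v)$. Hence $P$ lies entirely inside $G[D(r_v)] \setminus e_v$. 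Since $e_v$ is the unique bridge of $G_s$ entering $T_v$ and is the only bridge removed that has its head in $D(r_v)$, the only edges of $G[D(r_v)]$ that fail to survive in $\auxa$ are bridges $(p,q)$ of $G_s$ with $p \in D(r_v)$, $q \in D(r_v)$; and for each such bridge the construction of $\auxa$ has provided replacement edges $(p, y)$ for every original edge $(x,y)$ leaving $D(q)$.

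First I would make precise which edges of $P$ are missing from $\auxa$. Each such edge is a bridge $(p,q)$ of $G_s$ with both endpoints in $D(r_v)$ (it cannot be $e_v$ because $P$ avoids $e_v$). Consider the last such bridge $(p,q)$ along $P$, splitting $P = P_1 \cdot (p,q) \cdot P_2$, where $P_2$ contains no missing edges and thus already lies in $\auxa[D(r_v)]$. The suffix $P_2$ starts at $q$, so it eventually leaves $D(q)$ (since $v \in D(r_v)$ but, as $(p,q)$ is a bridge on $P$ with $q \ne v$, $v \notin D(q)$ — or $v = q$ is impossible because then $e_v$ would equal the bridge entering the tree of $q$). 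Let $(x,y)$ be the first edge of $P_2$ with $x \in D(q)$, $y \notin D(q)$. By Definition~\ref{def:auxconstruction} the edge $(p, y)$ is in $\auxa$. The prefix of $P_2$ from $q$ to $x$ lies inside $G[D(q)]$, but we do not need it; we simply replace $P_1 \cdot (p,q) \cdot (q \leadsto x) \cdot (x,y)$ by $P_1 \cdot (p, y)$, obtaining a shorter walk from $u$ to $v$ with one fewer missing edge. Here I must check that $P_1$ is unaffected: $P_1$ ends at $p$, and deleting the bridge $(p,q)$ together with everything inside $D(q)$ up to $x$ removes only vertices in $D(q)$, which are disjoint from $P_1$ because $P$ is a path and $(p,q)$ is a bridge (so $P_1$ cannot re-enter $D(q)$). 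I would then also argue the resulting object is still a path, or — more safely — just argue it is a walk and extract a path, since reachability is all we need.

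Iterating this replacement strictly decreases the number of missing edges, so after finitely many steps we reach a walk from $u$ to $v$ using only edges of $\auxa$, hence $\reach{u}{v}$ in $\auxa$. The statement for $\auxb$ and ${G \setminus e_u}$ follows by the same argument applied to $G^R$, using that $e_u = (r_u^R, d^R(r_u^R))$ is the reverse bridge entering $T_u^R$ and that a path from $u$ to $v$ in ${G \setminus e_u}$ corresponds to a path from $v$ to $u$ in $G^R \setminus e_u^R$. I expect the main obstacle to be the bookkeeping around which bridge to peel first and verifying that the surgery leaves a genuine path rather than an arbitrary walk — specifically, ruling out that the replacement edge $(p,y)$ reintroduces a vertex already visited by $P_1$. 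The cleanest fix is to phase the argument at the level of walks/reachability and invoke Lemma~\ref{lemma:partition-paths} to confine everything to $D(r_v)$ and to guarantee $x^*$-to-$x^o$ connectivity inside the slice, exactly as in the reverse direction already proved above.
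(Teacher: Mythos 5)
Your proof is correct and follows essentially the same strategy as the paper's: confine $P$ to $G[D(r_v)]$ via Lemma~\ref{lemma:partition-paths}, then iteratively splice out each bridge of $G_s$ on $P$ together with the subpath up to the first exit from the corresponding dominated subtree, replacing it by the auxiliary edge guaranteed by Definition~\ref{def:auxconstruction}. The only (harmless) differences are that you peel bridges from the last occurrence rather than the first and that you explicitly downgrade to walks to sidestep the simplicity bookkeeping, which the paper glosses over.
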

\begin{proof}
By Lemma~\ref{lemma:partition-paths}~(a), $e_v$ is the only edge in $G$ entering $D(r_v)$.
So any path $P$ from $u$ to $v$ in ${G \setminus e_v}$ can only use edges in $E(G[D(r_v)])$.
If $P$ only contains edges in $H$, we are done.
Otherwise, let $e^* = (x^*, y^*)$ be the first edge of $P$ that is in $E(G[D(r_v)])$ but not in $\auxa$, hence $e^*$ is a bridge of $G_s$.
Let $z^*$ be the first vertex on $P$ after $e^*$ that is not a descendant of $y^*$ in $D$.
Such a vertex $z^*$ exists since $P$ ends at $v$ but $v$ is not a descendant of $y^*$ (recall that $e^*$ is a bridge and lies within $G[D(r_v)])$.
Thus, we can replace the subpath of $P$ between $e^*$ and $z^*$ (including $e^*$) by the edge $(x^*, z^*)$ which is an auxiliary edge of $\auxa$, by the definition of $\auxa$.
We repeat this argument as long as $P$ contains bridges of $G_s$ and get a path from $u$ to $v$ in $\auxa$.

The statement for ${G \setminus e_u}$ and $\auxb$ can be shown with completely analogous arguments. \qedhere
\end{proof}

\section{All-pairs $2$-reachability in general graphs}
\label{sec:generalgraphs}

In this section, we show how to compute
the $2$-reachability of a general digraph by suitably combining the previous algorithms for DAGs and for strongly connected digraphs.
First, note that the $2$-reachability closure of a strongly connected graph $G$ can be constructed as follows: $\closure{G}[i,j] = \top$ if $i$ has two edge-disjoint paths to $j$ and $\closure{G}[i,j] \in E$ if there is an edge $e \in E$ such that $\nreach{i}{j}$ in $G\setminus e$. No entry of $\closure{G}$ contains $\bot$ since $G$ is strongly connected.
After $\bigo(n^\omega)$ time preprocessing all the above queries can be answered in constant time.
Therefore, the $2$-reachability closure can be computed in $\bigo(n^\omega)$ time.

Let $G$ be a general digraph.
The condensation of $G$ is the DAG resulting after the contraction of every strongly connected component of $G$ into a single vertex.
We assume, without loss of generality, that the vertices are ordered as follows: The vertices in the same strongly connected component of $G$ appear consecutively in an arbitrary order, and the strongly connected components are ordered with respect to the topological ordering of the condensation of $G$.
Moreover, we assume that we have access to a function $\KwStrongConnect(u,v)$ that answers whether the vertices $u$ and $v$ are strongly connected.

The key insight is that every idea presented in Section~\ref{sec:DAGs} never truly used the fact that the input graph is a DAG, just the properties of an edge split, that is finding edge partition into two sets so that no vertex has incoming edge from second set and outgoing edge from first set simultaneously. If we are able to extend the definition of an edge split to a general graph in a way highlighted above, and the definitions of $\repr(),\rrepr()$ and $\lrepr()$, then all of the results from Section~\ref{sec:DAGs} carry over to a general graph $G$. Note that given \emph{arbitrary} path family $\mathcal{P}$, $\lrepr(\mathcal{P})$ and $\rrepr(\mathcal{P})$ might be ill-defined, since paths in an arbitrary path family might not share the order of common edges. However, we are only using this notation for path families containing exactly all of paths connecting a given pair of vertices in the graph:  for such families, the order of common edges is shared.

The high-level idea behind our approach
is to extend the
$2$-reachability closure algorithm for DAGs, as follows.
At each recursive call, the algorithm attempts to find a balanced separation of the set of vertices, with respect to their fixed precomputed order, into two sets such that there is no pair across the two sets that is strongly connected.
If such a balanced separation can be found, then the instance is (roughly) equally divided into two instances.
Otherwise,
if there is no balanced separation of the set of vertices into two subsets, then one of the following properties holds: {\it (i)} the larger instance is a strongly connected component, or {\it (ii)} the recursive call on the larger instance  separates a large strongly connected component, on which we can compute the $2$-reachability closure in $\bigo(n^\omega)$ time. We provide pseudocode for this in Algorithm~\ref{alg:reachability-general}.

\begin{algorithm}[t]
	\caption{$2$-reachability closure in general graphs}
		\label{alg:reachability-general}
	\KwIn{Matrix $G$ of dimension $N \times N$, with vertices ordered w.r.t. some fixed topological order of strongly connected components}
	\KwOut{$2$-reachability closure of $G$.}
	\Def{$\KwClosure(G)$}
	{
		\uIf{$N == 1$}
		{
			\Return{$\begin{bmatrix} \top \end{bmatrix}$}\;
		}
		\ElseIf{$ \forall i,j \in \{1,\dots,N\}\ \KwStrongConnect(i,j)==\KwTrue$}
		{
			\Return{$\KwClosureSCC(G)$}\;
		}
		
		$I=\{i: 1\leq i < N,\KwStrongConnect({i},{i+1})== \KwFalse\}$\;
		$N' \gets \arg \min_{i\in I}\{|i- N/2 | \}$\;
		
    $A \gets G[1\ ..\ N'][1\ ..\ N']$\;
    $B \gets G[1\ ..\ N'][(N'+1)\ ..\ N]$, $0$ replaced with $\bot$ and $1$ with edge labels\;
    $C \gets G[(N'+1)\ ..\ N][(N'+1)\ ..\ N]$\;
    $\mathit{A'} \gets \KwRecovery(\KwClosure(A),\KwLeft)$\;
    $\mathit{C'} \gets \KwRecovery(\KwClosure(C),\KwRight)$\;
    \Return{$\begin{bmatrix} \mathit{A'} & \KwMul(\mathit{A'},\KwMul(B,\mathit{C'})) \\ 0 & \mathit{C'} \end{bmatrix} $}\;
	}
\end{algorithm}

\begin{theorem}
\label{thm:runtime2}
Algorithm~\ref{alg:reachability-general} computes $2$-reachability closure of graph on $n$ vertices in time $\bigo(n^\omega \log n)$.
\end{theorem}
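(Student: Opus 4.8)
The plan is to analyze the recursion underlying Algorithm~\ref{alg:reachability-general}. First I would argue correctness: whenever the algorithm splits $G$ at position $N'$, the split point is chosen from the set $I$ of indices $i$ with $\KwStrongConnect(i,i+1)=\KwFalse$, so the vertices $\{v_1,\dots,v_{N'}\}$ and $\{v_{N'+1},\dots,v_N\}$ contain no cross pair that is strongly connected; since strongly connected components appear consecutively and in topological order, this means there is no edge from the second block back to the first block, i.e.\ the partition $(E_A \cup E_B, \text{rest})$ induces a genuine edge split in the extended sense described in the text. Hence all the lemmas of Section~\ref{sec:DAGs} (Observation~\ref{ob:repr}, Lemmas~\ref{lem:first_edge}--\ref{lem:single_layer}, Corollaries~\ref{cor:mul1}--\ref{cor:mul2}, and in particular Lemma~\ref{lem:howtoclosure}) apply verbatim to this split, so the returned matrix $\begin{bmatrix} A' & \KwMul(A',\KwMul(B,C')) \\ 0 & C' \end{bmatrix}$ is a valid $2$-reachability closure of $G$. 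When no such split exists, either $N=1$ (trivial) or $G$ is strongly connected, in which case $\KwClosureSCC$ is invoked and correctness follows from Theorem~\ref{thm:scc}.

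For the running time I would bound the total cost by separating the ``strongly connected'' leaves of the recursion tree from the ``splitting'' internal nodes. Every call that returns via $\KwClosureSCC$ costs $\bigo(k^\omega)$ for a block of size $k$, and since these strongly connected blocks arise from disjoint vertex sets, $\sum_j k_j^\omega \le (\sum_j k_j)^\omega \le n^\omega$ by superadditivity of $t \mapsto t^\omega$ with $\omega \ge 2$. The remaining work at each node is the cost of two $\KwRecovery$ calls ($\bigo(n^2)$ by Algorithm~\ref{alg:leftclosurerecovery}) plus two calls to $\KwMul$, each of which is $\bigo(n^\omega \log n)$ by the encoding/decoding argument (the bitwords have length $\bigo(\log n)$, so one Boolean product per bit coordinate). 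The subtle point is the depth of the recursion: a ``good'' split halves the instance, but a split may be badly unbalanced because $I$ could force $N'$ far from $N/2$.

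The main obstacle — and where I would spend the most care — is precisely this potential imbalance. The key observation is that imbalance is ``paid for'' by a large strongly connected component. Concretely: if $N' \in I$ is the index minimizing $|i - N/2|$ and yet the split is unbalanced, say the larger part has size $> N/2 + \Delta$, then by choice of $N'$ there is no cut point within distance $\Delta$ of $N/2$, which means more than $\Delta$ consecutive vertices straddling the midpoint all lie in one strongly connected component $S$ with $|S| > \Delta$. When the recursion then descends into that larger part, it will at some point either isolate $S$ as a strongly connected leaf (costing $\bigo(|S|^\omega)$, charged to the disjoint budget above) or split around $S$. Formalizing this, I would show that along any root-to-leaf path, the sum of the ``excess'' imbalances is bounded by $n$, so that the recursion-tree has $\bigo(n)$ internal nodes arranged in $\bigo(\log n)$ effective levels when one contracts the strongly-connected subtrees; more simply, one sets up the recurrence $T(n) \le T(n_1) + T(n_2) + \bigo(n^\omega \log n)$ with $n_1 + n_2 = n$ and, using that any large forced imbalance spawns a strongly connected block whose $\bigo(k^\omega)$ cost is separately accounted, concludes $T(n) = \bigo(n^\omega \log n)$. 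The cleanest write-up is to define the potential $\Phi(G)$ = number of vertices not inside the current call's "large SCC obstructions", argue $\Phi$ roughly halves at each split, bound the recursion depth by $\bigo(\log n)$, and multiply by the $\bigo(n^\omega \log n)$ per-level work — while separately summing the $\bigo(k^\omega)$ SCC-leaf costs to $\bigo(n^\omega)$ via superadditivity. Putting the two contributions together yields the claimed $\bigo(n^\omega \log n)$ bound.
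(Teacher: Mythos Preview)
Your correctness argument and your accounting of the strongly connected leaves (the bound $\sum_j k_j^\omega \le n^\omega$ by superadditivity) are fine and match the paper. The gap is in the running-time analysis of the internal nodes.

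You write: ``bound the recursion depth by $\bigo(\log n)$, and multiply by the $\bigo(n^\omega \log n)$ per-level work \dots\ yields the claimed $\bigo(n^\omega \log n)$ bound.'' But depth times per-level work is $\bigo(\log n)\cdot \bigo(n^\omega \log n)=\bigo(n^\omega\log^2 n)$, not $\bigo(n^\omega\log n)$. Your potential $\Phi$ can at best certify that the recursion tree has $\bigo(\log n)$ levels; it says nothing about the level-by-level sum $\sum_i n_i^\omega$, which in the worst case is $\Theta(n^\omega)$ at every level if one branch stays large. To get the claimed bound you must show that $\sum_{\text{nodes}} N_{\text{node}}^\omega = \bigo(n^\omega)$, i.e.\ that the per-level $\omega$-th-power mass decays geometrically, and this does not follow from a depth bound alone.

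The paper closes exactly this gap by a two-case look-ahead. If the chosen split point $N'$ satisfies $N'\le \tfrac23 N$, the split is already balanced enough: $T(N)\le T(N/2)+T(2N/3)+\bigo(N^\omega\log n)$ and $(1/2)^\omega+(2/3)^\omega<1$. If instead $N'\ge \tfrac23 N$, the paper observes that the strongly connected component straddling $N/2$ must begin at some $N''\le N-N'\le N/3$ and end at $N'$; hence in the recursive call on $[1,N']$ the only available split point is (essentially) $N''$, so that call immediately peels off the SCC $[N'',N']$ at cost $\bigo(N^\omega)$ and leaves a piece of size $N''\le N/3$. Unrolling one level gives $T(N)\le T(N/3)+T(N/3)+\bigo(N^\omega\log n)$. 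In both cases the subproblem exponents sum to strictly less than $1$, so $T(N)=\bigo(N^\omega\log n)$ directly. Your proposal recognises the phenomenon (``any large forced imbalance spawns a strongly connected block'') but never converts it into a recurrence with $\sum c_i^\omega<1$; without that step you are stuck at $\bigo(n^\omega\log^2 n)$.
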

\begin{proof}
The algorithm correctness follows from the correctness of Algorithm \ref{alg:reachability} and the fact that Algorithm \ref{alg:reachability-general} separates the input matrix $G$ with dimensions $N\times N$ into  submatrices $A = G[1\ ..\ N'][1\ ..\ N']$, $B = G[1\ ..\ N'][(N'+1)\ ..\ N]$ and $C = G[(N'+1)\ ..\ N][(N'+1)\ ..\ N]$, and such that $\forall i \in [1,N'], j \in [N'+1,N] : G[j,i] = 0$ as required by Lemma \ref{lem:howtoclosure}.
Now we show that  Algorithm \ref{alg:reachability-general} has the same asymptotic running time as Algorithm \ref{alg:reachability}.

The recurrence that provides the running time is the following (we denote by $N_0$ the size of the original graph)
$$
T(N) = T(\min\{N',N-N'\}) + T(\max\{N',N-N'\}) + \bigo(N^\omega \log N_0),
$$
where $N'$ is defined as in Algorithm~\ref{alg:reachability-general}, that is, $N'$ is such $i$ that $\KwStrongConnect({i},{i-1})== \KwFalse$ or $\KwStrongConnect({i},{i+1})== \KwFalse$ and that minimizes $|i - \lfloor N/2 \rfloor |$.
Without loss of generality, assume that $N' \ge N/2$.

Denote by $N'' < N/2$ the start of the strongly connected component that ends at position $N'$. Observe that this component size satisfies $N'-N'' \ge 2(N' - N/2)$. We consider two cases, which intuitively distinguish whether the strongly connected component in the middle of the order is small or large:
\begin{enumerate}
\item If $N' \ge 2/3 N$, then since $N/2 - N'' \ge N'-N/2$ (from the fact that $N'$ is closest to $N/2$), we get $N'' \le N - N' = (N-3/2 N') + N'/2 \le N'/2$. This means that $N''$ is a splitting point in a recursive call on range $[0,N']$, and we get bound
$$T(N) = T(N'') + \bigo((N'-N'')^{\omega}) + T(N-N') + \bigo(N^{\omega} \log N_0) \le T(1/3 N) + T(1/3 N) + \bigo(N^{\omega} \log N_0),$$
where we have used that our claimed runtime bound $T()$ is nondecreasing, so we can use bounds $N'' \le 1/3 N$ and $N-N' \le 1/3 N$.
\item If $N' \le 2/3 N$, then by the fact that $N-N' \le 1/2 N$
$$T(N) = T(N') + T(N-N') + \bigo(N^{\omega} \log N_0) \le T(N/2) + T(2/3 N) + \bigo(N^{\omega} \log N_0).$$
\end{enumerate}

It is easy to see that $T(N) = C \cdot N^{\omega} \log N_0$ satisfies both of the recursive bounds (since $\omega \ge 2$), given large enough constant $C$.

Thus plugging $N_0 = N$ for the total running time yields the claimed bound.
\end{proof}

\section{Matching lower bounds}
\label{sec:lowerbounds}
A simple construction shows that any $2$-reachability oracle for strongly connected graphs can also be used as a reachability oracle for any graph.
Let $G$ be a DAG.
We create a graph $\widehat{G} = (\widehat{V}, \widehat{E})$ from $G$ as follows: we add two new vertices $s$ and $t$ together with the edge $(s,t)$, and for each vertex $v\in V(G)$ we add the edges $(v,s)$ and $(t,v)$.
Clearly, $\widehat{G}$ is strongly connected since we added paths from each vertex $u$ to any other vertex $v$, namely the path $\langle u,s,t,v \rangle$.
All the new paths between vertices in $V(G)$ contain the edge $(s,t)$.
Therefore, a vertex $u$ has two edge-disjoint paths to $v$ in $\widehat{G}$, where $u,v \in V(G)$, if and only if $u$ has a path to $v$ in $G$.

Additionally, for general graphs, there cannot be a significantly faster all-pairs $2$-reachability algorithm (than by a logarithmic factor), as our construction can produce all dominator trees, which by definition encode the necessary information to answer reachability queries in constant time. As computing reachability is asymptotically equivalent to matrix multiplication \cite{fischer1971boolean}, there is no hope to solve all-pairs $2$-reachability in $o(n^\omega)$.

\section{Extension to vertex-disjoint paths}
\label{sec:vertexdisjoint}
Our approach can be modified so that it reports the existence of two vertex-disjoint (rather than edge-disjoint) paths for any pair of vertices.
Although we can formulate the algorithms of Sections \ref{sec:DAGs} and \ref{sec:SCC} so that they use separating vertices (rather than separating edges),
here we sketch how to obtain the same result via a standard reduction, which uses vertex-splitting.
The details of the reduction are as follows. From the original digraph $G=(V,E)$, we compute a modified digraph
$\widehat{G}=(\widehat{V}, \widehat{E})$ by replacing each vertex $v \in V$ by two vertices $v^{+}, v^{-} \in \widehat{V}$, together with the edge $(v^{-}, v^{+}) \in \widehat{E}$,
and replacing each edge $(u,v) \in E$ by $(u^{+}, v^{-}) \in \widehat{E}$. (Thus, $v^{+}$ has the edges leaving $v$, and $v^{-}$ has the edges entering $v$.)
Then, for any pair of vertices $u,v \in V$, $G$ contains two vertex-disjoint paths from $u$ to $v$ if and only if $\widehat{G}$ contains two edge-disjoint paths from $u^{+}$ to $v^{-}$.
Suppose that we apply our algorithm to $\widehat{G}$. Let $u,v$ be any vertices in $G$. If $v$ is reachable from $u$ in $G$ but all paths from $u$ to $v$ in $G$ contain a common vertex,
then the algorithm  reports a separating edge $e \in \widehat{E}$ for the vertices $u^{+}$ and $v^{-}$. If $e = (x^{-}, x^{+})$, then $x$ is a separating vertex for all paths from $u$ to $v$ in $G$.
Otherwise, if $e = (x^{+}, y^{-})$, then $(x,y)$ is a separating edge for all paths from $u$ to $v$ in $G$ and so both $x$ and $y$ are separating vertices.

\section{An application: computing all dominator trees}
\label{sec:applications}
Let $s$ be an arbitrary vertex of $G$.
Recall the bridge decomposition $\mathcal{D}$ of $D$ (Section~\ref{sec:dominators}), which is the forest obtained from $D$ after deleting the bridges of flow graph $G$ with start vertex $s$.
As noted earlier, $T_v$ is the tree in $\mathcal{D}$ that contains vertex $v$, and $r_v$ denotes the root of $T_v$.

We define the \emph{edge-dominator tree} $\widetilde{D}$ of $G$ with start vertex $s$, as the tree that results from $D$ after contracting all vertices in each tree $T_v$ into its root $r_v$. For any vertex $v$ and edge $e=(x,y)$,
$e$ is contained in all paths in $G$ from $s$ to $v$ if and only if $(r_x, r_y)$ is in the path from $s$ to $r_v$ in $\widetilde{D}$.
We denote by $\widetilde{d}(y)$ the parent of a vertex in $\widetilde{D}$. (Both $y$ and $\widetilde{d}(y)$ are roots in $\mathcal{D}$.)

\begin{theorem}
\label{thm:edge_dominator_construction_time}
All sources vertex- and edge-dominator trees can be computed from $\closure{G}_R$ in $\bigo(n^2)$ total time.
\end{theorem}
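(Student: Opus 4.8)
The plan is to observe that, for each fixed source vertex $s$, the single row $\closure{G}_R[s,\cdot]$ already encodes the entire edge-dominator tree $\widetilde{D}$ of $G$ with start vertex $s$, and that this row can be decoded in $\bigo(n)$ time; summing over the $n$ choices of $s$ then yields the $\bigo(n^2)$ bound. The analogous statement for the vertex-dominator trees will follow by the same reasoning applied to the separating-vertex variant of the right closure.

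The heart of the argument is a structural fact about the entry $\closure{G}_R[s,v]$. Fix $s$ and a vertex $v$ with $\reach{s}{v}$. As recalled in Sections~\ref{sec:intro} and~\ref{sec:dominators}, the separating edges of the pair $(s,v)$ are exactly the bridges of the flow graph $G_s$ lying on the $s$-to-$v$ path of the dominator tree $D$; every such bridge $(x,y)$ is a tree edge of $D$, i.e.\ $x=d(y)$, and since $T_v$ contains no bridge strictly below $r_v$, the \emph{deepest} of them---i.e.\ the last separating edge of $(s,v)$, which is precisely what $\closure{G}_R$ records---is the bridge $(d(r_v),r_v)$ entering the tree $T_v$ of the bridge decomposition $\mathcal{D}$ (here I would invoke Lemma~\ref{lemma:partition-paths} together with the recalled facts from Section~\ref{sec:dominators}). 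Consequently: $\closure{G}_R[s,v]=\bot$ iff $\nreach{s}{v}$; $\closure{G}_R[s,v]=\top$ iff $\tedp{s}{v}$, equivalently $r_v=s$; and otherwise $\closure{G}_R[s,v]=(d(r_v),r_v)$, so its head equals the block root $r_v$ and its tail equals $d(r_v)$. I expect this identification to be the one step requiring genuine care; everything past it is linear-time bookkeeping.

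Given this, the $\bigo(n)$-time procedure for a fixed $s$ is as follows. A first scan of row $s$ assigns to every reachable $v$ its block root: $r_v=s$ when the entry is $\top$, and $r_v$ is the head of the entry when the entry is an edge. This already exposes the vertex set $\{r_v\}$ of $\widetilde{D}$ as well as the map $v\mapsto r_v$. For the parent pointers, note that a root $r\neq s$ satisfies $r_r=r$, so $\closure{G}_R[s,r]$ is an edge of the form $(d(r),r)$, and the parent of $r$ in $\widetilde{D}$ is the block root of $d(r)$, namely $\widetilde{d}(r)=r_{d(r)}$, a value already computed in the first scan. Hence a second scan fixes all parents. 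Thus $\widetilde{D}$, together with the reachability and block-membership data needed to use it (via the characterization stated just before the theorem), is produced in $\bigo(n)$ time per source and $\bigo(n^2)$ time over all $n$ sources.

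For the vertex-dominator trees the argument is even shorter, carried out on the variant of the right closure whose entries are separating \emph{vertices} (this is what our algorithm yields via the vertex-splitting reduction of Section~\ref{sec:vertexdisjoint}, or directly from the separating-vertex formulation of the algorithms mentioned there). Here the last separating vertex of $(s,v)$ is exactly the immediate dominator $d(v)$ of $v$ in $G_s$ whenever $v\neq s$ is reachable with $\ntvdp{s}{v}$; a $\top$ entry with $v\neq s$ means $d(v)=s$, and a $\bot$ entry means $v$ is unreachable from $s$. So a single scan of row $s$ reads off all immediate dominators, i.e.\ the whole dominator tree rooted at $s$, in $\bigo(n)$ time, hence $\bigo(n^2)$ over all sources.
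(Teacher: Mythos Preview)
Your proposal is correct and follows essentially the same approach as the paper: read off $r_v$ (and the map $v\mapsto r_v$) from the head of the right-closure entry in a first linear scan, then in a second scan set $\widetilde d(r)=r_{d(r)}$ using the tail of the entry at $r$; the paper phrases this as ``temporarily assign $\widetilde d(y)=x$, then replace by $r_x$,'' which is exactly your two passes. Your treatment of the vertex-dominator case (reading $d(v)$ directly as the last separating vertex, via the vertex-splitting variant of Section~\ref{sec:vertexdisjoint}) is likewise what the paper has in mind when it says ``apply the standard trick of splitting each vertex.''
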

\begin{proof}
To construct the edge-dominator tree $\widetilde{D}$ with start vertex $s$, we look at the entries $\closure{G}_R[s,v]$, for all vertices $v$.
We have the following cases:
(a) If $\closure{G}_R[s,v] = \bot$ then $v$ is not reachable from $s$ and we set $r_v = \KwUnset$.
(b) If $\closure{G}_R[s,v] = \top$ then we set $r_v = s$.
(c) If $\closure{G}_R[s,v] = (x,y)$ then $(x,y)$ is the last common edge in all paths from $s$ to $v$. We mark $y$, set $r_v = y$, and
temporarily assign $\widetilde{d}(y) = x$ (note that $r_x$ may be unknown at this point).
After we have processed the entries $\closure{G}_R[s, v]$ for all $v$, we make another pass over the marked vertices.
Let $y$ be a marked vertex, for which we temporarily assigned $\widetilde{d}(y) = x$. Then we set $\widetilde{d}(y) = r_x$.
This completes the construction of $\widetilde{D}$, which clearly takes $\bigo(n)$ time.
By repeating this procedure for each vertex in $V$ as a start vertex, we can compute all the edge-dominator trees, each rooted at a different vertex, in total $\bigo(n^2)$ time.
The construction of $D$ is very similar, we apply the standard trick of splitting each vertex into an in- and out-vertex.
\end{proof}

Theorem \ref{thm:edge_dominator_construction_time} enables us to obtain the following results.

\paragraph{Reachability queries after the deletion of an edge or vertex.}
We preprocess each edge-dominator tree $\widetilde{D}$ in $\bigo(n)$ so as to
answer ancestor-descendant relations in constant time~\cite{domin:tarjan}.
We also compute in $\bigo(n)$ time the number of descendants in $D$ of every root $r$ in $\mathcal{D}$.
This allows us to answer various queries very efficiently:
\begin{itemize}
\item Given a pair of vertices $s$ and $t$ and an edge $e=(x,y)$, we can test if $G \setminus e$ contains a path from $s$ to $t$ in constant time.
This is because $e$ is contained in all paths from $s$ to $t$ in $G$ if and only if the following conditions hold:
$e$ is a bridge of flow graph $G$ with start vertex $s$ (i.e., $r_y = y$ and $\widetilde{d}(y)=r_x$) and $y$ is an ancestor of $r_t$ in $\widetilde{D}$.
\item Similarly, given a vertex $s$ and an edge $e=(x,y)$, we can report in constant time how many vertices become unreachable from $s$ if we delete $e$ from $G$.
If $e$ is a bridge of flow graph $G$ with start vertex $s$, then this number is equal to the number of descendants of $y$ in $D$.
\end{itemize}

By computing all vertex-dominator trees of $G$, we can answer analogous queries for vertex-separators.

\paragraph{Computing junctions.}
A vertex $s$ is a \emph{junction} of vertices $u$ and $v$ in $G$, if $G$ contains a path from $s$ to $u$ and a path from $s$ to $v$ that are
vertex-disjoint (i.e., $s$ is the only vertex in common in these paths).  Yuster~\cite{yuster2008all} gave a $\bigo(n^\omega)$ algorithm to compute a single junction for every pair of vertices in a DAG.
By having all dominator trees of a digraph $G$, we can also answer the following queries.
\begin{itemize}
\item Given vertices $s$, $u$ and $v$, test if $s$ is a junction of $u$ and $v$. This is true if and only if $u$ and $v$ are descendants of distinct children of $s$ in $D$.
Hence, we can perform this test in constant time.
\item Similarly, we can report all junctions of a given a pair of vertices in $\bigo(n)$ time. Note that two vertices may have $n$ junctions (e.g., in a complete graph).
\end{itemize}

\paragraph{Computing critical nodes and critical edges.}
Let $G$ be a directed graph.
Define the \emph{reachability function} $f(G)$ as the number of vertex pairs $\langle u, v \rangle$ such that $u$ reaches $v$ in $G$, i.e., there is a directed path from $u$ to $v$.
Here we consider how to compute the \emph{most critical node} (resp., \emph{most critical edge}) of $G$, which is the vertex $v$ (resp., edge $e$) that minimizes
$f(G \setminus v)$ (resp., $f(G \setminus e)$). This problem was considered by Boldi et al.~\cite{Boldi2013} who provided an empirical study of the effectiveness of various heuristics.
A related problem, where we wish to find the vertex $v$ (resp., edge $e$) that minimizes the pairwise strong connectivity of $G \setminus v$ (resp., $G \setminus e$), i.e., $\sum_i {|C_i| \choose 2}$, where $C_i$ are the strongly connected components of $G \setminus v$ (resp., $G \setminus e$), can be solved in linear time ~\cite{GIP17:SODA,PGI17}.

The naive solution to find the most critical node of $G$ is to calculate the transitive closure matrix of $G \setminus v$, for all vertices $v$, and choose the vertex $v$ that
minimizes the number of nonzero elements. This takes $O(n^{\omega + 1})$ time. Similarly, we can
compute the most critical edge in $O(m n^{\omega})$ time. Here we provide faster algorithms that exploit the computation of all dominator trees.
We let $G_u$ and $D_u$ denote, respectively, the flow graph with start vertex $u$ and its dominator tree. Also, we denote by $D_u(v)$ the subtree of $D_u$ rooted at vertex $v$.

\noindent \emph{Computing the most critical node.} Observe that $f(G \setminus v) = \sum_u ( |D_u| - |D_u(v)|)$, since for each vertex $u$, the vertices that become unreachable from $u$ after deleting $v$ are exactly the descendants of $v$ in $D_u$. Hence, we can process all dominator trees in $O(n^2)$ time and compute $|D_u(v)|$ for all vertices $u, v$.
Then, it is straightforward to compute $f(G \setminus v)$, for a single vertex $v$, in $O(n)$ time. Thus, we obtain an algorithm that computes the most critical node of $G$ in
$O(n^{\omega} \log{n})$ total time.

\noindent \emph{Computing the most critical edge.} Almost the same idea works for computing the most critical edge of $G$. Here, we observe that $v$ becomes unreachable from $u$ in $G \setminus e$ if and only if $e=(x,y)$ is a bridge of $G_u$ and $v \in D_u(y)$.
To exploit this observation, we store for each vertex $y$ a list $L_y$ of pairs $\langle u, x \rangle$ such that $(x,y)$ is a bridge in $G_u$. Note that $\langle u, x \rangle \in L_y$ implies that $x$ is the parent of $y$ in $D_u$. Thus, each list $L_y$ has at most $n-1$ pairs.
Now, for each edge $e=(x,y)$, we have $f(G \setminus e) = \sum_{u : \langle u, x \rangle \in L_y} ( |D_u| - |D_u(y)|)$.
So, it is straightforward to compute $f(G \setminus e)$, for a single edge $e$, in $O(n)$ time. To compute $f(G \setminus e)$ for all edges $e$, observe that is suffices to
process only the pairs in all the $L_y$ lists. Specifically, for each edge $(x,y)$ we maintain a count $\mathit{unreach}(x,y)$, initially set to zero.
When we process a pair $\langle u, x \rangle \in L_y$, we increment $\mathit{unreach}(x,y)$ by $|D_u| - |D_u(y)|$.
Clearly, after processing all pairs, the most critical edge is the one with maximum $\mathit{unreach}$ value. Since there are $O(n^2)$ pairs overall in all lists $L_y$,
we obtain an algorithm that computes the most critical edge of $G$ in
$O(n^{\omega} \log{n})$ total time.

\section{Concluding remarks}
\label{sec:conclusion}

In this paper we have shown that the all-pairs $2$-reachability problem can be solved in $\bigo(n^{\omega}\log n)$ time.
Our algorithm produces a witness (separating edge or separating vertex) for every pair of vertices for which $2$-reachability does not hold.
An important  corollary of this result is that we can compute all dominator trees of a digraph within the same time bound. Our work raises some new, and perhaps intriguing, questions.
First, since our algorithms can be used to answer queries on whether there exists a path from vertex $u$ to vertex $v$ avoiding an edge $e$, can we extend our approach to reporting avoiding paths within the same
$\bigo(n^{\omega}\log n)$ (or any sub-cubic) running time?
 Another interesting question is whether one can compute all-pairs $k$-reachability (or equivalently, the existence of $(k-1)$-cuts) in $\widetilde\bigo(n^{\omega})$ time, or even in sub-cubic time, for
$k\geq 3$. It does not seem easy to extend our techniques in this case.

\paragraph{Acknowledgments.}
We would like to thank Paolo Penna and Peter Widmayer for many useful discussions on the problem.

\bibliography{ltg}

\end{document}